\newif\ifcomments 
\newtheorem{theorem}{Theorem}
\newtheorem{lemma}{Lemma}
\newtheorem{claim}{Claim}
\newtheorem*{lemma*}{Lemma}
\newtheorem{fact}{Fact}
\newtheorem{proposition}{Proposition}
\newtheorem{observation}{Observation}
\newtheorem{definition}{Definition}
\newcommand{\adversary}{\mathcal{A}}
\newcommand{\ketbra}[2]{\ket{#1}\!\bra{#2}}
\renewcommand{\cal}[1]{\mathcal{#1}}
\newcommand{\N}{\mathbb{N}}
\newcommand{\Tr}{\mathrm{Tr}}
\newcommand{\reg}[1]{\mathsf{#1}}
\newcommand{\id}{I}
\DeclareMathOperator{\TD}{TD}
\newcommand{\TraceDist}[2]{{\TD\left(#1, #2\right)}}
\newcommand{\td}{\TD}
\newcommand{\norm}[1]{{\left\lVert#1\right\rVert}}
\newcommand{\VerCircuit}{\mathsf{V}}
\newcommand{\VerHat}{\widehat{\mathsf{V}}}
\newcommand{\syn}{\mathsf{Syn}}
\newcommand{\length}[1]{{\lvert {#1} \rvert}}
\newcommand{\keygen}{\mathsf{KeyGen}}
\newcommand{\mint}{\mathsf{Mint}}
\newcommand{\verify}{\mathsf{Ver}}
\newcommand{\oracle}{{\mathcal O}}
\newcommand{\qmalgs}{\left(\keygen^{\oracle},\mint^{\oracle},\verify^{\oracle}\right)}
\newcommand{\accept}{{\sf Accept}}
\newcommand{\reject}{{\sf Reject}}
\newcommand{\negl}{\mathsf{negl}}
\newcommand{\prob}{\mathsf{Pr}}
\newcommand{\bfx}{{\bf x}}
\newcommand{\oqm}{oracle-aided public key quantum money scheme }
\newcommand{\ro}{{\mathcal R}}
\newcommand{\pspace}{\mathsf{PSPACE}}
\newcommand{\statePSPACE}{{\mathsf{statePSPACE}}}
\newcommand{\good}{{\sf Good}}
\newcommand{\simulation}{\mathsf{Sim}}
\newcommand{\register}[1]{{\mathsf{#1}}}
\newcommand{\comp}{\mathsf{Comp}}
\newcommand{\decomp}{\mathsf{Decomp}}
\newcommand{\update}{\mathsf{Upd}}
\newcommand{\compV}[1]{{\widetilde{#1}}}
\newcommand{\decompV}[1]{{#1}}
\newcommand{\test}{\mathsf{Test}}
\title{On the (Im)plausibility of Public-Key Quantum Money\\ from Collision-Resistant Hash Functions}
\author{}
\author{Prabhanjan Ananth\thanks{\texttt{prabhanjan@cs.ucsb.edu}}\\ {\small UCSB} \and Zihan Hu\thanks{\texttt{huzh19@mails.tsinghua.edu.cn}}\\ {\small Tsinghua University} \and Henry Yuen\thanks{\texttt{hyuen@cs.columbia.edu}} \\ {\small Columbia University}} 
\date{}
\begin{document}

\maketitle

\begin{abstract}
\noindent Public-key quantum money is a cryptographic proposal for using highly entangled quantum states as currency that is publicly verifiable yet resistant to counterfeiting due to the laws of physics. Despite significant interest, constructing provably-secure public-key quantum money schemes based on standard cryptographic assumptions has remained an elusive goal. Even proposing plausibly-secure candidate schemes has been a challenge.

These difficulties call for a deeper and systematic study of the structure of public-key quantum money schemes and the assumptions they can be based on. Motivated by this, we present the first black-box separation of quantum money and cryptographic primitives. Specifically, we show that collision-resistant hash functions cannot be used as a black-box to construct public-key quantum money schemes where the banknote verification makes classical queries to the hash function. Our result involves a novel combination of state synthesis techniques from quantum complexity theory and simulation techniques, including Zhandry's compressed oracle technique. 

\end{abstract}

\section{Introduction}
\noindent Unclonable cryptography is an emerging area in quantum cryptography that leverages the no-cloning principle of quantum mechanics~\cite{WZ82,Dieks82} to achieve cryptographic primitives that are classically impossible. Over the years, many interesting unclonable primitives have been proposed and studied. These include quantum copy-protection~\cite{aaronson2009pkQM}, one-time programs~\cite{BGS13}, secure software leasing~\cite{ALP21}, unclonable encryption~\cite{BL20}, encryption with certified deletion~\cite{BI20}, encryption with unclonable decryption keys~\cite{GZ20,coladangelo2021hidden}, and tokenized signatures~\cite{BS16}. 
\par One of the oldest and (arguably) the most popular unclonable primitives is quantum money, which was first introduced in a seminal work by Wiesner~\cite{wiesner1983conjugate}. A quantum money scheme enables a bank to issue digital money represented as quantum states. Informally, the security guarantee states that it is computationally infeasible to produce counterfeit digital money states. That is, a malicious user, given one money state, cannot produce two money states that are both accepted by a pre-defined verification procedure. There are two notions we can consider here. The first notion is {\em private-key} quantum money, where the verification procedure is private. That is, in order to check whether a money state is valid, we need to submit the state to the bank which decides its validity. A more useful notion is {\em public-key} quantum money, where anyone can verify the validity of money states. While private-key money schemes have been extensively studied and numerous constructions, including information-theoretic ones, have been proposed, the same cannot be said for public-key quantum money schemes. 
\par Aaronson and Christiano~\cite{AC2013HiddenSubspace} first demonstrated the feasibility of information-theoretically secure public-key quantum money in the oracle model; meaning that all algorithms in the scheme (e.g., the minting and verification algorithms) query a black box oracle during their execution. 
In the standard (i.e., non-oracle) model, there are two types of constructions known for building quantum money: 
\begin{itemize}
    \item In the first category, we have constructions borrowing sophisticated tools from different areas of mathematics, such as knot theory~\cite{FGHLS12}, quaternion algebras~\cite{KSS21} and lattices~\cite{zhandry2021quantumlightning,KJS22}. The constructions in this category have been susceptible to cryptanalytic attacks as demonstrated by a couple of recent works~\cite{Rob21,bilyk2022cryptanalysis}. We are still in the nascent stages of understanding the security of these candidates.  
    \item In the second category, we have constructions based on well-studied (or perhaps \emph{better}-studied) cryptographic primitives. In this category, we have constructions~\cite{zhandry2021quantumlightning,Shm22,Shm22b} based on indistinguishability obfuscation, first initiated by Zhandry~\cite{zhandry2021quantumlightning}.   
\end{itemize}
\noindent We focus on the second category. Constructions from existing primitives, especially from those that can be based on well-studied  assumptions, would position public-key quantum money on firmer foundations. Unfortunately, existing constructions of indistinguishability obfuscation are either post-quantum {\bf in}secure~\cite{AJLMS21,JLS21,JLS22b} or are based on newly introduced cryptographic assumptions~\cite{GP21,BDGM21,WW21,DQVWW21} that have been subjected to cryptanalytic attacks~\cite{HJL21}. 
\par The goal of our work is to understand the feasibility of constructing public-key quantum money from fundamental and well-studied cryptographic primitives. We approach this direction via the lens of black-box separations. Black-box separations have been extensively studied in classical cryptography~\cite{Rud91,Simon98,GKMRV00,RTV04,BM09,DLMM11,GKLM12,BDV17}. We say that a primitive $A$ cannot be constructed from another primitive $B$ in a black-box manner if there exists a computational world (defined by an oracle) where $B$ exists but $A$ does not. Phrased another way, these separations rule out constructions of primitive $A$ where primitive $B$ is used in a black-box manner.
In this case, we say that there is a black-box separation between $A$ and $B$. Black-box separations have been essential in understanding the relationship between different cryptographic primitives. Perhaps surprisingly, they have also served as a guiding light in designing cryptographic constructions. One such example is the setting of identity-based encryption (IBE). A couple of works~\cite{BPRVW08,PRV12} demonstrated the difficulty of constructing IBE from the decisional Diffie Hellman (DDH) assumption using a black-box construction which prompted the work of~\cite{DG17} who used non-black-box techniques to construct IBE from DDH.

\subsection{Our Work}
\paragraph{Black-Box Separations for Unclonable Cryptography.} We initiate the study of black-box separations in unclonable cryptography. In this work, we study a black-box separation between public-key quantum money and (post-quantum secure) collision-resistant hash functions. To the best of our knowledge, our work takes the first step in ruling out certain approaches to constructing public-key quantum money from well-studied assumptions.

\paragraph{Model.} We first discuss the model in which we prove the black-box separation. We consider two oracles with the first being a random oracle ${\cal R}$ (i.e., a uniformly random function) and the second being a ${\sf PSPACE}$ oracle (i.e., one that can solve ${\sf PSPACE}$-complete problems). We investigate the feasibility of quantum money schemes and collision-resistant hash functions in the presence of ${\cal R}$ and ${\sf PSPACE}$. That is, all the algorithms of the quantum money schemes and also the adversarial entities are given access to the oracles ${\cal R}$ and ${\sf PSPACE}$. 

\par There are two ways we can model a quantum algorithm to have access to an oracle. The first is {\em classical access}, where the algorithms in the quantum money scheme can only make classical queries to the oracle; that is, each query to the oracle is measured in the computational basis before forwarding it to the oracle. If an algorithm ${\sf A}$ has classical access to an oracle, say ${\cal U}$, we denote this by ${\sf A}^{{\cal U}}$. The second is {\em quantum access}, where the algorithms can make superposition queries. That is, an algorithm can submit a state of the form $\sum_{x,y} \alpha_{x,y} \ket{x}\ket{y}$ to the oracle ${\cal O}$ and it receives back $\sum_{x,y} \alpha_{x,y} \ket{x}\ket{{\cal O}(x) \oplus y}$. If an algorithm ${\sf A}$ has quantum access to an oracle ${{\cal U}}$, we denote this by ${\sf A}^{\ket{{\cal U}}}$. 

\par Our ultimate goal is to obtain black-box separations in the quantum access model, where the algorithms in the quantum money scheme can query oracles in superposition. However, there are two major obstacles to achieving this. 

First, analyzing the quantum access model in quantum cryptography has been notoriously challenging. For example, it is not yet known how to generalize to the quantum access setting black-box separations between key agreement protocols -- a \emph{classical} cryptographic primitive -- and one-way functions~\cite{IR}. Attempts to tackle special cases have already encountered significant barriers~\cite{ACCFLM22}, and has connections to long-standing conjectures in quantum query complexity (like the Aaronson-Ambainis conjecture~\cite{aaronson2009need}).

Second, we have to contend with the difficulty that quantum money is an \emph{inherently quantum} cryptographic primitive. A black-box separation requires designing an adversary that can effectively clone a quantum banknote given a \emph{single} copy of it. Here one encounters problems of a uniquely quantum nature, such as the No-Cloning Theorem~\cite{WZ82,Dieks82} and the fact that measuring the banknote will in general disturb it. 

We present partial progress towards the ultimate goal stated above by simplifying the problem and focusing exclusively on this second obstacle: we prove black-box separations where the banknote verification algorithm in the quantum money schemes makes \emph{classical} queries to the random oracle $\cal{R}$ (but still can make quantum queries to the $\pspace$ oracle), and the minting algorithm may still make quantum queries to both $\cal{R}$ and $\pspace$ oracles.  As we will see, even this special case of quantum money schemes is already challenging and nontrivial to analyze. 
We believe that our techniques may ultimately be extendable to the general setting (if there indeed exists a black-box impossibility in the general setting!), where all algorithms can make quantum queries to all oracles, and furthermore help prove black-box separations of other quantum cryptographic primitives. 

\paragraph{Main Theorem.} We will state our theorem more formally. A quantum money scheme consists of three quantum polynomial-time (QPT) algorithms, namely $(\keygen,\mint,\verify)$, where $\keygen$ produces a public key-secret key pair, $\mint$ uses the secret key to produce money states and a serial number associated with money states and finally, $\verify$ determines the validity of money states using the public key. We consider {\em oracle-aided} quantum money schemes, where these algorithms have access to a random oracle ${\cal R}$ and a ${\sf PSPACE}$ oracle, defined above.

\begin{theorem}[Informal]
\label{thm:intro:main}
Any public-key quantum money scheme $( \keygen^{\ket{{\cal R}},\ket{{\sf PSPACE}}},\mint^{\ket{{\cal R}},\ket{{\sf PSPACE}}},\allowbreak \verify^{{\cal R},\ket{{\sf PSPACE}}})$ is insecure. 
\end{theorem}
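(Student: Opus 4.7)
The plan is to construct an adversary $\adversary^{\ket{\mathcal{R}},\ket{\pspace}}$ that, upon receiving a single banknote $(pk, s, \ket{\psi_s})$, outputs two states $(\rho_1, \rho_2)$ both accepted by $\verify^{\mathcal{R}, \ket{\pspace}}(pk, s, \cdot)$. I set $\rho_1 := \ketbra{\psi_s}{\psi_s}$, which verifies by the correctness of the scheme. The remaining task is to synthesize a second valid banknote $\rho_2$.

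The key object is the canonical banknote state for the pair $(pk, s)$, defined as
\[
\rho_{pk,s} \;:=\; \mathbb{E}\!\left[\ketbra{\phi}{\phi} \;\Big|\; (pk',sk')\leftarrow\keygen^{\ket{\mathcal{R}},\ket{\pspace}},\; (s',\ket{\phi})\leftarrow\mint^{\ket{\mathcal{R}},\ket{\pspace}}(sk'),\; pk'=pk,\; s'=s\right],
\]
the expected banknote produced by a fresh honest execution conditioned on its classical outputs matching the challenge. Since $\rho_{pk,s}$ is a convex combination of states in the support of honest minting for $(pk,s)$, correctness of the scheme implies that $\verify^{\mathcal{R}, \ket{\pspace}}(pk, s, \rho_{pk,s})$ accepts with overwhelming probability. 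So it suffices to show that $\adversary$ can efficiently prepare $\rho_{pk,s}$.

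For this I combine state synthesis with Zhandry's compressed oracle. Naively conditioning on $(pk', s') = (pk, s)$ fails because the event may have exponentially small probability and amplitude amplification alone would require exponentially many queries to $\mathcal{R}$. Instead, I simulate $\keygen$ and $\mint$ internally using the compressed oracle, which replaces quantum access to $\mathcal{R}$ with coherent bookkeeping over a polynomial-sized database. The joint state at the end of minting is a pure state of polynomial description on (output registers) $\otimes$ (compressed database); projecting the classical output registers onto $(pk, s)$ and tracing out the database yields a state whose amplitudes are computable in $\pspace$, which by the state synthesis theorem lies in $\statepspace^{\ket{\pspace}}$ and can be prepared by $\adversary$ \emph{without} additional queries to the real oracle $\mathcal{R}$.

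The main obstacle will be bridging from the compressed oracle simulation back to the real oracle: one must show that the state $\rho_2$ synthesized ``in the compressed picture'' is actually accepted by $\verify$ when run against the true $\mathcal{R}$. This is precisely where the classical-query restriction on $\verify$ becomes essential. Because $\verify$ queries $\mathcal{R}$ only classically, its query transcript can be treated as a sequence of values to be \emph{programmed} into the compressed database, with each programming step incurring only negligible disturbance by standard compressed-oracle reprogramming lemmas. A hybrid argument passing from the real oracle to the compressed simulation via a sequence of intermediate experiments should then show that the joint distribution of (synthesized banknote, verification transcript, verification outcome) is statistically close under both models, yielding acceptance of $\rho_2$ against the real $\mathcal{R}$. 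Extending the argument to verifiers making quantum queries would break down, since quantum queries can detect the entanglement between the banknote and the database---explaining why the fully quantum-query case is left open.
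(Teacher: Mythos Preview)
Your approach has a fundamental gap in the synthesis of $\rho_2$. The state $\rho_{pk,s}$ you \emph{define} is the expected banknote from an honest run with the \emph{real} oracle $\mathcal{R}$ conditioned on outputting $(pk,s)$, and that state would indeed verify. But what your compressed-oracle simulation actually produces is
\[
\tilde\rho_{pk,s} \;=\; \mathbb{E}_{\mathcal{R}'}\Bigl[\,\ketbra{\phi}{\phi} \;\Bigm|\; \keygen^{\ket{\mathcal{R}'}},\mint^{\ket{\mathcal{R}'}} \text{ output } (pk,s)\,\Bigr],
\]
the average over a \emph{fresh, independent} random function $\mathcal{R}'$. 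After tracing out the compressed database, $\tilde\rho_{pk,s}$ carries no information whatsoever about the true $\mathcal{R}$, yet the challenger runs $\verify$ against that true $\mathcal{R}$. A concrete counterexample: let $\mint$ output a uniformly random serial $s$ together with the banknote $\mathcal{R}(s)\in\{0,1\}^n$, and let $\verify$ classically query $\mathcal{R}(s)$ and test equality. Then $\tilde\rho_{pk,s}$ is maximally mixed on $n$ bits and is accepted with probability $2^{-n}$, whereas the honestly-defined $\rho_{pk,s}$ is the point mass on $\mathcal{R}(s)$.

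The ``reprogramming'' bridge you sketch cannot repair this. The compressed database is internal to your adversary and is traced out before you hand $\rho_2$ to the challenger; verification then runs against the real $\mathcal{R}$ and you have no opportunity to program anything. Even if you kept the database register as a purifying system, its entries at the positions $\verify$ queries are values of the \emph{simulated} oracle, which with overwhelming probability disagree with $\mathcal{R}$ there; reprogramming lemmas give small disturbance only at positions touched with small amplitude, the opposite of what holds at positions heavy in the $\keygen/\mint$ database. In short, your adversary never learns a single value of the real $\mathcal{R}$, so it cannot output a state correlated with it.

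The paper takes an essentially different route. Instead of re-simulating $\keygen/\mint$, the adversary \emph{learns} the relevant real-oracle values by repeatedly running $\verify^{\mathcal{R}}$---first on the honest banknote (test phase), then on successive synthesized candidates (update phase)---and recording the classical query--answer pairs into a database $D$ of \emph{true} $\mathcal{R}$ values. It then uses the $\pspace$ state-synthesis machinery to produce a witness for the fully classical-description circuit $\verify^{D,\ket{\pspace}}$. The key observation is that whenever such a witness fails real verification, $\verify$ must have issued a ``bad'' query (one relevant to $\keygen/\mint$ but absent from $D$), which is then absorbed into $D$; since there are only polynomially many such positions (made precise via the compressed-oracle analysis when $\keygen,\mint$ query quantumly), after polynomially many rounds the synthesized state passes $\verify^{\mathcal{R}}$. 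This interaction with the real oracle through $\verify$'s classical queries is exactly the ingredient your proposal is missing.
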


\noindent By insecurity, we mean the following. There exists a quantum polynomial-time (QPT) adversary ${\cal A}$ such that ${\cal A}^{{\cal R},\ket{{\sf PSPACE}}}$, given a money state $({\sf pk},\rho_s,s)$, where ${\sf pk}$ is the public key and $s$ is a serial number, with non-negligible probability, can produce two (possibly entangled) states that both pass the verification checks with respect to the same serial number $s$. The probability is taken over the randomness of ${\cal R}$ and also over the randomness of $\keygen,\mint$ and ${\cal A}$. We note that only $\keygen$ and $\mint$ can have quantum access to ${\cal R}$, while $\verify$ only has classical access. On the other hand, we show that the adversary ${\cal A}$ only needs classical access to ${\cal R}$. 
\par Furthermore, we note that the random oracle $\cal{R}$ constitutes a collision-resistant hash function against QPT adversaries that can make queries to $(\cal{R},\ket{\pspace})$~\cite{zhandry2015note}. We note that $\cal{R}$ still remains collision-resistant even when the adversaries can make \emph{quantum} queries to $\cal{R}$, not just classical ones.  

\paragraph{Implications.}  Our main result rules out a class of public-key quantum money constructions that (a) base their security on collision-resistant hash functions, (b) use the hash functions in a black-box way, and (c) where the verification algorithm makes classical queries to the hash function. Clearly, it would be desirable to generalize the result to the case where the verification algorithm can make quantum queries to the hash function. However, there are some conceptual challenges to going beyond classical verification queries (which we discuss in more detail in~\Cref{sec:qqueries:verify}). 

The class of quantum money schemes in this hybrid classical-quantum query model is quite interesting on its own and a well-motivated setting. For example, in Zhandry's public-key quantum money scheme~\cite{zhandry2021quantumlightning}, the mint procedure only needs classical access to the underlying cryptographic primitives (when the component that uses cryptographic primitives is viewed as a black-box) while the verification procedure makes quantum queries. In the constructions of copy-protection due to Coladangelo et al.~\cite{coladangelo2021hidden,CMP20}, the copy-protection algorithm only makes classical queries to the cryptographic primitives in the case of~\cite{coladangelo2021hidden} and the random oracle in the case of~\cite{CMP20} whereas the evaluation algorithm in both constructions make quantum queries. Finally, in the construction of unclonable encryption in~\cite{AKLLZ22}, all the algorithms only make classical queries to the random oracle. Given these constructions, we believe it is important to understand what is feasible or impossible for unclonable cryptosystems in the hybrid classical-quantum query model.

Secondly, we believe that the hybrid classical-quantum query model is a useful testbed for developing techniques needed for black-box separations, and for gaining insight into the structure of unclonable cryptographic primitives. Even in this special case, there are a number of technical and conceptual challenges to overcome in order to get our black-box separation of \Cref{thm:intro:main}. We believe that the techniques developed in this paper will be a useful starting point for future work in black-box separations in unclonable cryptography.

\paragraph{Other Separations.} As a corollary of our main result, we obtain black-box separations between public-key quantum money and many other well-studied cryptographic primitives such as one-way functions, private-key encryption and digital signatures.
\par Our result also gives a separation between public-key quantum money and collapsing hash functions in the same setting as above; that is, when $\verify$ makes classical queries to ${\cal R}$. This follows from a result due to Unruh~\cite{Unr16} who showed that random oracles are collapsing. Collapsing hash functions are the quantum analogue of collision-resistant hash functions. Informally speaking, a hash function is collapsing if an adversary cannot distinguish a uniform superposition of inputs, say $\ket{\psi}$, mapping to a random output $y$ versus a computational basis state obtained by measuring $\ket{\psi}$ in the computational basis. Zhandry~\cite{zhandry2021quantumlightning} showed that hash functions that are collision-resistant but not collapsing imply the existence of public-key quantum money. Thus our result rules out a class of constructions of quantum money from collapsing functions, improving our understanding of the relationship between them.

\paragraph{Acknowledgments.} We thank anonymous conference referees, Qipeng Liu, Yao Ching Hsieh, and Xingjian Li for their helpful comments. HY is supported by AFOSR award
FA9550-21-1-0040 and NSF CAREER award CCF-2144219.

\section{Our Techniques in a Nutshell}
We present a high-level overview of the techniques involved in proving~\Cref{thm:intro:main}. But first, we will briefly discuss the correctness guarantee of {\em oracle-aided} public-key quantum money schemes. 

\paragraph{Reusability.} In a quantum money scheme $(\keygen,\mint,\verify)$, we require that $\verify$ accepts a state and a serial number produced by $\mint$ with overwhelming probability. However, for all we know, $\verify$, during the verification process, might destroy the state. In general, a more useful correctness definition is reusability, which states that a money state can be repeatedly verified without losing its validity. In general, one can show that the gentle measurement lemma~\cite{Win99} does prove that correctness implies reusability. However, as observed in~\cite{AK22}, this is not the case when $\verify$ has classical access to an oracle. Specifically, $\verify$ has classical access to ${\cal R}$. Hence, we need to explicitly define reusability in this setting. Roughly speaking, we require the following: suppose we execute $\verify$ on a money state $\rho^{(0)}$ produced using $\mint$ and the verification algorithm accepts with probability $\delta$. The residual state is (possibly) a different state $\rho^{(1)}$ which when executed upon by $\verify$ is also accepted with probability close to $\delta$. In fact, even if we run the verification process polynomially many times, the state obtained at the end of the process should still be accepted by $\verify$ with probability close to $\delta$. 

\subsection{Warmup: Insecurity when ${\cal R}$ is absent}\label{sec:IntuitionOfSyn} 
\label{sec:overview:warmup}
Towards developing techniques to prove~\Cref{thm:intro:main}, let us first tackle a simpler statement. Suppose we have a secure public-key quantum money scheme $(\keygen,\mint,\verify)$. This means that any QPT adversary cannot break the security of this scheme. But what about oracle-aided adversaries? In more detail, we ask the following question: {\em Does there exist a QPT algorithm, given quantum access to a ${\sf PSPACE}$ oracle, that violates the security of  $(\keygen,\mint,\verify)$?} That is, given $(s,\rho_s)$, where $s$ is a serial number and $\rho_s$ is a valid banknote produced by $\mint$, it should be able to produce two states, with respect to the same serial number $s$, that are both accepted by the verifier. 
\par Even this seemingly simple question seems challenging! Let us understand why. Classical cryptographic primitives (even post-quantum secure ones) such as encryption schemes or digital signatures can be broken by efficient adversaries who have access to even ${\sf NP}$ oracles. This follows from the fact that we can efficiently reduce the problem of breaking the scheme to the problem of determining membership in a language. For instance, in order to succeed in breaking an encryption scheme, the adversary has to decide whether the instance $({\sf pk},{\sf ct},{\sf m}) \in L$, where ${\sf pk}$ is a public key, ${\sf ct}$ is a ciphertext, ${\sf m}$ is a message and $L$ consists of instances of the form $({\sf pk},{\sf ct},{\sf m})$, where ${\sf ct}$ is an encryption of ${\sf m}$ with respect to the public key ${\sf pk}$. Implicitly, we are using the fact that ${\sf pk},{\sf ct}, {\sf m}$ are binary strings. Emulating a similar approach in the case of quantum money would result in {\em quantum} instances and it is not clear how to leverage ${\sf PSPACE}$, or more generally a decider for any language, to complete the reduction. 

\paragraph{Synthesizing Witness States.} Towards addressing the above question, we 
reduce the task of breaking the security of the quantum money scheme using ${\sf PSPACE}$ to the task of finding states accepted by the verifier in quantum polynomial space. This reduction is enabled by the following observation, due to Rosenthal and Yuen~\cite{IPforStates}: a set of pure states computable by a quantum polynomial space algorithm (which may in general include intermediate measurements) can be synthesized by a QPT algorithm with quantum access to a ${\sf PSPACE}$ oracle. Implicit in the result of~\cite{IPforStates} is the following important point: in order to synthesize the state using the ${\sf PSPACE}$ oracle, it is important that we know the entire description of the quantum polynomial space algorithm generating the pure states. 
\par In more detail, we show the following statement: for every\footnote{Technically, we show a weaker statement which holds for an overwhelming fraction of $({\sf pk},s)$.} verification key ${\sf pk}$, serial number $s$, there exists a pure state\footnote{Technically, we require that the reduced density matrix of $\rho_{{\sf pk},s}$ is accepted by $\verify$.} $\rho_{{\sf pk},s}$ that is accepted by $\verify({\sf pk},s,\cdot)$ with non-negligible probability and moreover, can be generated by a quantum polynomial space algorithm.

\par The first attempt is to follow the classical brute-force search algorithm. Namely, we repeat the following for exponential times: guess a quantum state $\rho$ uniformly at random and if $\rho$ is accepted by $\verify({\sf pk},s,\cdot)$ with non-negligible probability, output $\rho$ and terminate. (Output an arbitrary state if we run out of times.) However, there are two problems with this attempt. Firstly, in general, it's not clear how to calculate the acceptance probability of $\verify({\sf pk},s,\rho)$ in polynomial space ($\rho$ needs exponential bits to represent). Secondly, $\rho$ might be destroyed when we calculate the acceptance probability. 

\par To fix the first problem, we note that an estimation of the acceptance probability is already good enough and it can be done by using a method introduced by Marriott and Watrous~\cite{marriott} (called MW technique). The MW technique allows us to efficiently estimate the acceptance probability of a verification algorithm on a state with only one copy of that state. Furthermore, it does not disturb the state too much in the sense that the expected acceptance probability of the residual state does not decay too significantly, which fixes the second problem. 

\par This brings us to our second attempt. We repeat the following process for exponentially many times: apply MW technique on a maximally mixed state and if the estimated acceptance probability happens to be non-negligible, output the residual state and terminate. (Output an arbitrary state if all the iterations fail.)

\par As the MW technique is efficient, this algorithm only uses polynomial space. Furthermore, intuitively we can get a state that is accepted by $\verify$ with non-negligible acceptance probability given the fact that such a state exists. Because if such state exists, by a simple convexity argument, we can assume that without loss of generality that it's pure. Maximally mixed state can be treated as a uniform mixture of a basis containing that pure state. Thus roughly speaking, we start from that pure state with inverse exponential probability, so we can find it in exponentially many iterations with overwhelming probability. This attempt almost succeeds except that it outputs a \emph{mixed} state in general, but the known approach in \cite{IPforStates} can only deal with pure states. There are two reasons for this. Firstly, we start with a maximally mixed state and secondly, MW technique involves intermediate measurements.

\par Our final attempt makes the following minor changes compared to the second attempt. To fix the first issue, it starts with a maximally entangled state (instead of maximally mixed state) and only operates on half of it. To fix the second issue, it runs the MW process coherently by deferring all the intermediate measurements. Then we will end up with a pure state whose reduced density matrix is the same as the output state of the second attempt.

\subsection{Insecurity in the presence of ${\cal R}$}\label{sec:IntuitionWithR}
So far, we considered the task of violating the security of a quantum money scheme where the algorithms did not have access to any oracle. Let us go back to the oracle-aided quantum money schemes, where, all the algorithms (honest and adversarial) have access to ${\cal R}$ and $\ket{{\sf PSPACE}}$. Our goal is to construct an adversary that violates the security of quantum money schemes. {\em But didn't we just solve this problem?} Recall that when invoking~\cite{IPforStates}, it was crucial that we knew the entire description of the polynomial space algorithm in order to synthesize the state. However, when we are considering oracle-aided verification algorithms, denoted by $\verify^{{\cal R},\ket{\sf PSPACE}}$, we don't have the full description of\footnote{The fact that we don't have the description of ${\cal R}$ is the problem here.} $\verify^{{\cal R},\ket{\sf PSPACE}}$. Thus, we cannot carry out the synthesizing process. 
\par A naive approach is to sample our own oracle ${\cal R}'$ and synthesize the state with respect to $\verify^{{\cal R}',\ket{\sf PSPACE}}$. However, this does not help. Firstly, there is no guarantee that $\verify^{{\cal R}',\ket{\sf PSPACE}}({\sf pk},s,\cdot)$ accepts any state with high enough probability. Without this guarantee, the synthesizing process does not work. For now, let us take for granted that there does exist some witness state $\sigma'$ that is accepted by $\verify^{{\cal R}',\ket{\sf PSPACE}}({\sf pk},s,\cdot)$ with high enough probability. However, there is no guarantee that $\sigma'$ is going to be accepted by $\verify^{{\cal R},\ket{\sf PSPACE}}({\sf pk},s,\cdot)$ with better than negligible probability.  
\par Towards addressing these hurdles, we first focus on a simple case when $\keygen$ and $\mint$ make classical queries to ${\cal R}$ and we later, focus on the quantum queries case.

\subsubsection{$\keygen$ and $\mint$: Classical Queries to ${\cal R}$}\label{sec:overview:ClassicalR} 

\paragraph{Compiling out ${\cal R}$.} Suppose we can magically find a database $D$, using only polynomially many queries to ${\cal R}$, such that all the query-answer pairs made by $\verify$ to ${\cal R}$ are contained in $D$. In this case, there is a QPT adversary ${\cal A}^{{\cal R},\ket{{\sf PSPACE}}}$ that given $({\sf pk},s,\rho_s)$, can find two states $(s,\sigma'_s)$ and $(s,\sigma''_s)$ such that $\verify^{{\cal R},\ket{\sf PSPACE}}$ accepts both the states. ${\cal A}$ does the following: it first finds the database $D$ and constructs another circuit $\verify^{D,\ket{{\sf PSPACE}}}$ such that  $\verify^{D,\ket{\sf PSPACE}}$ runs $\verify^{{\cal R},\ket{{\sf PSPACE}}}$ and when $\verify^{{\cal R},\ket{{\sf PSPACE}}}$ makes a query to ${\cal R}$, the query is answered by $D$. Then, ${\cal A}$ synthesizes two states $(s,\sigma'_s)$ and $(s,\sigma''_s)$, using ${\sf PSPACE}$, such that both the states are accepted by $\verify^{D,\ket{{\sf PSPACE}}}$. By definition of the database $D$, these two states are also accepted by $\verify^{{\cal R},\ket{\sf PSPACE}}$. 
\par Of course, it is wishful for us to hope that we can find a database $D$ by making only polynomially many queries to ${\cal R}$ that is perfectly consistent with the queries made by $\verify$. Instead, we hope to recover a good enough database $D$. In more detail, we aim to recover a database $D$ that captures all the relevant queries made by $\keygen$ and $\mint$. 
\par Let $D_{\keygen}$ and $D_{\mint}$ be the collection of query-answer pairs made by ${\keygen}$ and ${\mint}$ respectively. A query made by $\verify$ is called {\em bad} if this query is in $D_{\keygen} \cup D_{\mint}$ and moreover, this query was not recorded in $D$. If $\verify$ makes {\em bad} queries then the answers returned will likely be inconsistent with ${\cal R}$ and thus, there is no guarantee that $\verify$ will work. Our hope is that the probability of $\verify$ making bad queries is upper bounded by an inverse polynomial. 
\par Once we have such a database $D$, by a similar argument, we can conclude that the states synthesized using $\verify^{D,\ket{{\sf PSPACE}}}$ are also accepted by $\verify^{{\cal R},\ket{{\sf PSPACE}}}$. 
\par {\em But how do we recover this database $D$?} To see how, we will first focus on a simple case before dealing with the general case.  

\paragraph{State-independent database simulation.} Note that the queries made by $\verify$ could potentially depend on its input state. For now, we will assume that the distribution of queries made by $\verify$ is independent of the input state. We will deal with the state-dependent query distributions later. 
\par The first attempt to generate $D$ would be to rely upon techniques introduced by Canetti, Kalai and Paneth~\cite{RemoveROfromObf} who, in a different context -- that of proving impossibility of obfuscation in the random oracle model -- showed how to generate a database that is sufficient to simulate the queries made by the evaluation algorithm. Suppose $(s,\rho_s)$ is the state generated by $\mint$. Then, run $\verify^{{\cal R},\ket{{\sf PSPACE}}}(pk, s,\rho_s)$ a fixed polynomially many times, referred to as {\em test} executions, by querying ${\cal R}$. In each execution of $\verify^{{\cal R},\ket{{\sf PSPACE}}}$, record all the queries made by $\verify$ along with their answers. The union of queries made in all the executions of $\verify$ will be assigned the database $D$. In the context of  obfuscation for classical circuits,~\cite{RemoveROfromObf} argue that, except with inverse polynomial probability, the queries made by the evaluation algorithm can be successfully simulated by $D$. This argument is shown by proving an upper bound on the probability that the evaluation algorithm makes bad queries. 

\par A similar analysis can also be made in our context to argue that $D$ suffices for successful simulation. That is, we can argue that the state we obtain after all the executions of $\verify$ (which could be very different from the state we started off with) can be successfully simulated using $D$. However, it is crucial for our analysis to go through that $D_{\verify}$ (the query-answer pairs made during $\verify$) is {\em independent} of the state input to $\verify$.

\paragraph{State-dependent database simulation.} For all we know, $D_{\verify} $ could indeed depend on the input state. In this case, we can no longer appeal to the argument of~\cite{RemoveROfromObf}. At a high level, the reason is due to the fact that after each execution of $\verify$, the money state could potentially change and this would affect the distribution of $D_{\verify}$ in the further executions of $\verify$ in such a way that the execution of $\verify$ on the final state (which could be different from the input state in the first execution of $\verify$) cannot be simulated using the database $D$. 

Instead, we will rely upon a technique due to~\cite{AK22}, who studied a similar problem in the context of copy-protection. They showed that by randomizing the number of executions, one can argue that the execution of $\verify$ on the state obtained after all the test executions can be successfully simulated using $D$, except with  inverse polynomial probability. That is, suppose the initial state is $(s,\rho_s^{(0)})$ and after running $\verify^{{\cal R},\ket{{\sf PSPACE}}}$, $t$ number of times where $t \xleftarrow{\$} \{0, 1, \cdots, T\}$, let the resulting state be $(s,\rho_s^{(t)})$. Let $D$ be as defined before.  Then, we have the guarantee that $\verify$ accepts $(s,\rho_s^{(t)})$,  except with inverse polynomial probability, even when ${\cal R}$ is simulated using $D$. This is because the sum of the number of bad queries we encounter during $T + 1$ verifications is bounded by $\lvert D_{\keygen} \cup D_{\mint} \rvert$. Then there are only at most $\epsilon^{-1}\lvert D_{\keygen} \cup D_{\mint} \rvert$ points $t' \in \{0, 1, \cdots, T\}$ such that the probability of making a bad query during the $(t' + 1)^{th}$ verification is at least $\epsilon$. So when $T$ is large enough, there is a good chance that we choose $t$ such that the probability of making a bad query during the next verification (i.e. the $(t + 1)^{th}$ verification if you count from the beginning) is small, in which case $D$ can simulate $\cal R$ well.
\par This suggests the following attack on the quantum money scheme. On input a money state $(s,\rho_s)$, do the following:
\begin{itemize}
    \item Run $\verify^{{\cal R},\ket{{\sf PSPACE}}}$, $t$ times, also referred to as test executions. The number of times we need to run $\verify^{{\cal R},\ket{{\sf PSPACE}}}$, namely $t$, is randomized as per~\cite{AK22}. Let $D$ be the set of query-answer pairs made by $\verify$ to ${\cal R}$ during the test executions. Denote $\rho_s^{(t)}$ to be the state obtained after $t$ executions of $\verify$. 
    \item Let $\verify^{D,\ket{{\sf PSPACE}}}$ be the verification circuit as defined earlier. 
    \item Using quantum access to ${\sf PSPACE}$, synthesize two states $(s,\sigma'_s)$ and $(s,\sigma''_s)$, as per~\Cref{sec:overview:warmup}, such that both the states are accepted by $\verify^{D,\ket{{\sf PSPACE}}}$.
    \item Output $(s,\sigma'_s)$ and $(s,\sigma''_s)$. 
\end{itemize}
\noindent From the witness synthesis method, we have the guarantee that  $(s,\sigma'_s)$ and $(s,\sigma''_s)$ are both accepted by $\verify^{D,\ket{{\sf PSPACE}}}$. However, this is not sufficient to prove that the above attack works. Remember that the adversary is supposed to output two states that are both accepted by $\verify^{{\cal R},\ket{{\sf PSPACE}}}$. Unfortunately, there is no guarantee that $\verify^{{\cal R},\ket{{\sf PSPACE}}}$ accepts these two states. Indeed, both $\sigma'_s$ and $\sigma''_s$ could be quite different from $\rho_s^{(t)}$. Hence, the above attack does not work.  

\paragraph{Every mistake we make is progress.} Let us understand why the above attack does not work. Note that as long as $\verify$ does not make any bad query (i.e., a query in $D_{\keygen} \cup D_{\mint}$ but not contained in $D$), it cannot distinguish whether its queries are being simulated by $D$ or ${\cal R}$. However, when $\verify$ is executed on $(s,\sigma'_s)$ or $(s,\sigma''_s)$, we can no longer upper bound the probability that $\verify$ will not make any bad queries. 
\par We modify the above approach as follows: whenever $\verify$ makes bad queries, we can update the database $D$ to contain the bad queries along with the correct answers (i.e., answers generated using ${\cal R}$). Once $D$ is updated, we can synthesize two new states using $\verify^{D,\ket{{\sf PSPACE}}}$. We repeat this process until we have synthesized two states that are accepted by $\verify^{{\cal R},\ket{{\sf PSPACE}}}$. 
\par {\em Is there any guarantee that this process will stop?} Our key insight is that whenever we make a mistake and synthesize states that are not accepted by $\verify^{{\cal R},\ket{{\sf PSPACE}}}$ then we necessarily learn a new query in $D_{\keygen} \cup D_{\mint}$ that is not contained in $D$. Thus, with each mistake, we make progress. Since there are only a polynomial number of queries in $D_{\keygen} \cup D_{\mint}$, we will ultimately end up synthesizing two states that are accepted by $\verify^{{\cal R},\ket{{\sf PSPACE}}}$. 

\paragraph{Our Attack.} With this modification, we have the following attack. On input a money state $(s,\rho_s)$, do the following:
\begin{itemize}
    \item {\em Test phase}: Run $\verify^{{\cal R},\ket{{\sf PSPACE}}}$, $t$ times, also referred to as test executions. The number of times we need to run $\verify^{{\cal R},\ket{{\sf PSPACE}}}$, namely $t$, is randomized as per~\cite{AK22}. Let $D$ be the set of query-answer pairs made by $\verify$ to ${\cal R}$ during the test executions. Denote $\rho_s^{(t)}$ to be the state obtained after $t$ executions of $\verify$. 
    \item {\em Update phase}: Repeat the following polynomially many times. Let $\verify^{D,\ket{{\sf PSPACE}}}$ be the verification circuit as defined earlier. Using quantum access to ${\sf PSPACE}$, synthesize a state $(s,\sigma_s)$ as per~\Cref{sec:overview:warmup}, such that the state is accepted by $\verify^{D,\ket{{\sf PSPACE}}}$. Run $\verify^{{\cal R},\ket{{\sf PSPACE}}}$ on this state and include any new queries made by $\verify$ to ${\cal R}$ in $D$. 
    \item Let $D_1,\ldots,D_{{\rm poly}}$ be the databases obtained after every execution during the update phase.  
    \item Using quantum access to ${\sf PSPACE}$, synthesize two states $(s,\sigma'_s)$ and $(s,\sigma''_s)$ such that both the states are accepted by $\verify^{D_j,\ket{{\sf PSPACE}}}$ for some randomly chosen $j$.
    \item Output $(s,\sigma'_s)$ and $(s,\sigma''_s)$. 
\end{itemize}
\noindent In the technical sections, we analyze the above attack and prove that it works.

\subsubsection{$\keygen$ and $\mint$: Quantum Queries to ${\cal R}$}

The important point to note here is the form of our aforementioned attacker. It only takes advantage of the fact that $\verify$ makes classical queries to $\ro$. When $\keygen$ and $\mint$ make quantum queries to $\ro$ while $\verify$ makes classical queries to $\ro$, we can still run the attacker. What is left is to show that the same attacker works even when $\keygen$ and $\mint$ make quantum queries to $\ro$.

The main difficulty in carrying out the intuitions in \Cref{sec:overview:ClassicalR} to the case where $\keygen$ and $\mint$ make quantum queries to $\ro$ is that it's difficult to define analogue of $D_{\keygen}$ and $D_{\mint}$. To give a flavour of the difficulty, let's first consider two naive attempts. 

The first attempt is to define $D_{\keygen}$ and $D_{\mint}$ to be those query-answer pairs asked (with non-zero amplitudes) during $\keygen$ and $\mint$. However, this attempt suffers from the problem that in this way, $D_{\keygen} \cup D_{\mint}$ can have exponential elements. So even if each time we can make progress in the sense that we recover some new elements in $D_{\keygen} \cup D_{\mint}$, there is no guarantee that the update phase will terminate in polynomial time.

The second attempt is to only include queries that are asked ``heavily'' during $\keygen$ and $\mint$. To be more specific, let $D_{\keygen}$ and $D_{\mint}$ be query-answer pairs asked with inverse polynomial squared amplitudes during $\keygen$ and $\mint$. However, with this plausible definition, the claim does not hold that whenever the acceptance probability of $\verify^{D,\ket{{\sf PSPACE}}}$ is far from that of $\verify^{\ro, \ket{\pspace}}$, then we can recover a query in $D_{\keygen} \cup D_{\mint} - D$, which is a crucial idea underlying our intuitions in \Cref{sec:overview:ClassicalR}. Let us understand why this claim is not true if we adopt this definition of $D_{\keygen}$ and $D_{\mint}$. 
\par Consider the following contrived counterexample $(\keygen^{\ket{\ro}, \ket{\pspace}}, \mint^{\ket{\ro}, \ket{\pspace}}, \verify^{\ro, \ket{\pspace}})$. Suppose there exists a quantum money scheme $(\keygen'^{\ket{\ro}, \ket{\pspace}},\mint'^{\ket{\ro}, \ket{\pspace}},\verify'^{\ro, \ket{\pspace}})$. We modify this scheme into $(\keygen^{\ket{\ro}, \ket{\pspace}}, \mint^{\ket{\ro}, \ket{\pspace}}, \verify^{\ro, \ket{\pspace}})$ as follows: 
\begin{itemize}
    \item $\keygen^{\ket{\ro}, \ket{\pspace}}(1^n)$: outputs the secret key-public key pair of $\keygen'^{\ket{\ro}, \ket{\pspace}}$.
    \item $\mint^{\ket{\ro}, \ket{\pspace}}(sk)$: takes as input $sk$, makes quantum query to $\ro$ on state $\frac{1}{\sqrt{2^n}}\sum_{s = 0}^{2^n - 1}\ket{s}\ket{0}$ to get a state $\frac{1}{\sqrt{2^n}}\sum_{s = 0}^{2^n - 1}\ket{s}\ket{\ro(s)}$, and then measures the first register to get a value $s$. It also runs $\mint'^{\ket{\ro}, \ket{\pspace}}(sk)$ to get a serial number $s'$ along with a state $\rho_{s'}$. It outputs $(s,s')$ as the serial number and $(\ro(s),\rho_{s'})$ as the banknote. 
    \item $\verify^{\ro, \ket{\pspace}}(pk,((s, s'),(h, \rho_{s'})))$: takes as input $pk$ and an alleged banknote $((s,s'), (h,\rho_{s'}))$, makes classical query to $\ro$ on the input $s$ to get $R(s)$ and checks if $h=\ro(s)$. It also checks if $(s', \rho_{s'})$ is a valid money state with respect to $\verify'^{\ro, \ket{\pspace}}$. Accepts if and only if both the checks pass.
\end{itemize}

In the above counterexample, it is possible that there is no query-answer pair that is asked with inverse polynomial squared amplitudes and thus $D_{\keygen} \cup D_{\mint} = \emptyset$. At the beginning $D = \emptyset$ because we have not started to record query-answer pairs. In this case, the acceptance probability of $\verify^{D,\ket{{\sf PSPACE}}}(pk,((s, s'),(\ro(s), \rho_{s'})))$ is smaller than or equal to $\frac{1}{2^m}$ where $m$ is the output length of $\ro$ on input length $n$ while the acceptance probability of $\verify^{\ro,\ket{{\sf PSPACE}}}(pk,((s, s'),(\ro(s), \rho_{s'})))$ is 1 if $(\keygen'^{\ket{\ro}, \ket{\pspace}},\mint'^{\ket{\ro}, \ket{\pspace}},\verify'^{\ro, \ket{\pspace}})$ is (perfectly) correct. However, it's impossible to recover a new query in $D_{\keygen} \cup D_{\mint} - D$ because it's empty.

\paragraph{Purified View.} Our insight is to consider an alternate world called the {\em purified view}. In this alternate world, we run everything coherently; in more detail, we consider a uniform superposition of $\ro$, run $\mint$, $\keygen$ and even the attacker coherently (i.e., no intermediate measurements). If the attacker is successful in this alternate world then he is also successful in the real world where $\ro$ and the queries made by $\verify$ to $\ro$ are measured. We then employ the the compressed oracle technique by Zhandry~\cite{zhandryCompressedOracle} to coherently recover the database of query-answer pairs recorded during $\keygen,\mint$ and relate this with the database recorded during $\verify$. Using an involved analysis, we then show many of the insights from the case when $\keygen,\mint$
make classical queries to $\ro$ can be translated to the quantum query setting. 

\subsubsection{Challenges To Handling Quantum Verification Queries}
\label{sec:qqueries:verify}
It is natural to wonder whether we can similarly use the compressed oracle technique to handle quantum queries made by $\verify$. Unfortunately, there are inherent limitations. Recall that in our attack, the adversary records the verifier's classical query-answer pairs in a database, uses this to produce a \emph{classical} description of a verification circuit (that does not make any queries to the random oracle), and submits the circuit description to a $\pspace$ oracle in order to synthesize a money state. If the verifier instead makes quantum queries, then a natural idea is to use Zhandry's compressed oracle technique to record the quantum queries. However, there are two conceptual challenges to implementing this idea. 

First, in the compressed oracle technique, the queries are being recorded by the \emph{oracle} itself in a
``database register'', and not the adversary in the cryptosystem. In our setting, we are trying to construct an adversary to record the queries, but it does not have access to the oracle's database register. In general, any attempts by the adversary to get some information about the query positions of $\verify$ could potentially disturb the intermediate states of the $\verify$ algorithm; it is then unclear how to use the original guarantees of $\verify$. Another way of saying this is that Zhandry's compressed oracle technique is typically used in the \emph{security analysis} to show limits on the adversary's ability to break some cryptosystem. In our case, we want to use some kind of quantum recording technique in the adversary's \emph{attack}. 
\par Secondly, the natural approach to using the $\pspace$ oracle is to leverage it to synthesize alleged banknotes. However, since the $\pspace$ oracle is a classical function (which may be accessed in superposition), it requires polynomial-length classical strings as input. In our approach, the adversary submits a classical description of a verification circuit with query/answer pairs hardcoded inside. On the other hand if $\verify$ makes quantum queries, it may query exponentially many positions of the random oracle $\cal{R}$ in superposition, and it is unclear how to ``squeeze'' the relevant information about the queries into a polynomial-sized classical string that could be utilized by the $\pspace$ oracle. 

This suggests that we may need a fundamentally new approach to recording quantum queries in order to handle the case when the verification algorithm makes quantum queries.

\subsection{Related Work}

\paragraph{Quantum Money.} The notion of quantum money was first conceived in the paper by Wiesner~\cite{wiesner1983conjugate}. In the same work, a construction of private-key quantum money was proposed. Wiesner's construction has been well studied and its limitations~\cite{Lut10} and security guarantees~\cite{MVW12} have been well understood. Other constructions of private-key quantum money have also been studied. Ji, Liu and Song~\cite{JLS18} construct private-key quantum money from pseudorandom quantum states. Radian and Sattath~\cite{RS22} construct private-key quantum money with classical bank from quantum hardness of learning with errors. 
\par With regards to public-key quantum money, Aaronson and Christiano~\cite{AC2013HiddenSubspace} present a construction of public-key quantum money in the oracle model. Zhandry~\cite{zhandry2021quantumlightning} instantiated this oracle and showed how to construct public-key quantum money based on the existence of post-quantum indistinguishability obfuscation (iO)~\cite{BGIRSVY01}. Recently, Shmueli~\cite{Shm22} showed how to achieve public-key quantum money with classical bank, assuming post-quantum iO and quantum hardness of learning with errors. Constructions~\cite{FGHLS12,KSS21,KJS22} of public-key quantum money from newer assumptions have also been explored although they have been susceptible to quantum attacks~\cite{Rob21,bilyk2022cryptanalysis}.

\paragraph{Black-box Separations in Quantum Cryptography.} So far, most of the existing black-box separations in quantum cryptography have focused on extending black-box separations for classical cryptographic primitives to the quantum setting. Hosoyamada and Yamakawa~\cite{HY20} extend the black-box separation between collision-resistant hash functions and one-way functions~\cite{Simon98} to the quantum setting. Austrin, Chung, Chung, Fu, Lin and Mahmoody~\cite{ACCFLM22} showed a black-box separation between key agreement and one-way functions in the setting when the honest parties can perform quantum computation but only have access to classical communication. Cao and Xue~\cite{CX21} extended classical black-box separations between one-way permutations and one-way functions to the quantum setting. 

\section{Preliminaries}
\label{sec:prelims}

For a string $x$, let $\length{x}$ denote its length. Let $[n]$ denote the set $\{0, 1, \cdots, n - 1\}$ for any positive integer $n$. Define the symmetric difference of two sets $X$ and $Y$ to be the set of elements contained in exactly one of $X$ and $Y$, i.e. $X \Delta Y = (X - Y) \cup (Y - X)$. 

\subsection{Quantum States, Algorithms, and Oracles}

A \emph{register} $\reg{R}$ is a  finite-dimensional complex Hilbert space. If $\reg{A}, \reg{B}, \reg{C}$ are registers, for example, then the concatenation $\reg{A} \reg{B} \reg{C}$ denotes the tensor product of the associated Hilbert spaces. For a linear transformation $L$ and register $\reg R$, we sometimes write $L_{\reg R}$ to indicate that $L$ acts on $\reg R$, and similarly we sometimes write $\rho_{\reg R}$ to indicate that a state $\rho$ is in the register $\reg R$. We write $\Tr(\cdot)$ to denote trace, and $\Tr_{\reg R}(\cdot)$ to denote the partial trace over a register $\reg R$.

For a pure state $\ket\varphi$, we write $\varphi$ to denote the density matrix $\ketbra{\varphi}{\varphi}$. Let $\id$ denote the identity matrix. Let $\td(\rho,\sigma)$ denote the trace distance between two density matrices $\rho,\sigma$.

For a pure state $\ket{\varphi} = \sum_i a_i \ket{i}$ written in computational basis, we write $\ket{\overline{\varphi}} = \sum_i \overline{a_i}\ket{i}$ to denote the conjugate of $\ket{\varphi}$ where $\overline{a_i}$ is the complex conjugate of the complex number $a_i$. The following observation shows that what the maximally entangled state looks like in other basis.

\begin{lemma}\label{lemma:MaximallyEntangledState}
For two registers $\reg{A}$ and $\reg{B}$ of the same dimension $N$, let $\left(\ket{i}\right)_{i = 1}^N$ be the computational basis and $\left(\ket{v_i}\right)_{i = 1}^N$ be an arbitrary basis. Then $\frac{1}{\sqrt N}\sum_{i = 1}^N \ket{i}_{\reg{A}}\ket{i}_{\reg{B}} = \frac{1}{\sqrt N}\sum_{i = 1}^{N} \ket{v_i}_{\reg{A}}\ket{\overline{v_i}}_{\reg{B}}.$ 
\end{lemma}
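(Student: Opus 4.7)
The plan is to reduce the claim to the familiar fact that $\sum_i U_{ji}\overline{U_{ki}} = \delta_{jk}$ for a unitary matrix $U$. Concretely, since $(\ket{v_i})_{i=1}^N$ is an orthonormal basis, there is a unitary $U$ with entries $U_{ji} = \ip{j}{v_i}$ such that $\ket{v_i} = \sum_j U_{ji}\ket{j}$. Taking complex conjugates entrywise gives $\ket{\overline{v_i}} = \sum_k \overline{U_{ki}}\ket{k}$, which is the definition of the conjugated ket supplied in the preliminaries.

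Next I would substitute these expansions into the right-hand side and collect terms. This yields
\[
\frac{1}{\sqrt N}\sum_{i=1}^N \ket{v_i}_{\reg A}\ket{\overline{v_i}}_{\reg B}
= \frac{1}{\sqrt N}\sum_{j,k=1}^N \Bigl(\sum_{i=1}^N U_{ji}\,\overline{U_{ki}}\Bigr)\ket{j}_{\reg A}\ket{k}_{\reg B}.
\]
The inner sum is exactly the $(j,k)$ entry of $UU^{\dagger}$, which equals $\delta_{jk}$ by unitarity. Hence the double sum collapses to $\sum_j \ket{j}_{\reg A}\ket{j}_{\reg B}$, giving the claimed equality.

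I do not anticipate any real obstacle here: the whole content of the lemma is basis-invariance of the maximally entangled state, and the only step requiring any care is keeping the complex conjugation on the correct factor so that the unitarity identity $UU^\dagger = I$ (as opposed to $U^\top U$) is what appears. I would therefore keep the matrix-entry conventions explicit and close with a one-line remark that the identity is symmetric in the two registers, which is why the conjugate lands on the $\reg B$ side in our statement but could equally well be placed on $\reg A$.
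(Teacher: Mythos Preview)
Your proof is correct and is essentially the same argument as the paper's, just run in the opposite direction: the paper inserts resolutions of identity $\sum_j \ketbra{v_j}{v_j}$ and $\sum_{j'} \ketbra{\overline{v_{j'}}}{\overline{v_{j'}}}$ into the left-hand side and uses $\langle v_j | v_{j'} \rangle = \delta_{jj'}$, whereas you expand the right-hand side in the computational basis and use $(UU^\dagger)_{jk} = \delta_{jk}$. These are the same orthonormality identity, so the two arguments coincide in substance.
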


\begin{proof}
It's easy to show that $\left(\ket{\overline{v_i}}\right)_{i = 1}^N$ also forms a basis. Suppose $\ket{v_j} = \sum_{i = 1}^N a_{j, i}\ket{i}$. Then
\begin{align*}
    \frac{1}{\sqrt N}\sum_{i = 1}^N\ket{i}_{\reg{A}} \ket{i}_{\reg{B}} &= \sum_{j = 1}^N \ketbra{v_j}{v_j}_{\reg{A}} \sum_{j' = 1}^N \ketbra{\overline{v_{j'}}}{\overline{v_{j'}}}_{\reg{B}}\frac{1}{\sqrt N}\sum_{i = 1}^N \ket{i}_{\reg{A}}\ket{i}_{\reg{B}} \\
    &= \frac{1}{\sqrt N}\sum_{j = 1}^N\sum_{j' = 1}^N \sum_{i = 1}^N \overline{a_{j, i}}a_{j', i}\ket{v_j}_{\reg A}\ket{\overline{v_{j'}}}_{\reg B}\\
    &= \frac{1}{\sqrt N}\sum_{j = 1}^N\sum_{j' = 1}^N \braket{v_j \vert v_{j'}}\ket{v_j}_{\reg A}\ket{\overline{v_{j'}}}_{\reg B}\\
    &= \frac{1}{\sqrt N}\sum_{j = 1}^N \ket{v_j}_{\reg A}\ket{\overline{v_{j}}}_{\reg B}
\end{align*}
where we use the fact that $\sum_{j = 1}^N \ketbra{v_j}{v_j}_{\reg{A}}$ and $\sum_{j' = 1}^N \ketbra{\overline{v_{j'}}}{\overline{v_{j'}}}_{\reg{B}}$ are  identity.
\end{proof}

\paragraph{Quantum Circuits}
We specify the model of quantum circuits that we work with in this paper. For convenience we fix the universal gate set $\{ H, \mathit{CNOT}, T \}$ \cite[Chapter 4]{nielsen2000quantum} (although our results hold for any universal gate set consisting of gates with algebraic entries). Quantum circuits can include unitary gates from the aforementioned universal gate set, as well as non-unitary gates that (a) introduce new qubits initialized in the zero state, (b) trace them out, or (c) measure them in the standard basis. We say that a circuit uses space $s$ if the total number of qubits involved at any time step of the computation is at most $s$. The description of a circuit is a sequence of gates (unitary or non-unitary) along with a specification of which qubits they act on.

We call a sequence of quantum circuits $C = (C_x)_{x \in \{0, 1\}^*}$ a \emph{quantum algorithm}. We say that $C$ is \emph{polynomial-time} if there exists a polynomial $p$ such that $C_x$ has size at most $p(\lvert x \rvert)$. We say that $C$ is \emph{polynomial-space} if there exists a polynomial $p$ such that $C_x$ uses at most $p(|x|)$ space.

Let $C = (C_x)_{x \in \{0,1\}^*}$ denote a quantum algorithm. Given a string $x \in \{0,1\}^*$ and a state $\rho$ whose number of qubits matches the input size of the circuit $C_x$, we write $C(x,\rho)$ to denote the output of circuit $C_x$ on input $\rho$. The output will in general be a mixed state as the circuit $C_x$ can perform measurements.

We say that a quantum algorithm $C = (C_x)_{x \in \{0,1\}^*}$ is \emph{time-uniform} (or simply \emph{uniform}) if there exists a polynomial-time Turing machine that on input $x$ outputs the description of $C_{x}$. Similarly we say that $C$ is \emph{space-uniform} if there exists a polynomial-space Turing machine that on input $x$ outputs the description of $C_{x}$.

\paragraph{Oracle Algorithms} 
\emph{Oracle algorithms} are quantum algorithms whose circuits, in addition to having the gates as described above, have the ability to query (perhaps in superposition) a function $O$ (called an \emph{oracle}) which may act on many qubits. This is essentially the same as the standard quantum query model~\cite[Chapter 6]{nielsen2000quantum}, except the circuits may perform non-unitary operations such as measurement, reset, and tracing out. Each oracle call is counted as a single gate towards the size complexity of a circuit.  The notion of time- and space-uniformity for oracle algorithms is the same as with non-oracle algorithms: there is a polynomial-time/polynomial-space Turing machine -- which does \emph{not} have access to the oracle -- that outputs the description of the circuits.

Given an oracle $\oracle = (\oracle_n)_{n \in \N}$ where each $\oracle_n:\{0,1\}^n \to \{0,1\}$ is an $n$-bit boolean function, we write $C^\oracle = (C_x^\oracle)_{x \in \{0,1\}^*}$ to denote an oracle algorithm where each circuit $C_x$ can query any of the functions $(\oracle_n)_{n \in \N}$ (provided that the oracle does not act on more than the number of qubits of $C_x)$.

In this paper we distinguish between classical and quantum queries. We say that an oracle algorithm $C^\oracle$ makes quantum queries if it can query $\oracle$ in superposition; this is akin to the standard query model. We say that $C^\oracle$ makes classical queries if, before every oracle call, the input qubits to the oracle are measured in the standard basis. In this case, the algorithm would be querying the oracle on a \emph{probabilistic mixture} of inputs. For clarity, we write $C^{\ket{\oracle}}$ to denote $C$ making quantum queries, and $C^\oracle$ to denote $C$ making classical queries.

A specific oracle that we consider throughout is the $\pspace$ oracle. What we mean by this is a sequence of functions $(\pspace_n)_{n \in \N}$ where for every $n$, the function $\pspace_n$ decides $n$-bit instances of a $\pspace$ complete language (such as Quantified Satisfiability~\cite{papadimitriou1994computational}).
\par We state the following observation. 

\begin{lemma}\label{lemma:Replace|PSPACE>}
Let $C^{\ket{\pspace}} = (C_x^{\ket{\pspace}})_{x \in \{0,1\}^*}$ denote a polynomial-time oracle algorithm (not necessarily uniform) that makes quantum queries to $\pspace$ and has one-bit classical output. Then there exists a polynomial-space algorithm $D = (D_x)_{x \in \{0,1\}^*}$ such that for all $x \in \{0,1\}^*$, $D_x$ is a unitary and the functionality of $C_x^{\ket{\pspace}}$ is exactly the same as introducing polynomial number of qubits initialized in the zero state, applying unitary $D_x$ and then measuring the first qubit in computational basis to get a classical output. Furthermore if $C$ is uniform, then $D$ is space-uniform.
\end{lemma}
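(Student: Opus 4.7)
The plan is to combine two standard constructions: a polynomial-space unitary implementation of the $\pspace$ oracle, together with the principle of deferred measurement applied to the rest of $C_x$. The statement is essentially a ``cleanup'' lemma: it says the combined effect of an efficient quantum machine with quantum access to $\pspace$ can be repackaged as a single polynomial-space unitary followed by one measurement.

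First I would handle the oracle calls. Because $\pspace_n$ is a language decidable by a deterministic polynomial-space Turing machine, for each input length $n$ (up to the polynomial bound on the size of $C_x$) I can construct a reversible classical circuit that computes the map $\ket{y}\ket{b} \mapsto \ket{y}\ket{b \oplus \pspace_n(y)}$ using polynomial space. The standard Bennett-style reversibilization of a deterministic polynomial-space machine does this with only polynomially many work qubits, which are uncomputed at the end of each oracle call so they can be reused across the polynomially many queries of $C_x$. Call the resulting unitary $U_n$. Replacing every oracle gate in $C_x$ by $U_n$ yields a new circuit $C_x'$ that uses only standard unitary gates plus the non-unitary operations (fresh ancillas, intermediate measurements, tracing out) of $C_x$, and whose total qubit count remains polynomial in $|x|$.

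Next I would apply the principle of deferred measurement to $C_x'$ (see, e.g., \cite[Section~4.4]{nielsen2000quantum}). Concretely: (a) collect all ancillas the circuit would ever introduce into a single initial $\ket{0\cdots 0}$ register; (b) replace each intermediate standard-basis measurement of a qubit by a $\mathit{CNOT}$ onto a fresh ancilla followed by leaving the original qubit untouched; (c) replace each ``trace out'' by simply leaving the qubit alone until the end. The resulting circuit $D_x$ is now a single unitary whose space usage is still polynomial (since we only add one fresh ancilla per measurement and per traced-out qubit, and $C_x$ has at most polynomially many such operations). By the deferred measurement principle, the distribution of the first output qubit of $C_x^{\ket{\pspace}}$ agrees exactly with the distribution obtained by running $D_x$ on $\ket{0\cdots 0}$ and measuring its first qubit.

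For the uniformity clause, observe that given the description of $C_x$ (producible in polynomial time, hence polynomial space, from $x$) I can in polynomial space: enumerate the gates of $C_x$; for each oracle gate, write down the description of the Bennett-reversibilized circuit for $\pspace_n$ (which is itself a polynomial-space computation from $1^n$); and perform the deferred-measurement rewrites, which are purely local syntactic transformations. Thus $D$ is space-uniform whenever $C$ is (time-)uniform. The only point that requires care is keeping the ancilla bookkeeping polynomial, and this is routine since each of the polynomially many query/measurement/trace operations contributes only polynomially many ancillas; I do not anticipate a genuine obstacle, and the main technical content is really just the oracle-replacement step, since the deferred-measurement rewriting is completely standard.
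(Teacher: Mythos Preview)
Your proposal is correct and follows essentially the same approach as the paper: replace each $\pspace$ oracle call by a reversible polynomial-space unitary (with uncomputation of intermediate results), then apply deferred measurement to push all non-unitary operations to the beginning (ancilla introduction) and end (final measurement), with space-uniformity following because each of these rewrites can be carried out by a polynomial-space Turing machine. The paper's proof is a one-paragraph sketch of exactly these two steps; your write-up simply supplies more detail.
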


\begin{proof}
 This follows because for a polynomial-time (oracle) algorithm, we can always introduce new qubits only at the beginning and defer measurements and tracing out to the end, and each oracle query in $C_x^{\ket{\pspace}}$ to the $\pspace$ oracle can be computed by first introducing several ancillas initialized in the zero state and then applying a unitary that implements the classical polynomial space algorithm for the PSPACE-complete language and uncomputes all the intermediate results. Furthermore, the description of the unitary can be generated by polynomial-space Turing machines.
\end{proof}

Finally, we will consider \emph{hybrid} oracles $\oracle$ that are composed of two separate oracles $\ro$ and the $\ket{\pspace}$ oracle. In this model, the oracle algorithm $C^\oracle$ makes classical queries to $\ro$, and quantum queries to $\pspace$. We abuse the notation and refer to algorithms having access to hybrid oracles as oracle algorithms. 

\paragraph{State Synthesis} 
We define the following ``state complexity class''. Intuitively it captures the set of quantum states that can be \emph{synthesized} by polynomial-space quantum algorithms.

\begin{definition}[$\statePSPACE$]\label{def:statePSPACE}
	$\statePSPACE$ is the class of all sequences $(\rho_x)_{x \in S}$ for some set $S \subseteq \{0,1\}^*$ (called the \emph{promise}) such that there is a polynomial $p$ where each $\rho_x$ is a density matrix on $p(|x|)$ qubits, and for every polynomial $q$ there exists a space-uniform polynomial-space quantum algorithm $C = (C_{x})_{x \in \{0, 1\}^*}$ such that for all $x \in S$, the circuit $C_x$ takes no inputs and outputs a density matrix $\sigma$ such that $\td(\sigma,\rho_x) \leq \exp(-q(\lvert x \rvert))$. 
	
	We say that the state sequence $(\rho_x)_{x \in S}$ is \emph{pure} if each $\rho_x$ is a pure state $\ketbra{\psi_x}{\psi_x}$; in that case we usually denote the sequence by $(\ket{\psi_x})_{x \in S}$.
\end{definition}

The following theorem says that, for $\statePSPACE$ sequences that are pure, there in fact is a polynomial-\emph{time} oracle algorithm that makes quantum queries to a $\pspace$ oracle to synthesize the state sequence. 

\begin{theorem}[Section 5 of \cite{IPforStates}]\label{GenerateStatePSPACE}
     Let $(\ket{\psi_{x}})_{x \in S}$ be a $\statePSPACE$ family of pure states. Then there exists a polynomial-time oracle algorithm $A^{\ket{\pspace}}$ such that on input $x \in S$, the algorithm outputs a pure state that is $\exp(-\lvert x \rvert)$-close in trace distance to $\ket{\psi_{x}}$. 
\end{theorem}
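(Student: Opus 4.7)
The plan is to use the $\ket{\pspace}$ oracle to compute amplitudes of $\ket{\psi_x}$ in the computational basis to inverse-exponential precision, and then to assemble the state via a Grover--Rudolph-style sequential preparation.

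First, I would invoke the $\statePSPACE$ hypothesis to obtain, for any target inverse-exponential precision $2^{-q(|x|)}$, a space-uniform polynomial-space algorithm $C_x$ producing a density matrix $\sigma_x$ with $\td(\sigma_x, \ketbra{\psi_x}{\psi_x}) \leq 2^{-q(|x|)}$. Since polynomial-space quantum computations (even with intermediate measurements) can be simulated classically in polynomial space via Feynman path-integral evaluation, any individual matrix entry $\langle y | \sigma_x | y'\rangle$ for $y, y' \in \{0,1\}^{p(|x|)}$ is computable to within any $2^{-r(|x|)}$ additive error by a classical polynomial-space Turing machine, and hence by one classical query to the $\pspace$ oracle. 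To extract the actual amplitudes $\alpha_y := \langle y | \psi_x\rangle$, I would locate a string $y^*$ with $|\alpha_{y^*}|^2 \geq 2^{-p(|x|)}$ (which exists by normalization and can be found by $\pspace$ queries) and fix the global phase so that $\alpha_{y^*}$ is real and positive; every other amplitude then satisfies $\alpha_y = \langle y| \psi_x\rangle\langle \psi_x | y^*\rangle / \alpha_{y^*}$, a ratio of entries of $\ketbra{\psi_x}{\psi_x}$, and is therefore likewise computable to inverse-exponential precision via the entry oracle above.

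Given this implicit amplitude oracle, I would perform a Grover--Rudolph sequential preparation on $p(|x|)$ fresh qubits. On qubit $k$, conditioned on the already-prepared prefix $y_1 \cdots y_{k-1}$, apply a single-qubit rotation whose angle is derived from the conditional probability $\sum_{y_{k+1},\ldots}|\alpha_{y_1\cdots y_{k-1} 1 y_{k+1}\cdots}|^2 / \sum_{y_k, y_{k+1},\ldots}|\alpha_{y_1 \cdots y_{k-1} y_k y_{k+1}\cdots}|^2$; after all $p(|x|)$ qubits are prepared, apply a diagonal unitary sending $\ket{y}$ to $e^{i\arg(\alpha_y)}\ket{y}$. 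Each conditional probability and each phase is polynomial-space computable from the amplitude approximations, so every rotation can be implemented by a single call to $\ket{\pspace}$ that writes the angle into an ancilla register, performs the controlled rotation, and then uncomputes the ancilla. Since only unitary operations are applied after the amplitudes are read, the output register holds a pure state. The total cost is $\mathrm{poly}(|x|)$ gates and queries.

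The main obstacle is controlling the propagation of errors: composing $p(|x|)$ controlled rotations plus one diagonal phase unitary spread over $2^{p(|x|)}$ basis states compounds both the $\statePSPACE$ approximation error and the angle-truncation error in each rotation. The fix is to call the $\statePSPACE$ definition with a sufficiently large polynomial $q(|x|) = \Omega(|x| + p(|x|))$, read each angle out to a correspondingly large number of bits, and choose the reference string $y^*$ to have nontrivial weight so that division by $\alpha_{y^*}$ amplifies errors only polynomially. Once the precision is fixed, the analysis is routine: truncating a single angle to $r$ bits contributes $O(2^{-r})$ operator-norm error per gate, and these errors add subadditively across the $\mathrm{poly}(|x|)$ gates, which is comfortably enough to bring the final state within $\exp(-|x|)$ trace distance of $\ket{\psi_x}$.
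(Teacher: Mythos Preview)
The paper does not supply its own proof of this theorem; it is quoted verbatim as a black-box result from Section~5 of \cite{IPforStates} (Rosenthal--Yuen), so there is no in-paper argument to compare against.

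That said, your sketch is a correct proof and is in the same spirit as the argument in the cited reference. The key observations you use are all valid: entries of the density matrix produced by a polynomial-space quantum circuit are computable to inverse-exponential precision by a classical polynomial-space machine (hence by a $\pspace$ query); a reference basis string $y^*$ of weight at least $2^{-p(|x|)}$ exists and can be located in $\pspace$, which lets you recover individual amplitudes $\alpha_y$ (up to a fixed global phase) from the off-diagonal entries $\langle y|\sigma_x|y^*\rangle$; and the conditional marginals needed for the Grover--Rudolph preparation are themselves partial traces of $\sigma_x$, hence also $\pspace$-computable. Loading each angle or phase via a quantum $\ket{\pspace}$ query into an ancilla, rotating, and uncomputing keeps the output pure, and the error budget you outline (choose the $\statePSPACE$ precision $q$ and the angle bit-length to dominate the $2^{p(|x|)/2}$ blowup from dividing by $\alpha_{y^*}$ and the $\mathrm{poly}(|x|)$ gate count) is the right accounting. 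One small point worth making explicit: the ``find $y^*$'' step should return a single canonical string (e.g.\ the lexicographically first one meeting the threshold) so that the phase convention is well-defined and the subsequent queries are consistent; this is immediate since the search is a $\pspace$ predicate.
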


\subsection{Public Key Quantum Money Schemes}
\label{sec:def:pkqm}

\begin{definition}[Oracle-aided Public Key Quantum Money Schemes]
A \emph{oracle-aided public key quantum money scheme} $\mathcal{S}^\mathcal{O}$ consists of three uniform polynomial-time oracle algorithms $\qmalgs$:
\begin{itemize}
    \item $\keygen^{\oracle}(1^n)$: takes as input a security parameter $n$ in unary notation and generates secret key-public key pair $(sk, pk)$.
    \item $\mint^{\oracle}(sk)$: takes as input $sk$ and mints banknote $\rho_s$ associated with the serial number $s$.
    \item $\verify^{\oracle}(pk,(s,\rho_s))$: takes as inputs $pk$ and an alleged banknote $(s, \rho_s)$ and outputs $\rho'_s \otimes \ketbra{\bfx}{\bfx}$, where $\bfx \in \{\accept,\reject\}$.
\end{itemize}
\noindent For simplicity, when we don't care about the output $\rho_s'$ in $\verify^{\oracle}$, we sometimes denote the event that $\rho'_s \otimes \ketbra{\accept}{\accept}  \leftarrow \verify^{\oracle}(pk,(s,\rho_s))$ as $\verify^{\oracle}(pk,(s,\rho_s))$ accepts.
\end{definition}

\noindent We require the above oracle-aided public key quantum money scheme to satisfy both correctness and security properties.

\subsubsection{Correctness} 
We first consider the traditional definition of correctness considered by prior works. Roughly speaking, correctness states that the verification algorithm accepts the money state produced by the minting algorithm. Later, we consider a stronger notion called reusability which stipulates that the verification process on a valid money outputs another valid money state (not necessarily the same as before).

\begin{definition}[Correctness]
An oracle-aided public key quantum money scheme $\qmalgs$ is \textbf{$\delta$-correct} if the following holds for every $n \in \mathbb{N}$:
$$\prob \left[ \rho'_s \otimes \ketbra{\accept}{\accept}  \leftarrow \verify^{\oracle}(pk,(s,\rho_s))\ :\ \substack{(sk,pk) \leftarrow  \keygen^{\oracle}(1^n)\\ (s,\rho_s) \leftarrow \mint^{\oracle}(sk)} \right] \geq \delta,$$
where the probability is also over the randomness of $\oracle$.

We omit $\delta$ when $\delta \geq 1 - \negl(n)$.
\end{definition}

\paragraph{Reusability.} In this work, we consider quantum money schemes satisfying the stronger notion of reusability.

\begin{definition}[Reusability] \label{def:reusable}
  An oracle-aided public key quantum money scheme $\qmalgs$ is \textbf{$\delta$-reusable} if the following holds for every $n \in \mathbb{N}$ and for every polynomial $q(n)$:
  $$\prob \left[ \rho_s^{(q(n))} \otimes \ketbra{\accept}{\accept}  \leftarrow \verify^{\oracle}(pk,(s,\rho_s^{(q(n)-1)}))\ :\ \substack{(sk,pk) \leftarrow  \keygen^{\oracle}(1^n)\\ \ \\ (s,\rho_s^{(0)}) \leftarrow \mint^{\oracle}(sk)\\ \ \\\forall i \in [q(n)-1],\ \rho_s^{(i+1)} \otimes \ketbra{\bfx}{\bfx} \leftarrow \verify^{\oracle}(pk,(s,\rho_s^{(i)}))} \right] \geq \delta,$$
  where the probability is also over the randomness of $\oracle$.
  
We omit $\delta$ when $\delta \geq 1 - \negl(n)$.
\end{definition}

\noindent In general, gentle measurement lemma~\cite{Win99} can be invoked to prove that correctness generically implies reusability. However, this is not the case in our context. The reason being that the verification algorithm performs intermediate measurements whenever it makes classical queries to an oracle and these measurements cannot be deferred to the end. 

\subsubsection{Security} 

We consider the following security notion. Basically, it says that no efficient adversary can produce two alleged banknotes from one valid banknote with the same serial number.

\begin{definition}[Security]
An oracle-aided public key quantum money scheme $\qmalgs$ is \textbf{$\delta$-secure} if the following holds for every $n \in \mathbb{N}$ and for every uniform polynomial-time oracle algorithm $\adversary^{\oracle}$:
$$\prob \left[ \verify^{\oracle}(pk,(s,\phi_1)) \text{ accepts and } \verify^{\oracle}(pk,(s,\phi_2)) \text{ accepts}\ :\ \substack{(sk,pk) \leftarrow  \keygen^{\oracle}(1^n)\\ (s,\rho_s) \leftarrow \mint^{\oracle}(sk) \\ \phi \leftarrow \adversary^{\oracle}(pk, (s, \rho_s))} \right] \leq \delta,$$
where the probability is also over the randomness of $\oracle$. By $\phi_i$, we mean the reduced density matrix of $\phi$ on the $i^{th}$ register.

We omit $\delta$ when $\delta \leq \negl(n)$.
\end{definition}

\subsection{Jordan's Lemma and Alternating Projections}\label{subsect:MWtechnique}

In this section, we analyze {\em{alternating projection algorithm}}, a tool for estimating the acceptance of the verification algorithm on a state with only one copy of that state, which was introduced by Marriott and Watrous~\cite{marriott} for witness-preserving error reduction. This section follows section~4.1 in \cite{Rewinding} mostly.

For two binary-outcome projective measurements $M_1 = \{\Pi_1, I - \Pi_1\}, M_2 = \{\Pi_2, I - \Pi_2\}$, the alternating projection algorithm applies measurements $M_1$ and $M_2$ alternatively ($\Pi_1, \Pi_2$ corresponds to outcome $1$) until a stopping condition is met. The following lemma can help us analyze the distribution of the outcomes by decomposing it into several small subspaces.

\begin{lemma}[Jordan's Lemma]\label{Lemma:Jordan}
For any two projectors $\Pi_1$, $\Pi_2$, there exists an orthogonal decomposition of the Hilbert space into one-dimensional and two-dimensional subspaces $S_i$ that are invariant under both $\Pi_1$ and $\Pi_2$. Moreover, if $S_i$ is a one-dimensional subspace, then $\Pi_1$ and $\Pi_2$ act as identity or rank-zero projectors inside $S_i$. If $S_i$ is a two-dimensional subspace, then $\Pi_1$ and $\Pi_2$ are rank-one projectors inside $S_i$. To be more specific, there are two unit vectors $\ket{v_i}$ and $\ket{w_i}$ such that inside $S_i$, $\Pi_1$ projects on $\ket{v_i}$ and $\Pi_2$ projects on $\ket{w_i}$.
\end{lemma}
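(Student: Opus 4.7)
The plan is to exhibit the decomposition via a spectral analysis of the positive semidefinite Hermitian operator $A = \Pi_1 \Pi_2 \Pi_1$. Since $A$ vanishes on $\ker(\Pi_1)$, I would restrict attention to $\mathrm{range}(\Pi_1)$, where $A$ is self-adjoint with spectrum contained in $[0,1]$ (because $\|A\| \leq \|\Pi_1\|\,\|\Pi_2\|\,\|\Pi_1\| \leq 1$). Diagonalize $A$ restricted to $\mathrm{range}(\Pi_1)$ as $A = \sum_i \lambda_i \ketbra{v_i}{v_i}$ with the $\ket{v_i}$ forming an orthonormal basis of $\mathrm{range}(\Pi_1)$, so each $\ket{v_i}$ automatically satisfies $\Pi_1\ket{v_i} = \ket{v_i}$. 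These $\ket{v_i}$ will be the ``$\Pi_1$-side'' vectors of the Jordan subspaces; I would then attach to each $\ket{v_i}$ a partner vector built from $\Pi_2\ket{v_i}$.

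Next I would split into three cases according to the eigenvalue $\lambda_i$. If $\lambda_i = 0$, then $\|\Pi_2\ket{v_i}\|^2 = \bra{v_i}A\ket{v_i} = 0$, so $\Pi_2\ket{v_i} = 0$ and $\mathrm{span}\{\ket{v_i}\}$ is a one-dimensional invariant subspace on which $\Pi_1$ acts as identity and $\Pi_2$ as zero. If $\lambda_i = 1$, then $\|\Pi_2\ket{v_i}\| = 1$, which forces $\Pi_2\ket{v_i} = \ket{v_i}$, so again $\mathrm{span}\{\ket{v_i}\}$ is one-dimensional and both projectors act as identity. If $0 < \lambda_i < 1$, define $\ket{w_i} = \Pi_2\ket{v_i}/\sqrt{\lambda_i}$; this is a unit vector not parallel to $\ket{v_i}$, and $S_i = \mathrm{span}\{\ket{v_i}, \ket{w_i}\}$ is two-dimensional. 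Inside $S_i$, $\Pi_2$ is the rank-one projector onto $\ket{w_i}$ (using $\Pi_2^2 = \Pi_2$ to verify $\Pi_2\ket{w_i} = \ket{w_i}$), and $\Pi_1$ projects onto $\ket{v_i}$ (using the identity $\Pi_1\ket{w_i} = \Pi_1\Pi_2\ket{v_i}/\sqrt{\lambda_i} = A\ket{v_i}/\sqrt{\lambda_i} = \sqrt{\lambda_i}\ket{v_i}$ and a symmetric check).

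The main technical point, and where I would be most careful, is showing that the union of these subspaces is mutually orthogonal, invariant under both projectors, and covers the entire Hilbert space. Orthogonality between distinct $S_i$'s follows from orthogonality of distinct eigenspaces of $A$ together with the calculation $\ip{v_i}{w_j} = \ip{v_i}{\Pi_2 v_j}/\sqrt{\lambda_j} = \bra{v_i}A\ket{v_j}/(\sqrt{\lambda_j}\cdot\text{something})$, which vanishes when $i \neq j$ by the spectral decomposition (and similarly $\ip{w_i}{w_j} = \bra{v_i}\Pi_2\ket{v_j}/\sqrt{\lambda_i\lambda_j} = 0$). Invariance is immediate from the case analysis. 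For completeness, I would repeat the entire construction with the roles of $\Pi_1$ and $\Pi_2$ swapped to handle $\ker(\Pi_1)$: apply the same argument to $B = \Pi_2\Pi_1\Pi_2$ restricted to $\mathrm{range}(\Pi_2) \cap \ker(\Pi_1)$, obtaining more one-dimensional subspaces where $\Pi_2$ acts as identity and $\Pi_1$ as zero; finally, the subspace $\ker(\Pi_1)\cap\ker(\Pi_2)$ contributes one-dimensional subspaces (any orthonormal basis of it) where both projectors act as zero. Checking that these three families together span $\mathcal{H}$ orthogonally is a dimension-counting exercise, which is the most delicate bookkeeping in the argument but follows once the three orthogonal pieces $\mathrm{range}(\Pi_1)$, $\mathrm{range}(\Pi_2) \cap \ker(\Pi_1)$, and $\ker(\Pi_1) \cap \ker(\Pi_2)$ are identified as providing a complete orthogonal decomposition.
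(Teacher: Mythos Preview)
The paper does not prove Jordan's Lemma; it is stated as a standard tool without proof. So there is nothing to compare against, and I evaluate your argument on its own.

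Your spectral approach via $A = \Pi_1\Pi_2\Pi_1$ is the standard one, and the construction of the subspaces $S_i$ from eigenvectors $\ket{v_i}$ of $A$ on $\mathrm{range}(\Pi_1)$, together with the three-case analysis ($\lambda_i = 0$, $\lambda_i = 1$, $0 < \lambda_i < 1$), is correct. The orthogonality and invariance checks you sketch are also fine.

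The gap is in your completeness argument. You assert that $\mathrm{range}(\Pi_1)$, $\mathrm{range}(\Pi_2)\cap\ker(\Pi_1)$, and $\ker(\Pi_1)\cap\ker(\Pi_2)$ give a complete orthogonal decomposition of $\mathcal{H}$. This is false. Take $\mathcal{H} = \mathbb{C}^2$, $\Pi_1 = \ketbra{0}{0}$, $\Pi_2 = \ketbra{+}{+}$: then $\mathrm{range}(\Pi_1)$ is one-dimensional while both intersections $\mathrm{range}(\Pi_2)\cap\ker(\Pi_1)$ and $\ker(\Pi_1)\cap\ker(\Pi_2)$ are zero, so the three pieces together span only a one-dimensional subspace. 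The point you are missing is that the two-dimensional $S_i$'s already reach into $\ker(\Pi_1)$: the vector in $S_i$ orthogonal to $\ket{v_i}$ lies in $\ker(\Pi_1)$ but typically in neither $\mathrm{range}(\Pi_2)$ nor $\ker(\Pi_2)$.

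The clean fix is to drop the three-piece claim entirely. Let $W$ be the orthogonal complement of $\bigoplus_i S_i$. Since the $\ket{v_i}$ span $\mathrm{range}(\Pi_1)$, we have $W \subseteq \ker(\Pi_1)$, so $\Pi_1|_W = 0$. For $\Pi_2$-invariance, take $\ket{x} \in W$: then $\ip{v_j}{\Pi_2 x} = \sqrt{\lambda_j}\ip{w_j}{x} = 0$ and $\ip{w_j}{\Pi_2 x} = \ip{w_j}{x} = 0$, so $\Pi_2\ket{x} \perp S_j$ for every $j$, hence $\Pi_2\ket{x} \in W$. Now $\Pi_2|_W$ is a projector; diagonalize it to obtain the remaining one-dimensional subspaces, on each of which $\Pi_1$ acts as zero and $\Pi_2$ as identity or zero. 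This replaces your swapped-role construction and completes the proof without any dimension counting.
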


Let measurement $M_{\mathsf{Jor}} = \{P_i\}_i$ where $P_i$ is the projector onto the subspace $S_i$ defined above. Then both $M_1$ and $M_2$ commute with $M_{\mathsf{Jor}}$. Therefore the distribution of outcomes of each $M_1$ and $M_2$ will not change if we insert $M_{\mathsf{Jor}}$ at any point of the alternating projections. We can analyze the distribution of outcome sequence by first applying $M_{\mathsf{Jor}}$ and then applying $M_1$, $M_2$ alternatively. For each two-dimensional subspace $S_i$, denote $p_i :=\lvert\braket{v_i\vert w_i}\rvert^2$. (This can be seen as a quantity that measures the angle between $\Pi_1$ and $\Pi_2$ inside $S_i$.)

For now, let's assume that there are only two-dimensional subspaces in the decomposition. The general case where there exist one-dimensional subspaces is essentially the same and can be handled similarly. Then, $\Pi_1 = \sum_i \ketbra{v_i}{v_i}, \Pi_2 = \sum_i \ketbra{w_i}{w_i}$.\footnote{Generally, for each one-dimensional subspace $S_i$ on which $\Pi_1$ acts as identity, we can set $\ket{v_i}$ to be the vector that spaces $S_i$. Let $A$ be the set of index $i$ such that $\Pi_1$ is not a rank-zero projector inside $S_i$. Then $\Pi_1 = \sum_{i \in A} \ketbra{v_i}{v_i}$. Similarly $\Pi_2 = \sum_{i \in B}\ketbra{w_i}{w_i}$ where $B$ and $\ket{w_i}$ are defined in a similar way.}

Now let's state a result first proven in \cite{marriott} and restated in many later works (e.g., \cite{Rewinding}). 
\begin{proposition}\label{proposition:distribution}
If initially the state is $\ket{v_i}$\footnote{The same holds for each $\ket{v_i} (i \in A)$ generally where we define $p_i = 1$ if $\Pi_1$ and $\Pi_2$ act both as identity in subspace $S_i$ and we define $p_i = 0$ if $\Pi_1$ acts as identity while $\Pi_2$ acts as zero-projector in subspace $S_i$.} and we apply $M_2$, $M_1$ alternatively for $N$ times, then the outcome sequence $b_1, b_2, b_3, \cdots, b_{2N}$ will follow the distribution below
\begin{enumerate}
    \item Set $b_0 = 1$ (because applying $M_1$ to $\ket{v_i}$ will give outcome 1).
    \item For each $j$, we set $b_{j}$ to be $b_{j - 1}$ with probability $p_i$, and $1 - b_{j - 1}$ otherwise.
\end{enumerate}

\noindent Moreover, whenever we measure $M_1$ and get outcome 1, we will go back to state $\ket{v_i}$.
\end{proposition}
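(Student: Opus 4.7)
The plan is to restrict attention to a single invariant two-dimensional subspace $S_i$ guaranteed by Jordan's Lemma and track the post-measurement state through each application of $M_1$ and $M_2$. Since $\Pi_1$ and $\Pi_2$ both preserve $S_i$, and the algorithm starts from $\ket{v_i} \in S_i$, all intermediate states remain in $S_i$ throughout the alternating projection procedure. Inside $S_i$, I will fix the orthonormal basis $\{\ket{v_i}, \ket{v_i^\perp}\}$ that diagonalizes $\Pi_1$ and the orthonormal basis $\{\ket{w_i}, \ket{w_i^\perp}\}$ that diagonalizes $\Pi_2$, and I will use the elementary geometric identity that in a two-dimensional Hilbert space $\lvert\braket{v_i \vert w_i}\rvert^2 = \lvert\braket{v_i^\perp \vert w_i^\perp}\rvert^2 = p_i$ while $\lvert\braket{v_i \vert w_i^\perp}\rvert^2 = \lvert\braket{v_i^\perp \vert w_i}\rvert^2 = 1 - p_i$.

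Next I will prove by induction on $j$ that after the $j$-th measurement the state is exactly one of $\ket{v_i}, \ket{v_i^\perp}, \ket{w_i}, \ket{w_i^\perp}$, with the identity of the state determined by which of $M_1$ or $M_2$ was just applied and by the outcome bit $b_j$: outcome $1$ from $M_1$ collapses to $\ket{v_i}$ and outcome $0$ from $M_1$ collapses to $\ket{v_i^\perp}$, and analogously for $M_2$. A single application of the Born rule to each of the four possible transitions then yields the same conditional probability: the probability that the incoming outcome coincides with the one already recorded is always $p_i$, and the probability that it flips is $1 - p_i$. This is precisely the Markov rule described in the proposition, and the starting condition $b_0 = 1$ is consistent because $\Pi_1 \ket{v_i} = \ket{v_i}$. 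The ``moreover'' clause falls out for free from the same case analysis, since outcome $1$ from $M_1$ leaves the post-measurement state proportional to $\Pi_1(\cdot) = \ketbra{v_i}{v_i}(\cdot)$, which is $\ket{v_i}$ regardless of the prior history.

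For completeness I will remark on the one-dimensional Jordan blocks: if $S_i$ is one-dimensional and both $\Pi_1, \Pi_2$ act as identity on it, the convention $p_i = 1$ makes the outcomes deterministically constant as claimed; if one of them is the zero projector, the convention $p_i = 0$ makes the outcomes deterministically alternate; and if we start from $\ket{v_i}$ in such a block, the invariant $b_0 = 1$ is consistent with these conventions. I do not expect a substantial obstacle here: once Jordan's Lemma reduces the analysis to a $2 \times 2$ block, the proof is essentially a bookkeeping exercise in the four possible post-measurement states. The only mildly subtle point is observing that in a two-dimensional space the ``complementary'' inner product $\lvert\braket{v_i^\perp \vert w_i^\perp}\rvert^2$ equals $\lvert\braket{v_i \vert w_i}\rvert^2$, which is what makes the transition probabilities depend only on whether the outcome flips and not on which of the four intermediate states we currently occupy.
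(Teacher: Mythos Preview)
Your proof is correct and follows the standard argument: restrict to the invariant two-dimensional Jordan block, track the four possible post-measurement states $\ket{v_i}, \ket{v_i^\perp}, \ket{w_i}, \ket{w_i^\perp}$, and use the Born rule together with the identity $\lvert\braket{v_i \vert w_i}\rvert^2 = \lvert\braket{v_i^\perp \vert w_i^\perp}\rvert^2$ in two dimensions. The paper, however, does not prove this proposition at all; it simply states it as a known result, citing Marriott and Watrous \cite{marriott} (where it was first proven) and the rewinding literature (e.g., \cite{Rewinding}) where it has been restated. So there is nothing to compare against beyond noting that your writeup supplies exactly the elementary proof the paper chose to outsource to references.
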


Then the fraction of bit-flips in the outcome sequence will be a good estimation of $1 - p_i$ if we start from $\ket{v_i}$.

\subsection{Compressed Oracle Techniques}\label{sec:prelims:compressed_oracle}
In this section, we present some basics of compressed oracle techniques introduced by Zhandry~\cite{zhandryCompressedOracle}.

For a quantum query algorithm $A$ interacting with a random oracle, let's assume that $A$ only queries the random oracle with $n$-bit input and gets $1$-bit output for simplicity. By the deferred measurement principle, without loss of generality we can write $A$ in the form of a sequence of unitaries $U_0, U_f, U_1, U_f, \cdots, U_{k - 1}, U_f, U_k$ where $U_i$ is the unitary that prepares the ${(i + 1)^{th}}$ query of $A$ to $R$ and $U_f$ maps $\ket{x}\ket{y}$ to $\ket{x}\ket{y \oplus f(x)}$ where $f$ is the chosen random function from all the functions with $n$-bit input and $1$-bit output. 

Then the behavior of $A$ when it is interacting with a random oracle can be analyzed in the following \emph{purified view}: 
\begin{itemize}
    \item Initialize register $\reg{A}$ to be the input for $A$ (along with enough ancillas $\ket{0}$) and initialize register $\reg{F}$ to be a uniform superposition of the truth tables of all functions from $[2^n]$ to $\{0, 1\}$ (to be more specific, $\reg{F}$ is initialized to $\frac{1}{\sqrt{2^{2^n}}}\sum_{f \text{ is a $2^n$-bit string}}\ket{f}$ where $\ket{f} = \ket{f(0)}\ket{f(1)}\cdots \ket{f(2^n - 1)}$ and $\ket{f(i)}$ consists of one qubit).
    \item Apply $U_0, U_F, U_1, U_F, \cdots, U_{k - 1}, U_F, U_{k}$ where $U_i$ is acting on $\reg{A}$ and $U_F$ maps $\ket{x}\ket{y}\ket{f}_{\reg{F}}$ to $\ket{x}\ket{y \oplus f(x)}\ket{f}_{\reg{F}}$.
\end{itemize}

In fact, the output (mixed) state of $A$ (we also take the randomness of $f$ into account) equals to the reduced density matrix on the output register of the state we obtain from the above procedure as $U_i, U_F$ commutes with computational basis measurement on $\reg{F}$. More generally, the output (mixed) state of a sequence of algorithms with access to random oracle can also be analyzed in the same way.

\begin{definition}[Fourier basis]
$\ket{\hat{0}} := \ket{+} = \frac{1}{\sqrt{2}}\left(\ket{0} + \ket{1}\right)$.
$\ket{\hat{1}} := \ket{-} = \frac{1}{\sqrt{2}}\left(\ket{0} - \ket{1}\right)$.

One can easily check that $\{\ket{\hat{0}}, \ket{\hat{1}}\}$ is a basis because it's just the result of applying hermitian matrix $H$ to $\ket{0}, \ket{1}$. We call this basis as \emph{Fourier basis}.
\end{definition}

The following fact is simple and easy to check, but crucial in compressed oracle techniques. Roughly speaking, it says that if we see $\mathit{CNOT}$ in Fourier basis, its control bit and target bit swaps.

\begin{fact}\label{fact:FourierBasis}
The operator defined by $\ket{y}\ket{y'} \rightarrow \ket{y \oplus y'}\ket{y'}$ for all $y, y' \in \{0, 1\}$ is the same as the operator defined by $\ket{\widehat{y}}\ket{\widehat{y'}} \rightarrow \ket{\hat{y}}\ket{\widehat{y' \oplus y}}$ for all $y, y' \in \{0, 1\}$.
\end{fact}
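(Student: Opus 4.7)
The plan is to verify the claimed identity by direct computation on basis states and then appeal to linearity. Call the first operator $U$, so $U\ket{a}\ket{b} = \ket{a \oplus b}\ket{b}$ in the computational basis. Since the Fourier basis $\{\ket{\hat{y}}\ket{\hat{y'}}\}_{y,y' \in \{0,1\}}$ spans the two-qubit space, it suffices to compute $U\ket{\hat{y}}\ket{\hat{y'}}$ and check that it equals $\ket{\hat{y}}\ket{\widehat{y' \oplus y}}$.

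First I would expand
\begin{equation*}
\ket{\hat{y}}\ket{\hat{y'}} \;=\; \frac{1}{2}\sum_{a,b \in \{0,1\}} (-1)^{ya + y'b}\,\ket{a}\ket{b},
\end{equation*}
apply $U$ termwise to get $\frac{1}{2}\sum_{a,b}(-1)^{ya+y'b}\ket{a\oplus b}\ket{b}$, and then change variables $c = a \oplus b$ (so $a = b \oplus c$). The phase becomes $(-1)^{y(b\oplus c) + y'b} = (-1)^{yc}\,(-1)^{(y \oplus y')b}$, where the product rule on $\pm 1$ phases corresponds to addition mod $2$ in the exponents. The sum then factors as
\begin{equation*}
U\ket{\hat{y}}\ket{\hat{y'}} \;=\; \Bigl(\tfrac{1}{\sqrt{2}}\sum_{c}(-1)^{yc}\ket{c}\Bigr)\Bigl(\tfrac{1}{\sqrt{2}}\sum_{b}(-1)^{(y\oplus y')b}\ket{b}\Bigr) \;=\; \ket{\hat{y}}\ket{\widehat{y' \oplus y}},
\end{equation*}
which is exactly the action of the second operator on $\ket{\hat{y}}\ket{\hat{y'}}$. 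Since both operators agree on a basis, they coincide as linear maps on the whole two-qubit space.

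There is no real obstacle — this is an elementary basis change, and the only thing to be careful about is the bookkeeping of the phases when rewriting $(-1)^{y(b\oplus c)}$ as $(-1)^{yb}(-1)^{yc}$. One sanity check worth including is that this identity is just the standard fact that conjugating a $\mathit{CNOT}$ by $H \otimes H$ swaps the roles of control and target, so the computation above is really a rederivation of that well-known identity in the notation of the paper.
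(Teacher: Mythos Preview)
Your proof is correct; the paper does not actually give a proof of this fact, merely stating that it is ``simple and easy to check,'' so your direct basis computation is precisely the verification the paper leaves to the reader. The sanity check you mention (that this is the $H\otimes H$ conjugation of $\mathit{CNOT}$ swapping control and target) is also the intuition the paper gives informally just before stating the fact.
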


By \Cref{fact:FourierBasis}, when we look at the last two registers in Fourier basis, $U_F$ becomes 
$$\ket{x}\ket{\widehat{y}}\ket{\widehat{y_0}} \ket{\widehat{y_1}} \cdots \ket{\widehat{y_{2^n - 1}}} \rightarrow \ket{x}\ket{\widehat{y}}\ket{\widehat{y_0}} \ket{\widehat{y_1}} \cdots \ket{\widehat{y_{x - 1}}} \ket{\widehat{y_x \oplus y}} \ket{\widehat{y_{x + 1}}} \cdots \ket{\widehat{y_{2^n - 1}}}.$$

Initially, $\reg{F}$ is $\ket{\hat{0}}\ket{\hat{0}}\cdots\ket{\hat{0}}$ and each call of $U_F$ only changes one position if we look at the last two registers in Fourier basis. So after $k$ calls of $U_F$, the state can be written as $$\sum_{\substack{a,  y_0, y_1, \cdots, y_{2^n - 1}\\ \text{such that there are at most $k$ non-zero} \\ \text{in  $y_0, y_1, \cdots, y_{2^n - 1}$}}}\alpha_{a, y_0, y_1, \cdots, y_{2^n - 1}}\ket{a}_{\reg{A}}\ket{\widehat{y_0}} \ket{\widehat{y_1}} \cdots \ket{\widehat{y_{2^n - 1}}}.$$

We can record those non-$\hat{0}$ into a database. To be more specific, there exists a unitary that maps those $\ket{\widehat{y_0}} \ket{\widehat{y_1}} \cdots \ket{\widehat{y_{2^n - 1}}}$ (perhaps along with some ancillas) to a database $\ket{x_1}\ket{\widehat{y_{x_1}}} \cdots \ket{x_l}\ket{\widehat{y_{x_l}}}$ (perhaps along with some unused space) where $x_1 < x_2 < \cdots < x_l, \widehat{y_{x_i}} \ne \hat{0}$ and $l \le k$. That is, there exists a unitary that can compress the oracle. Furthermore, the inverse of the unitary can decompress the database back to the oracle.

\paragraph{Chernoff bound} Finally, we state here a variant of the Chernoff bound that we will use.

\begin{theorem}[Chernoff Bound]
Suppose $X_1, X_2, \cdots, X_n$ are independent random variables taking values from $\{0, 1\}$ such that each $X_i = 1$ with probability $p$. Let $\mu = pn$. Then for any $\delta > 0$,
$$\prob\left[\sum_{i = 1}^n X_i \ge (1 + \delta)\mu\right] \le e^{-\delta^2\mu / (2 + \delta)},$$
$$\prob\left[\sum_{i = 1}^n X_i \le (1 - \delta)\mu\right] \le e^{-\delta^2\mu/2}.$$
\end{theorem}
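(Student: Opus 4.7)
The plan is to prove both tails via the standard moment-generating-function (Chernoff) argument, then to convert the resulting ``raw'' Chernoff bound into the stated form. Throughout, let $S = \sum_{i=1}^n X_i$, so $S$ is a sum of independent Bernoulli$(p)$ variables with mean $\mu = pn$.

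For the upper tail, I would fix any $t > 0$ and apply Markov's inequality to $e^{tS}$:
\[
\prob[S \geq (1+\delta)\mu] \;=\; \prob[e^{tS} \geq e^{t(1+\delta)\mu}] \;\leq\; \frac{\E[e^{tS}]}{e^{t(1+\delta)\mu}}.
\]
By independence, $\E[e^{tS}] = \prod_i \E[e^{tX_i}] = (1 - p + pe^t)^n = (1 + p(e^t-1))^n$, and the elementary inequality $1 + x \leq e^x$ gives $\E[e^{tS}] \leq e^{\mu(e^t - 1)}$. Optimizing the right-hand side of Markov at $t = \ln(1+\delta)$ yields the raw Chernoff form
\[
\prob[S \geq (1+\delta)\mu] \;\leq\; \left(\frac{e^{\delta}}{(1+\delta)^{1+\delta}}\right)^{\mu}.
\]
For the lower tail, the same argument with $t < 0$ (optimized at $t = \ln(1-\delta)$) gives $\prob[S \leq (1-\delta)\mu] \leq \big(e^{-\delta}/(1-\delta)^{1-\delta}\big)^{\mu}$ for $\delta \in (0,1)$.

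The remaining step, which is the most tedious part, is the purely analytic task of converting these raw bounds into the form stated in the theorem. For the upper tail, taking logarithms reduces the claim $\big(e^{\delta}/(1+\delta)^{1+\delta}\big)^{\mu} \leq e^{-\delta^2\mu/(2+\delta)}$ to showing that
\[
g(\delta) \;:=\; \delta - (1+\delta)\ln(1+\delta) + \tfrac{\delta^2}{2+\delta} \;\leq\; 0 \qquad \text{for all } \delta > 0.
\]
I would verify this by noting $g(0) = 0$ and then differentiating: a direct computation shows $g'(\delta) = \tfrac{\delta(4 + \delta)}{(2+\delta)^2} - \ln(1+\delta)$, and a second derivative check (or a Taylor expansion of $\ln(1+\delta)$) establishes $g'(\delta) \leq 0$ for $\delta \geq 0$. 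For the lower tail, the analogous step is to show $-\delta - (1-\delta)\ln(1-\delta) \leq -\delta^2/2$, which follows from the Taylor expansion $(1-\delta)\ln(1-\delta) = -\delta + \sum_{k\geq 2} \delta^k/(k(k-1))$ and keeping only the $k=2$ term.

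The main obstacle is really just this last calculus step: the MGF/Markov skeleton is completely mechanical, but matching the precise denominators $2+\delta$ and $2$ in the exponents requires a careful monotonicity argument rather than a one-line bound. Since the theorem is a textbook result, in the paper itself I would simply cite a standard reference (e.g., Mitzenmacher--Upfal) rather than reproduce these computations, but the sketch above is what a self-contained proof would look like.
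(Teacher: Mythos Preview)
Your proof sketch is correct and is the standard MGF/Markov argument; the calculus verifications you outline (in particular $g''(\delta)\le 0$ from $8(1+\delta)\le(2+\delta)^3$, and the Taylor expansion for the lower tail) all go through. The paper, however, does not prove this theorem at all: it simply states the Chernoff bound in the preliminaries as a known result, exactly as you anticipated in your final paragraph.
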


\section{Synthesizing Witness States In Quantum Polynomial Space}
\label{sec:synthesize}

In the classical setting it is easy to see that given a (classical) verifier circuit $V$ (which may make oracle queries to $\pspace$), one can find in polynomial space a witness string $y$ that is accepted by $V$: one can simply perform brute-force search over all strings and check whether $V^\pspace$ accepts $x$. 

In the section, we prove the quantum counterpart, where now the verifier circuit is quantum and can make quantum queries to the $\pspace$ oracle. We show that given the description of such a verifier circuit, with the help of the quantum $\ket{\pspace}$ oracle, we can efficiently \emph{synthesize} a witness state $\rho$ that is accepted by $V$ with probability greater than the desired guarantee (provided that there exists a witness state with acceptance probability greater than the threshold). Formally:

\begin{theorem}\label{theorem:synthesizer}
Let $a$ (called the \emph{guarantee}), $b$ (called the \emph{threshold}) be functions such that $b(n) - a(n) \ge \frac{1}{p(n)}$ for every $n$ where $p$ is a polynomial. Let $\VerCircuit^{\ket{\pspace}}$ denote a uniform oracle algorithm. Then there exists a uniform oracle algorithm $\syn$ (called the \emph{synthesizer}) such that for every $x \in S$,
$$\prob\left[\VerCircuit^{\ket{\pspace}}(x,\syn^{\ket{\pspace}}(x)) \text{ accepts} \right] \ge a(\length{x})$$
where $ S := \left \{x: \max_\rho \prob\left[\VerCircuit^{\ket{\pspace}}(x,\rho) \text{ accepts}\right] \ge b(\length{x}) \right \}$.
\end{theorem}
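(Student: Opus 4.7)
The plan is to reduce \Cref{theorem:synthesizer} to \Cref{GenerateStatePSPACE}. Concretely, I will construct, for every $x \in S$, a pure state $\ket{\psi_x}$ on registers $\reg{A}\reg{E}$ such that the sequence $(\ket{\psi_x})_{x \in S}$ lies in $\statePSPACE$ and the reduced state $\Tr_{\reg{E}}(\ketbra{\psi_x}{\psi_x})$ is accepted by $\VerCircuit^{\ket{\pspace}}(x,\cdot)$ with probability at least $a(\length{x})$. Given such a family, the synthesizer $\syn^{\ket{\pspace}}(x)$ runs the polynomial-time oracle algorithm guaranteed by \Cref{GenerateStatePSPACE}, traces out $\reg{E}$, and returns the $\reg{A}$ register; the exponentially small trace-distance error translates to an exponentially small loss in acceptance probability.

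To build $\ket{\psi_x}$ I first use \Cref{lemma:Replace|PSPACE>} to rewrite $\VerCircuit^{\ket{\pspace}}(x,\cdot)$ as a polynomial-space unitary $U_V$ on registers $\reg{A}\reg{B}$ followed by measurement of one output qubit. Let $\Pi_V = U_V^\dagger (I \otimes \ketbra{1}{1}) U_V$ and $\Pi_0 = I_{\reg{A}} \otimes \ketbra{0\cdots 0}{0\cdots 0}_{\reg{B}}$. By Jordan's lemma (\Cref{Lemma:Jordan}) the space $\reg{A}\reg{B}$ splits into at-most-two-dimensional $(\Pi_V,\Pi_0)$-invariant blocks $S_j$ with overlap parameters $p_j$, where $\ket{v_j}$ (the $\Pi_0 = 1$ eigenvector in $S_j$) has acceptance probability $p_j$. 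I then mirror the third attempt of \Cref{sec:IntuitionOfSyn}: prepare half of a maximally entangled state on $\reg{A}\reg{A'}$ together with $\ket{0}_{\reg{B}}$; coherently run $K = \mathrm{poly}(\length{x})$ rounds of Marriott--Watrous alternating projections between $\Pi_V$ and $\Pi_0$ with all outcomes copied into a fresh register $\reg{M}$; coherently compute a flag $\reg{F}\in\{0,1\}$ from $\reg{M}$ indicating whether the empirical frequency of $\Pi_V$-outcome-$1$ is at least $(a(\length{x})+b(\length{x}))/2$; and finally amplify the amplitude of the $\reg{F}=1$ branch to inverse-exponentially close to $1$ via a polynomial-space (possibly exponential-time) amplitude-amplification routine. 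Each ingredient -- $U_V$, $\Pi_0$, the MW loop, and the amplification step -- reuses polynomially many qubits, so the state-preparation circuit is polynomial-space uniform and $(\ket{\psi_x})_{x \in S}$ lies in $\statePSPACE$.

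Correctness rests on a convexity reduction and a Jordan analysis. Because $x \in S$, there is a pure witness $\ket{w^\star}$ with acceptance at least $b(\length{x})$; expanding $\ket{w^\star} = \sum_j \alpha_j \ket{v_j}$ in the Jordan basis gives $\sum_j \lvert\alpha_j\rvert^2 p_j \geq b(\length{x})$, and a short calculation shows that at least a $\Omega(b-a)$ fraction of the weight sits on \emph{good} branches with $p_j \geq (a+b)/2$; in particular at least one such branch exists. By \Cref{lemma:MaximallyEntangledState} the initial maximally entangled state, viewed in the Jordan basis of the $\Pi_0 = 1$ subspace, assigns amplitude $1/\sqrt{N}$ to every Jordan branch, so the amplitude of the $\reg{F}=1$ subspace before amplification is at least $1/\sqrt{N}$ and the amplification step succeeds in polynomial space. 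Combining \Cref{proposition:distribution} with the Chernoff bound stated in \Cref{sec:prelims:compressed_oracle} ensures that, except with $\exp(-\mathrm{poly})$ probability, the flag register correctly identifies the good branches. Tracing out $\reg{A'}\reg{B}\reg{M}\reg{F}$ from the amplified state then leaves a mixture of $\ket{v_j}$ for good $j$, whose acceptance probability under $\VerCircuit$ is at least $(a+b)/2 \geq a(\length{x})$, as required.

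\textbf{Main obstacle.} The trickiest piece is carrying out the amplification stage coherently in polynomial space while preserving the Jordan decomposition that the MW analysis relies on. Care is needed so that the reflection unitaries (or, equivalently, the loop counter used in the purified repeated-trials variant) are reversible, touch only polynomially many qubits, and do not disturb the block structure in which $\Pi_V$ and $\Pi_0$ act; the $\exp(-\mathrm{poly})$ Chernoff tails on the flag register must also be shown not to accumulate catastrophically across the many amplification rounds. Once this bookkeeping is settled, \Cref{GenerateStatePSPACE} immediately upgrades the $\statePSPACE$ family $(\ket{\psi_x})_{x \in S}$ into the desired polynomial-time oracle algorithm $\syn^{\ket{\pspace}}$.
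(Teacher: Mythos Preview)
Your overall strategy matches the paper's: reduce to \Cref{GenerateStatePSPACE} via a $\statePSPACE$ family of pure witnesses built from the Jordan decomposition of $(\Pi_V,\Pi_0)$, a maximally entangled starting state, and a coherent Marriott--Watrous loop. Two steps as written would fail, however.

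First, the flag statistic is wrong. The ``empirical frequency of $\Pi_V$-outcome-$1$'' does \emph{not} estimate the Jordan value $p_j$: by \Cref{proposition:distribution} the outcome sequence $(b_i)$ is a random walk with flip probability $1-p_j$, so $\E[b_i] = \tfrac12 + \tfrac12(2p_j-1)^i$, and for any $p_j$ bounded away from $0$ and $1$ the empirical fraction of $1$'s concentrates near $\tfrac12$ as $K \to \infty$, carrying essentially no information about $p_j$. The statistic that works (and that the paper's $\mathsf{Trial}$ uses) is the fraction of \emph{non-flips} $b_i = b_{i-1}$, whose expectation is exactly $p_j$. Second, your flag must also require the final $\Pi_0$-outcome to equal $1$; otherwise the post-flag state on $\reg{A}\reg{B}$ need not lie in the $\Pi_0=1$ subspace, and tracing out $\reg{B}$ does not produce a mixture of the $\ket{v_j}$'s as you assert. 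The paper's $\Test$ projector explicitly includes this last-bit condition (the factor $\ketbra{1}{1}_{\reg{Y}_{2N}}$), and the last line of \Cref{proposition:distribution} is exactly what guarantees the state then returns to $\ket{v_j}$.

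Separately, the paper avoids your ``main obstacle'' altogether. Rather than coherently amplitude-amplifying the flag-$1$ branch, it defines $\ket{\psi_x}$ as the (pure) post-selected state after one successful $\mathsf{Trial}$ and builds the approximating circuit by rerunning $\mathsf{Trial}$ on fresh registers up to $T = 2^{m+2}q(n)$ times, outputting the residual state the first time the flag fires. A single trial succeeds with probability at least $2^{-(m+2)}$, so the output is $e^{-q(n)}$-close to $\ket{\psi_x}$; no reflections, no overlap estimation, and each trial touches only polynomially many qubits which are then discarded. This is both simpler and sidesteps the bookkeeping you flag as tricky.
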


This theorem follows directly from \Cref{GenerateStatePSPACE} and the following lemma. 

\begin{lemma}\label{MainLemma}
Let $a, b$ be functions such that $b(n) - a(n) \ge \frac{1}{p(n)}$ for every $n$ where $p$ is a polynomial. Let $\VerCircuit^{\ket{\pspace}}$ denote a uniform oracle algorithm, and let $S$ be the corresponding set as in \Cref{theorem:synthesizer}. Then there exists a $\statePSPACE$ family of pure states $(\ket{\psi_{x}})_{x \in S}$ where each state $\ket{\psi_x}$ is bipartite on two registers (labeled $\reg{M}$ and $\reg{E}$) such that for every $x \in S$,
$$\prob\left[\VerCircuit^{\ket{\pspace}}(x,\rho_{\reg{M}}) \text{ accepts} \right] \ge a(\length{x})$$ where $\rho_{\reg{M}}$ is the reduced density matrix of $\ket{\psi_{x}}$ on register $\reg{M}$, i.e. $\rho_{\reg{M}} = \Tr_{\reg{E}}(\psi_x)$.
\end{lemma}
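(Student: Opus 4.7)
The plan is to apply \Cref{lemma:Replace|PSPACE>} to $\VerCircuit^{\ket{\pspace}}(x,\cdot)$ to obtain a space-uniform polynomial-space unitary $D_x$ on the input register $\reg{M}$ and an ancilla register $\reg A$ (initialized to $\ket{\bar 0}$) such that the acceptance probability of $\VerCircuit$ on $\rho$ equals $\Tr[(\ketbra{1}{1}_1\otimes I)\,D_x(\rho\otimes\ketbra{\bar 0}{\bar 0}_{\reg A})D_x^\dagger]$. I would then set $\Pi_1 := D_x^\dagger(\ketbra{1}{1}_1\otimes I)D_x$ and $\Pi_2 := I_{\reg M}\otimes\ketbra{\bar 0}{\bar 0}_{\reg A}$, apply \Cref{Lemma:Jordan} to decompose $\reg M\reg A$ into invariant one- and two-dimensional subspaces, and in each two-dimensional subspace $S_i$ let $\ket{w_i} = \ket{w_i'}_{\reg M}\ket{\bar 0}_{\reg A}$ be the $+1$ eigenvector of $\Pi_2$. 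The scalar $p_i := \bra{w_i}\Pi_1\ket{w_i}$ is then exactly the $\VerCircuit$-acceptance probability of $\ket{w_i'}_{\reg M}$, and together with the $+1$ eigenvectors of $\Pi_2$ from the one-dimensional subspaces, the $\ket{w_i'}$ form an orthonormal basis of $\reg M$.

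\paragraph{Target state and its reduced density.} For $x \in S$ there exists, by convexity, a pure witness $\ket\tau_{\reg M}$ accepted with probability $\geq b(|x|)$. Expanding $\ket\tau\ket{\bar 0}_{\reg A} = \sum_i\gamma_i\ket{w_i}$ in the $\{\ket{w_i}\}$-basis of the $+1$ eigenspace of $\Pi_2$ gives $\sum_i|\gamma_i|^2 p_i \geq b(|x|)$, so some $p_{i^*} \geq b(|x|)$, and therefore $G_x := \{i : p_i \geq (a(|x|)+b(|x|))/2\}$ is nonempty. I will take
\[
\ket{\psi_x}_{\reg M\reg E} \;:=\; Z_x^{-1/2}\sum_{i\in G_x}\sqrt{p_i}\,\ket{w_i'}_{\reg M}\,\ket{\overline{w_i'}}_{\reg E}, \qquad Z_x := \sum_{i\in G_x}p_i.
\]
Orthonormality of $\{\ket{\overline{w_i'}}\}$ gives $\Tr_{\reg E}(\psi_x) = Z_x^{-1}\sum_{i\in G_x}p_i\,\ketbra{w_i'}{w_i'}_{\reg M}$, whose $\VerCircuit$-acceptance probability is $Z_x^{-1}\sum_{i\in G_x}p_i^2 \geq (a(|x|)+b(|x|))/2 \geq a(|x|)$, using $p_i \geq (a+b)/2$ termwise on $G_x$.

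\paragraph{Synthesis, and the main obstacle.} To place $(\ket{\psi_x})_{x\in S}$ in $\statePSPACE$, I would prepare $\ket\Phi_{\reg{ME}}\otimes\ket{\bar 0}_{\reg A}$, which by \Cref{lemma:MaximallyEntangledState} applied to $\{\ket{w_i'}\}$ equals $N^{-1/2}\sum_i\ket{w_i'}_{\reg M}\ket{\overline{w_i'}}_{\reg E}\ket{\bar 0}_{\reg A}$, and then coherently run $k = \Theta(p(|x|)^2 q(|x|))$ rounds of the alternating-projection algorithm of \Cref{subsect:MWtechnique} for $\Pi_1,\Pi_2$ on $\reg{MA}$, ending with a $\Pi_2$ round and recording outcomes in a fresh register $\reg R$. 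Next I would coherently compute a flag $\reg G \in \{\mathsf{good},\mathsf{bad}\}$ from $\reg R$ that is $\mathsf{good}$ precisely when the bit-flip fraction in the outcome sequence is at most $1-(a(|x|)+b(|x|))/2$ \emph{and} the final outcome is $1$. By \Cref{proposition:distribution} each Jordan branch evolves independently, and the Chernoff bound gives exponentially small leakage across the good/bad boundary, with the end-step $\Pi_2$ projection collapsing each good branch back to $\ket{w_i'}_{\reg M}\ket{\bar 0}_{\reg A}$ with amplitude $\Theta(\sqrt{p_i})$; the total $\mathsf{good}$-amplitude is therefore $\Omega(\sqrt{Z_x/N}) \geq \Omega(1/\sqrt N)$, which fixed-point amplitude amplification (around the uniform initial state and the $\mathsf{good}$ subspace) boosts to $1 - \exp(-q(|x|))$ in $O(q(|x|)\sqrt N)$ iterations --- exponential time but polynomial space. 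Uncomputing $\reg{ARG}$ then yields a state $\exp(-q(|x|))$-close to $\ket{\psi_x}$ on $\reg{ME}$. The hardest part will be this branch-wise coherent analysis: Jordan-subspace invariance under $\Pi_1,\Pi_2$ is what prevents different $i$'s from interfering, but the exponentially small Chernoff leakage and the imperfect end-step projection must be propagated carefully through fixed-point amplitude amplification to meet the $\exp(-q)$ trace-distance guarantee demanded by $\statePSPACE$; the remaining ingredients (space-uniform implementation of $D_x$, polynomial-space arithmetic for the flag, and fixed-point amplitude amplification) are standard.
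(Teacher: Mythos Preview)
Your overall strategy --- replace the $\pspace$ oracle via \Cref{lemma:Replace|PSPACE>}, use Jordan's lemma to decompose into independent branches, run Marriott--Watrous coherently on half of a maximally entangled state, and amplify the ``good'' part in polynomial space --- is exactly the paper's approach. The gap is in your choice of target state and the final uncomputation step.

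Your synthesis does not produce the state $\ket{\psi_x} = Z_x^{-1/2}\sum_{i\in G_x}\sqrt{p_i}\,\ket{w_i'}\ket{\overline{w_i'}}$ you declared. After $k$ coherent alternating projections with the flag condition, the (unnormalized) good component in branch $i$ has squared amplitude $\Pr[\mathsf{good}\mid i]$, not $p_i$; for $p_i$ well above the threshold this is $\Theta(1)$, so the branch weights do not match. More fundamentally, the step ``uncompute $\reg{A}\reg{R}\reg{G}$'' cannot work. After projecting onto $\mathsf{good}$ (including ``final $\Pi_2$ outcome $=1$''), the state on $\reg{M}\reg{A}\reg{E}\reg{R}$ is, up to normalization, $\sum_i \ket{w_i'}_{\reg M}\ket{\bar 0}_{\reg A}\ket{\overline{w_i'}}_{\reg E}\ket{\eta_i}_{\reg R}$, where $\ket{\eta_i}$ is the superposition over good outcome strings weighted by their branch-$i$ amplitudes. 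The $\ket{\eta_i}$ for different $i$ are neither mutually orthogonal nor equal, so no unitary on $\reg R$ alone can send them all to $\ket{0}$; and running the alternating-projection circuit in reverse undoes $\reg M\reg A$ together with $\reg R$, destroying the state you built.

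The fix, which is what the paper does, is simply to absorb \emph{all} auxiliary registers (your $\reg A,\reg R,\reg G$; the paper's $\reg K,\reg{Aux},\reg Y,\reg{Cnt}$) into $\reg E$, and to define $\ket{\psi_x}$ as the post-selected state on all registers after one successful trial. This is manifestly pure, needs no uncomputation, and the acceptance bound on $\rho_{\reg M}$ then follows by the same Jordan-branch analysis (the paper's \Cref{claim:GoodAcceptProb}). Your fixed-point amplitude amplification is a valid alternative to the paper's brute-force exponential repetition for placing that state in $\statePSPACE$; only the target state needs to change.
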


\begin{proof}[Proof of \Cref{theorem:synthesizer}]
Let $a'(n) = a(n) + \exp(-n)$ and $b'(n) = b(n)$, where $a(n),b(n)$ are as given by the conditions in \Cref{theorem:synthesizer}. Applying \Cref{MainLemma} with functions $a'(n),b'(n)$, we obtain a $\statePSPACE$ state sequence $(\ket{\psi_{x}})_{x \in S}$ such that for every $x \in S$,
$$\prob\left[\mathsf{V}^{\ket{\pspace}}(x,\rho_{\reg{M}}) \text{ accepts} \right] \ge a(\length{x}) + \exp(-\length{x})$$
where $ S := \left \{x: \max_\rho \prob\left[\mathsf{V}^{\ket{\pspace}}(x,\rho) \text{ accepts}\right] \ge b(\length{x}) \right \}$.

\Cref{GenerateStatePSPACE} implies that there exists a polynomial-time oracle algorithm $A^{\ket{\pspace}}$ that on input $x \in S$, outputs a pure state $\ket{\varphi_x}$ that is $\exp(-\length{x})$-close to $\ket{\psi_x}$. This implies that the reduced density matrix of $A^{\ket{\pspace}}(x)$ on register $\reg{M}$, which we denote by $\sigma_{\reg{M}}$, is also $\exp(-\length{x})$-close to $\rho_{\reg{M}} = \Tr_{\reg{E}}(\psi_x)$ (this follows from the fact that trace distance is non-increasing when you discard subsystems). Thus for every $x \in S$, we have
$$\prob\left[\mathsf{V}^{\ket{\pspace}}(x,\sigma_{\reg{M}}) \text{ accepts} \right] \ge a(\length{x})$$
because otherwise $\VerCircuit^{\ket{\pspace}}$ would be able to distinguish between $\rho_{\reg{M}}$ and $\sigma_{\reg{M}}$ with more than $\exp(-\length{x})$ bias. 

The synthesizer $\syn$ works as follows: on input $x \in S$ it runs the oracle algorithm $A^{\ket{\pspace}}(x)$ to obtain a pure state $\ket{\varphi_x}$, and then traces out the $\reg{E}$ register and returns the remaining state on the $\reg{M}$ register as output. 
\end{proof}

The remainder of \Cref{sec:synthesize} will be devoted to the proof of \Cref{MainLemma}. We will use the techniques and results from \cite{marriott} (also presented in \Cref{subsect:MWtechnique} for completeness). In \Cref{subsect:description} we present the description of the state family along with the description of a circuit family that generates (an approximation of) the state family. In \Cref{subsect:ProofOfMainLemma} we prove that the state family satisfies the requirements. 

\subsection{Description of the State Family and Circuit Family}\label{subsect:description}

In this section, we implement our ideas from \Cref{sec:IntuitionOfSyn} in a formal way. Recall that our algorithm in \Cref{sec:IntuitionOfSyn} repeatedly does the following (which we will call a {\em{trial}}): start from a maximally entangled state, estimate the acceptance probability coherently using MW technique and if the estimated acceptance probability high, then output the remaining state. Roughly speaking, the target state we aim to generate will be the remaining state after a successful trial (a trial is successful if the estimated acceptance probability is high). Looking ahead, in order to prove \Cref{MainLemma}, we only need to show two things. Firstly, our algorithm actually outputs a good approximation of the target state, so our target state forms a $\statePSPACE$ family; Secondly, our target state will indeed be accepted with high probability.

Now let's start by giving a formal description of the state family.

\paragraph{The state family}

Let $\VerCircuit^{\ket{\pspace}}$ be the uniform oracle algorithm given in the condition of \Cref{MainLemma}. From \Cref{lemma:Replace|PSPACE>}, there exists a space-uniform polynomial-space algorithm $\VerHat = (\VerHat_x)_{x \in \{0, 1\}^*}$ such that $\VerHat_x$ is unitary and the functionality of $\VerCircuit_x^{\ket{\pspace}}$ is exactly the same as introducing $k(\length{x})$ new ancilla qubits in $\ket{0}$, applying unitary $\VerHat_x$ and then measuring the first qubit in computational basis where $k$ is a polynomial. Let $m(\length{x})$ be the number of qubits that $\VerCircuit_x$ takes as input, which is also a polynomial. 

Fix $x \in S, n = \length{x}$. We sometimes omit subscript $x$ when it is clear from the context. For convenience, we write $m(n), k(n)$ as $m, k$ respectively from now on.

Let $\register{M}$ denote the register containing the $m$ input qubits. Let $\register{K}$ denote the register containing the $k$ ancilla qubits. Let $\register{Ans}$ denote the first qubit (i.e. the one that will be measured in computational basis to decide whether $\mathsf{V}_x^{\ket{\pspace}}$ accepts or rejects, and outcome $1$ means accept while outcome $0$ means reject). Let $\register{Aux}$ denote a register containing another $m$ fresh qubits. 

Here we define two binary-outcome projective measurements on $\reg{MK}$. Define $P^1 := \ketbra{0^k}{0^k}_{\register{K}}$, $P^0 := \id_{\reg{MK}} - P^1$ and $P := \{P^0, P^1\}$. Intuitively, $P^1$ corresponds to ``valid input subspace'' (i.e., the ancilla qubits are initialized properly). Define $Q^1 := \VerHat_x^\dagger\ketbra{1}{1}_{\register{Ans}}\VerHat_x, Q^0 := \VerHat_x^\dagger\ketbra{0}{0}_{\register{Ans}}\VerHat_x$ and $Q := \{Q^0, Q^1\}$. Intuitively, $Q^1$ corresponds to the state that will be accepted if we apply $\VerHat_x$ to it and then measure $\reg{Ans}$ in computational basis. So $Q$ checks whether $\VerCircuit_x^{\ket{\pspace}}$ will accept as long as register $\register{K}$ is initialized properly. The following simple observation is implicitly shown in~\cite{marriott}.

\begin{observation}\label{obs:AcceptProb}
The maximum acceptance probability of $\VerCircuit^{\ket{\pspace}}(x, \cdot)$ is exactly the largest eigenvalue of $P^1Q^1P^1$.
\end{observation}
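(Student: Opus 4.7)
The plan is to unfold the definition of the verifier's acceptance probability and then recognize the resulting optimization as a standard eigenvalue problem. Write $\rho$ for the input state on register $\reg{M}$. By the definition of $\VerCircuit^{\ket{\pspace}}(x,\cdot)$ and the equivalence given by Lemma~\ref{lemma:Replace|PSPACE>}, the acceptance probability is
\[
\Tr\!\Bigl(\ketbra{1}{1}_{\reg{Ans}}\, \VerHat_x\bigl(\rho_{\reg{M}} \otimes \ketbra{0^k}{0^k}_{\reg{K}}\bigr)\VerHat_x^\dagger\Bigr),
\]
which by cyclicity of the trace equals $\Tr\!\bigl(Q^1\,(\rho_{\reg{M}} \otimes \ketbra{0^k}{0^k}_{\reg{K}})\bigr)$, directly from the definition $Q^1 = \VerHat_x^\dagger \ketbra{1}{1}_{\reg{Ans}} \VerHat_x$.

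Next I would reduce to pure inputs. Because the expression above is linear in $\rho$, the maximum over all density matrices on $\reg{M}$ is attained at an extreme point, i.e.\ a pure state $\rho = \ketbra{\psi}{\psi}$. For such $\rho$ the acceptance probability simplifies to $\bra{\psi,0^k}Q^1\ket{\psi,0^k}$. Since $\ket{\psi,0^k}$ lies in the image of the projector $P^1 = \id_{\reg M} \otimes \ketbra{0^k}{0^k}_{\reg K}$, we have $P^1\ket{\psi,0^k} = \ket{\psi,0^k}$, so the acceptance probability can equivalently be written as
\[
\bra{\psi,0^k}\,P^1 Q^1 P^1\,\ket{\psi,0^k}.
\]

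Finally, I would argue that maximizing this quantity over all unit vectors of the form $\ket{\psi,0^k}$ is the same as maximizing $\bra{\phi}P^1 Q^1 P^1\ket{\phi}$ over \emph{all} unit vectors $\ket{\phi}$ in $\reg{MK}$, which by the variational characterization of eigenvalues equals the largest eigenvalue of the positive semidefinite operator $P^1 Q^1 P^1$. To see the two optimizations coincide, decompose an arbitrary unit vector as $\ket{\phi} = \alpha\ket{\phi_1} + \beta\ket{\phi_2}$ with $\ket{\phi_1}$ in the image of $P^1$ and $\ket{\phi_2}$ in its kernel. Then $P^1\ket{\phi} = \alpha\ket{\phi_1}$, and so
\[
\bra{\phi}P^1 Q^1 P^1\ket{\phi} = |\alpha|^2\,\bra{\phi_1}Q^1\ket{\phi_1} \le \bra{\phi_1}Q^1\ket{\phi_1},
\]
showing the maximum is attained inside the image of $P^1$, which is exactly the set of vectors of the form $\ket{\psi,0^k}$.

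There is no serious obstacle here; the only point requiring a touch of care is not confusing $P^1$ (which acts as identity on $\reg{M}$) with the rank-one projector $\ketbra{0^k}{0^k}$ on $\reg{K}$ alone, and making sure that the pure-state reduction and the restriction of the domain to $\text{Image}(P^1)$ are each justified. Both steps are standard linear-algebraic moves, so the whole argument is essentially a two-line calculation packaged as an observation.
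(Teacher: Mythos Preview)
Your proof is correct and follows essentially the same approach as the paper's: both rewrite the acceptance probability on a pure input $\ket{\psi}$ as the quadratic form $\bra{\psi,0^k}P^1Q^1P^1\ket{\psi,0^k}$ and then argue that maximizing over all unit vectors in $\reg{MK}$ gives nothing more than maximizing over the image of $P^1$. The paper presents the two directions as separate inequalities while you package them as a single domain-restriction argument, but the content is the same.
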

\begin{proof}
First, we show that the maximum eigenvalue of $P^1Q^1P^1$ is upper bounded by the maximum acceptance probability of $\VerCircuit^{\ket{\sf PSPACE}}(x, \cdot)$.

For any pure state $\ket{\phi}$ on register $\reg{MK}$, let $\rho_{\reg{M}} = \frac{1}{\|P^1\ket{\phi}\|^2}\Tr_{\reg{K}}(P^1\ketbra{\phi}{\phi}P^1)$. Then 
\begin{align*}
    \bra{\phi}P^1 Q^1 P^1\ket{\phi} =& \Tr(Q^1P^1\ketbra{\phi}{\phi}P^1) = \|P^1\ket{\phi}\|^2 \Tr(Q^1\frac{1}{\|P^1\ket{\phi}\|^2}P^1\ketbra{\phi}{\phi}P^1)\\ 
    =& \|P^1\ket{\phi}\|^2 \Tr(Q^1(\rho_{\reg{M}} \otimes \ketbra{0^k}{0^k}_{\reg{K}}))\\
    =&\prob\left[\VerCircuit^{\ket{\pspace}}(x, \rho_{\reg{M}}) \text{ accepts}\right]\\
    \le& \max_\rho\prob\left[\VerCircuit^{\ket{\pspace}}(x, \rho) \text{ accepts}\right]
\end{align*}

Second, we show that the maximum eigenvalue of $P^1Q^1P^1$ is lower bounded by the maximum acceptance probability of $\VerCircuit^{\ket{\sf PSPACE}}(x, \cdot)$. By a simple convexity argument, we can assume without loss of generality, the acceptance probability of $\VerCircuit^{\ket{\sf PSPACE}}(x, \cdot)$ achieves its maximum on pure state $\ket{\phi_0}$. Let $\ket{\phi} = \ket{\phi_0}\ket{0^k}$. Then
$$\bra{\phi}P^1 Q^1 P^1\ket{\phi} =  \prob\left[\VerCircuit^{\ket{\pspace}}(x, \phi_0) \text{ accepts}\right]  = \max_\rho\prob\left[\VerCircuit^{\ket{\pspace}}(x, \rho) \text{ accepts}\right]$$

Therefore, this observation holds true.

\end{proof}

We first define a subroutine $\mathsf{Trial}$ where $N := \max\left(\frac{3a + b}{(b - a)^2}\left(m + 2 - \log(b - a)\right), \frac{16b}{(b - a)^2}\right)$ is polynomial in $n$ (recall that $b(n) - a(n) \ge \frac{1}{p(n)}$ where $p$ is a polynomial).

\begin{algorithmic}[1]
    \State {Initialize register $\reg{MAux}$ to be $\frac{1}{\sqrt{2^m}}\sum_{i \in \{0, 1\}^{m}}\ket{i}_{\register{M}}\ket{i}_{\register{Aux}}$}
    \State {Initialize register $\reg{K}$ to be $\ket{0^k}_{\register{K}}$}
    \State {Introduce a new $2N + 1$ qubit register $\register{Y} := \reg{{Y}_0 \cdots {Y}_{2N}}$ initialized to be $\ket{1}\ket{0^{2N}}$}
    \State {Introduce a new register $\register{Cnt}$ initialized in $\ket{0}$}
    \For {$i = 1, 2, \cdots, N$}
        \State {Measure $\reg{MK}$  with $Q$ coherently, store the outcome in $\register{Y}_{2i - 1}$}
        \State {Measure $\reg{MK}$ with $P$ coherently, store the outcome in $\register{Y}_{2i}$}
    \EndFor
    \State {Compute the number of times that $y_j = y_{j - 1}$ in superposition and store the result in $\register{Cnt}$}
    \State {Do the projective measurement $\mathsf{Test} := \{\mathsf{Yes} := \sum_{j \ge N(a + b)} \ketbra{j}{j}_{\register{Cnt}} \otimes \ketbra{1}{1}_{\register{Y}_{2N}}, \mathsf{No} := I - \mathsf{Yes}\}$}
\end{algorithmic}

Define $\reg{E} := \register{K} \otimes \register{Aux} \otimes \register{Y} \otimes \register{Cnt}$ (i.e., all registers except $\reg{M}$).

\begin{definition}[State family $(\ket{\psi_x})_{x \in S}$]\label{def:state_family}
Let $S$ be the set defined in \Cref{MainLemma}. When $x \in S$, let $\ket{\psi_{x}}$ denote the state in register $\register{M} \otimes \register{E}$ after a successful implementation of $\mathsf{Trial}$ (i.e., the outcome of $\mathsf{Test}$ is $\mathsf{Yes}$). When $x$ are clear from the context, we also write it as $\ket{\psi}$.
\end{definition}

Observe that in $\mathsf{Trial}$, we initialize a pure state in register $\register{M} \otimes \register{E}$ (line~1 - line~4), then apply a unitary on it (line~5- line~9) as all measurements are conducted coherently, and finally do a projective measurement (line~10). So the definition above indeed gives us a family of states $(\ket{\psi_{ x}})_{x \in S}$ such that each $\ket{\psi_{x}}$ is a pure state on $l(n)$ qubits where $l$ is a polynomial.

\paragraph{The circuit family} 
Now let's construct a circuit family (or algorithm) that can generate efficiently an approximation of the state family $(\ket{\psi_x})_{x \in S}$. For any polynomial $q$ and approximation factor $\exp(-q(n))$, the circuit $C_x$ operates as follows where $T := 2^{m + 2} q(n)$ is exponential in $n$.

\begin{algorithmic}[1]
\For {$t = 1, \cdots, T$}
    \State{Run $\mathsf{Trial}$}
    \If {it is successful} 
    \State{\Return the state in the register $\register{M} \register{E}$}
    \EndIf
\EndFor
\State {\Return an arbitrary state with $l(n)$ qubits}
\end{algorithmic}

\subsection{Proof of Lemma~\ref{MainLemma}}\label{subsect:ProofOfMainLemma}

In the section, we prove that the pure state family $(\ket{\psi_{x}})_{ x \in S}$ satisfies the requirements in \Cref{MainLemma} by applying known result in \Cref{subsect:MWtechnique}. 

Fix $x \in S$. We associate $\Pi_1$ with $P^1$, $\Pi_2$ with $Q^1$, $M_1$ with $P$ and $M_2$ with $Q$, and adopt the notations in \Cref{subsect:MWtechnique}. Then $P^1Q^1P^1 = \sum_i \ketbra{v_i}{v_i} \sum_i \ketbra{w_i}{w_i} \sum_i \ketbra{v_i}{v_i} = \sum_i p_i \ketbra{v_i}{v_i}$. From $x \in S$ and \Cref{obs:AcceptProb}, we can assume $p_1 = \max_i p_i \ge b$. \footnote{Generally  $P^1Q^1P^1 = \sum_{i \in A \cap B} p_i \ketbra{v_i}{v_i}$. We can assume $p_1 = \max_{i \in A \cap B} p_i \ge b$.}

To begin with, let's prove that the state family $(\ket{\psi_{x}})_{x \in S}$ satisfies the first requirement in \Cref{MainLemma}. That is, it is a $\statePSPACE$ family, which can be approximately generated by the circuit family. Notice that $C_x$ outputs $\ket{\psi_x}$ whenever one of the exponential $\mathsf{Trial}$s succeeds. So we first analyze the success probability of one $\mathsf{Trial}$.

\begin{lemma}\label{Lemma:Prob1Trial}
$\prob\left[\mathsf{Trial} \text{ succeeds}\right] \ge \frac{1}{2^{m + 2}}.$
\end{lemma}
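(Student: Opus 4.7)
The plan is to analyze $\mathsf{Trial}$ by expanding the initial state in a Jordan basis adapted to $P^1$ and $Q^1$, isolating a single high-overlap branch, and then combining Proposition~\ref{proposition:distribution} with a Chernoff bound and a short parity argument.

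First, using Lemma~\ref{Lemma:Jordan} I would choose an orthonormal basis $\{\ket{u_j}\}_{j=1}^{2^m}$ of $\reg{M}$ so that $\ket{v_j} := \ket{u_j}_{\reg{M}}\otimes\ket{0^k}_{\reg{K}}$ are the Jordan vectors lying in the range of $P^1$. Lemma~\ref{lemma:MaximallyEntangledState} then rewrites the initial state on $\reg{MKAux}$ as $\tfrac{1}{\sqrt{2^m}}\sum_j \ket{v_j}_{\reg{MK}}\otimes\ket{\overline{u_j}}_{\reg{Aux}}$. Because the coherent measurements act only on $\reg{MK}$ (recording outcomes into $\reg{Y}$), the reversible computation of $\reg{Cnt}$ is diagonal in the computational basis of $\reg{Y}$, and the projector $\mathsf{Test}$ is diagonal in the computational basis of $\reg{Cnt}\reg{Y}_{2N}$, the orthogonal $\reg{Aux}$ labels keep the $j$-branches incoherent, so $\prob[\mathsf{Yes}] = \tfrac{1}{2^m}\sum_j \prob[\mathsf{Yes}\mid\text{branch }j]$.

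Second, I would restrict attention to a single branch $j=1$ with $p_1 \ge b$, whose existence is guaranteed by $x\in S$ together with Observation~\ref{obs:AcceptProb} (which forces some eigenvalue of $P^1Q^1P^1$ to be at least $b$). Invoking Proposition~\ref{proposition:distribution} (equivalently, deferring all measurements of $\reg{Y}$ to the end) the recorded bits $y_0=1,y_1,\ldots,y_{2N}$ are distributed as the ``agree w.p.\ $p_1$, flip w.p.\ $1-p_1$'' Markov chain. Hence $\mathsf{Cnt} \sim \mathrm{Bin}(2N,p_1)$, and $\reg{Y}_{2N}=1$ iff the number of flips $2N-\mathsf{Cnt}$ is even, which, since $2N$ is even, is iff $\mathsf{Cnt}$ itself is even. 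So $\prob[\mathsf{Yes}\mid\text{branch }1]$ reduces to the classical quantity $\prob\!\bigl[X\ge N(a+b)\text{ and }X\text{ even}\bigr]$ for $X\sim\mathrm{Bin}(2N,p_1)$.

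Finally, I would lower-bound this classical probability with a Chernoff bound together with a parity identity. Applying the Chernoff bound of Section~\ref{sec:prelims:compressed_oracle} with $\mu=2Np_1\ge 2Nb$ and $\delta\ge(b-a)/(2b)$, and using $N\ge 16b/(b-a)^2$, gives $\prob[X<N(a+b)]\le e^{-N(b-a)^2/(4b)}\le e^{-4}$. The identity $\prob[X\text{ even}] = \tfrac{1}{2}\bigl(1+(1-2p_1)^{2N}\bigr)\ge \tfrac{1}{2}$ (the exponent $2N$ being even is what makes this hold regardless of the sign of $1-2p_1$) then yields $\prob\!\bigl[X\ge N(a+b)\text{ and even}\bigr] \ge \tfrac{1}{2}-e^{-4}>\tfrac{1}{4}$. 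Keeping only the $j=1$ term in the branch sum gives $\prob[\mathsf{Yes}]\ge \tfrac{1}{2^m}\cdot\tfrac{1}{4} = \tfrac{1}{2^{m+2}}$. The main obstacle I expect is exactly the parity constraint imposed by $\reg{Y}_{2N}=1$: a raw Chernoff bound on $\mathsf{Cnt}$ does not suffice on its own, but the evenness of $2N$ forces the parity correction $(1-2p_1)^{2N}$ to be nonnegative uniformly in $p_1$, so the price of enforcing parity is at most a harmless factor of $1/2$, which is absorbed into the final constant.
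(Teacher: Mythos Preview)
Your proposal is correct and follows essentially the same approach as the paper: rewrite the maximally entangled initial state in the Jordan basis via Lemma~\ref{lemma:MaximallyEntangledState}, isolate the branch with $p_1\ge b$ (contributing weight $1/2^m$), invoke Proposition~\ref{proposition:distribution} to reduce to a $\mathrm{Bin}(2N,p_1)$ count, and combine a Chernoff bound with the parity identity for binomials to get the $1/4$ factor. The only cosmetic differences are that the paper phrases the branch decomposition via commutativity of $M_{\mathsf{Jor}}$ with $M_1,M_2$ rather than via orthogonality of the $\reg{Aux}$ labels, and it uses the union bound $1-\tfrac12-\tfrac14$ instead of your $\tfrac12-e^{-4}$; both yield $\ge 1/4$.
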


\begin{proof}
The success probability of $\mathsf{Trial}$ doesn't change if we measure each qubit of $\register{Y}$ once the outcome is stored in it because computational basis measurements on $\reg{Y}$ commutes with the operations in line 9 and 10 of $\mathsf{Trial}$. Thus we can think of it as measure $\reg{MK}$ directly with $P$ and $Q$ alternatively, get a classical outcome sequence $y := y_1y_2\cdots y_{2N}$ and return $\mathsf{Yes}$ if $y_{2N} = 1$ and the number of times that $y_j = y_{j - 1}(1 \le j \le 2N)$ is at least $N(a + b)$ (where $y_0 = 1$), which is an alternating projection algorithm. For simplicity, we will denote by $\good$ the set of $y$ that corresponds to outcome $\mathsf{Yes}$. Now let's analyze the probability of $y \in \good$.

An important observation is that the initial state can also be written in forms of $\ket{v_i}$. Because $P^1 = \sum_i \ketbra{v_i}{v_i}$, $\ket{v_i}$ forms a basis for the Hilbert space $\mathcal{H}_{\reg{M}} \otimes \ketbra{0^k}{0^k}_{\reg{K}}$. Let $\ket{u_i}$ be a truncation on $\reg{M}$ of $\ket{v_i}$, i.e. $\ket{v_i}_{\reg{MK}} = \ket{u_i}_{\reg{M}}\ket{0^k}_{\reg{K}}$. Then $\ket{u_i}$ forms a basis for $\mathcal{H}_{\reg{M}}$. Thus by \Cref{lemma:MaximallyEntangledState}, the state
$$\frac{1}{\sqrt{2^m}}\sum_{i = 0}^{2^m - 1}\ket{i}_{\register{M}}\ket{i}_{\register{Aux}} = \frac{1}{\sqrt{2^m}}\sum_{i}\ket{u_i}_{\register{M}}\ket{\overline{u_i}}_{\register{Aux}}$$

Consequently, $\frac{1}{\sqrt{2^m}}\sum_{i = 0}^{2^m - 1}\ket{i}_{\register{M}}\ket{0^k}_{\reg{K}}\ket{i}_{\register{Aux}} = \frac{1}{\sqrt{2^m}}\sum_{i}\ket{v_i}_{\register{MK}}\ket{\overline{u_i}}_{\register{Aux}}.$ \footnote{In the general case, the summation is over $i \in A$. That is, $\frac{1}{\sqrt{2^m}}\sum_{i = 0}^{2^m - 1}\ket{i}_{\register{M}}\ket{0^k}_{\reg{K}}\ket{i}_{\register{Aux}} = \frac{1}{\sqrt{2^m}}\sum_{i \in A}\ket{v_i}_{\register{MK}}\ket{\overline{u_i}}_{\register{Aux}}.$ Thus for each $i \in A$, we will be in subspace $S_i$ with probability $\frac{1}{2^m}$ if we apply $M_{\mathsf{Jor}}$. }

Notice that we can apply $M_{\mathsf{Jor}}$ on register $\reg{MK}$ before the alternating projections without changing the distribution of $y$. Applying $M_{\mathsf{Jor}}$ to the above state, the post-measurement state will be $\ket{v_i}_{\reg{MK}}\ket{\overline{u_i}}_{\reg{Aux}}$ with probability $\frac{1}{2^m}$. And by \Cref{proposition:distribution}, when we start from $\ket{v_i}_{\reg{MK}}\ket{\overline{u_i}}_{\reg{Aux}}$, $y_j = y_{j - 1}$ with probability $p_i$ for each $j$ independently.

In particular, we will start from $\ket{v_1}_{\reg{MK}}\ket{\overline{u_1}}_{\reg{Aux}}$ with probability $\frac{1}{2^m}$. And when we start from $\ket{v_1}_{\reg{MK}}\ket{\overline{u_1}}_{\reg{Aux}}$, $y_j = y_{j - 1}$ with probability $p_1$ for each $j$ independently. This can be seen as performing $2N$ independent coin flips with bias $p_1 \ge b$. And $y \in \good$ if the number of heads (denote as $cnt$) is an even greater than or equal to $N(a + b)$.

By Chernoff bound, $$\prob\left[cnt < (a + b)N \right] \le \exp(-Np_1(1 - \frac{a + b}{2p_1})^2) \le \exp(-N\frac{(b - a)^2}{4b}) \le \frac{1}{4}$$
\begin{align*}
    \prob\left[cnt \text{ is an odd}\right] =& \sum_{j = 0}^{N - 1}\binom{2N}{2j + 1} p_1^{2j+1} (1 - p_1)^{2N - 2j - 1}\\
    =& \frac{1}{2}((p_1 + 1 - p_1)^{2N} - (p_1 - (1 - p_1))^{2N})\\
    \leq& \frac{1}{2}
\end{align*}

Thus by union bound, when the post-measurement state after $M_{\mathsf{Jor}}$ is $\ket{v_1}_{\reg{MK}}\ket{\overline{u_1}}_{\reg{Aux}}$, $y \in \good$ with probability at least $\frac{1}{4}$. 

So $\prob\left[\mathsf{Trial} \text{ succeeds}\right] = \prob\left[y \in \good \right] \ge \frac{1}{2^m}\frac{1}{4} = \frac{1}{2^{m + 2}}.$

\end{proof}

\begin{claim}
$(\ket{\psi_{x}})_{x \in S}$ is a $\statePSPACE\left[S\right]$ family.
\end{claim}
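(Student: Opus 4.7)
The plan is to unwrap the definition of $\statePSPACE$ and verify that the circuit family $C_x$ described in \Cref{subsect:description} meets both of its requirements for every polynomial $q$: (i) space-uniformity with polynomial space, and (ii) trace-distance at most $\exp(-q(\length{x}))$ to $\ket{\psi_x}$ for every $x \in S$. Everything else is bookkeeping on top of \Cref{Lemma:Prob1Trial} and \Cref{lemma:Replace|PSPACE>}.

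First, I would argue polynomial space. By \Cref{lemma:Replace|PSPACE>} the unitary $\VerHat_x$ implementing (the purification of) $\VerCircuit^{\ket{\pspace}}_x$ is space-uniformly polynomial-space, so the coherent measurements $P$ and $Q$ needed inside $\mathsf{Trial}$ can be performed in polynomial space by introducing polynomially many ancillas, applying $\VerHat_x$ or its inverse, copying a bit into the next qubit of $\register{Y}$, and uncomputing. The registers $\register{M},\register{Aux},\register{K},\register{Y},\register{Cnt}$ all have polynomial size since $m,k,N$ are polynomial in $n$, and the in-superposition counting in line~9 plus the projective test in line~10 are standard polynomial-space subroutines. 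A single trial therefore uses polynomial space. The outer loop in $C_x$ runs $T = 2^{m+2} q(n)$ times, but the loop counter occupies only $\log T = O(m + \log q(n))$ bits, and the workspace of each trial can be reused after the classical outcome of $\Test$ is read. A successful trial has already produced $\ket{\psi_x}$ on $\register{ME}$, so we simply halt the loop (treating the classical $\mathsf{Yes}/\mathsf{No}$ outcome as the loop-control bit) and return that register. Space-uniformity of $(C_x)$ follows from space-uniformity of $\VerHat$ together with the fact that the rest of $C_x$ is a short template that depends on $x$ only through $n,m,k,N,T$, all of which are polynomial-time computable from $1^{\length{x}}$.

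Next, I would bound the approximation error. For $x \in S$, \Cref{Lemma:Prob1Trial} gives $\Pr[\mathsf{Trial} \text{ succeeds}] \ge 2^{-(m+2)}$, and the successive trials in $C_x$ are independent since each one re-initializes all working registers. Hence the probability that all $T$ trials fail is at most
\[
\Paren{1 - 2^{-(m+2)}}^{T} \;\le\; \exp\!\left(-\tfrac{T}{2^{m+2}}\right) \;=\; \exp(-q(n)).
\]
Conditioned on some trial succeeding, by \Cref{def:state_family} the state on $\register{ME}$ is exactly $\ket{\psi_x}$. Therefore the output of $C_x$ is a mixture of $\psi_x$ with weight at least $1-\exp(-q(n))$ and an arbitrary density matrix with weight at most $\exp(-q(n))$, so $\td(C_x, \psi_x) \le \exp(-q(n))$, as required.

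Since $q$ was arbitrary and the above construction yields, for each $q$, a space-uniform polynomial-space circuit family meeting the approximation bound on $S$, the pure-state family $(\ket{\psi_x})_{x \in S}$ lies in $\statePSPACE[S]$. The only subtle point I expect is not any single estimate but the interaction between them: we must be sure that we can genuinely reuse workspace between trials while still having the trials be probabilistically independent, and that the ``halt on success'' logic is implementable as a polynomial-space quantum circuit (as opposed to an informal control-flow statement). Both follow from the fact that $\Test$ is a standard-basis measurement and from the standard technique of measuring the outcome classically and using it to gate a sequence of conditional operations; I would spell this out explicitly but expect no real obstruction.
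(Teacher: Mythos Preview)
Your proposal is correct and follows essentially the same approach as the paper: verify space-uniformity and polynomial space of $C_x$, then use \Cref{Lemma:Prob1Trial} to bound the probability that all $T$ trials fail by $(1-2^{-(m+2)})^T \le e^{-q(n)}$, concluding the output is $e^{-q(n)}$-close to $\ket{\psi_x}$ in trace distance. You give more explicit detail on the space accounting (loop counter size, workspace reuse, halt-on-success logic) than the paper does, but the structure and key estimates are identical.
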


\begin{proof}
We only need to show that our construction $C_x$ satisfies the requirements in \Cref{def:statePSPACE}.

From the construction, $C_x$ can be generated by polynomial space Turing machine on input $x$ and $C_x$ uses at most polynomial space at any time. Thus $C = (C_x)_{x \in \{0, 1\}^*}$ is a space-uniform polynomial-space quantum algorithm. It is obvious from the construction that $C_x$ takes no inputs. The only remaining thing is to prove $C_x$ outputs a good approximation of $\ket{\psi_x}$ when $x \in S$.

Fix $x \in S$. Whenever there is a successful implementation of $\mathsf{Trial}$, $C_x$ will output $\ket{\psi_x}$. Moreover, by  \Cref{Lemma:Prob1Trial},  recall that $T = 2^{m + 2}q(n)$,
$$\prob\left[\text{$T$ independent repetitions of $\mathsf{Trial}$ all fail}\right] = (1 - \prob\left[\mathsf{Trial} \text{ succeeds}\right])^T \le (1 - \frac{1}{2^{m + 2}})^T \le e^{-q(n)}$$

That is, except with probability $e^{-q(n)}$, $C_x$ outputs $\ket{\psi_x}$. 

As a result, the state outputted by $C_x$ is $e^{-q(n)}$-close to $\ket{\psi_x}$ in trace distance.
\end{proof}

The second requirement in \Cref{MainLemma} that $(\ket{\psi_x})_{x \in S}$ needs to satisfy is that the reduced density matrix of $\ket{\psi_x}$ will be accepted by $\VerCircuit^{\ket{\pspace}}(x, \cdot)$ with high probability. This is intuitively correct because the real acceptance probability should not be too far from the estimated acceptance probability. Now let's prove it formally.

\begin{claim}\label{claim:GoodAcceptProb}
For every $x \in S$, $\prob\left[\mathsf{V}_{x}^{\ket{\pspace}}(\rho_{\reg{M}}) \text{ accepts} \right] \ge a$
where $\rho_{\reg{M}}$ is the reduced density matrix of $\psi_{x}$ on register $\reg{M}$.
\end{claim}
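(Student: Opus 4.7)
My plan is to analyze the pure state $\ket{\psi_x}$ from \Cref{def:state_family} by tracking the coherent Trial procedure in the Jordan basis $(\ket{v_i}, \ket{w_i})_i$ from \Cref{Lemma:Jordan}, computing the reduced density matrix $\rho_{\reg{M}}$ explicitly, and then lower bounding the acceptance probability by $a$ via a Chernoff calculation. As a first step, I will rewrite the initial state prepared on lines 1--2 of Trial. Because each $\ket{v_i}$ in the sum for $P^1 Q^1 P^1$ lies in the image of $P^1 = \ketbra{0^k}{0^k}_{\reg{K}}$, I can write $\ket{v_i} = \ket{u_i}_{\reg{M}}\ket{0^k}_{\reg{K}}$ where $\{\ket{u_i}\}$ is orthonormal in $\reg{M}$. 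Then \Cref{lemma:MaximallyEntangledState} lets me rewrite the initial state as $\frac{1}{\sqrt{2^m}}\sum_i \ket{v_i}_{\reg{MK}}\ket{\overline{u_i}}_{\reg{Aux}}$. Since both $P$ and $Q$ preserve each Jordan subspace $S_i$, the coherent alternating measurements act block-diagonally across $i$; the outcome distribution in the $i$th branch follows \Cref{proposition:distribution} with bias $p_i$, and $y_{2N}=1$ forces the post-measurement state in $\reg{MK}$ back to $\ket{v_i}=\ket{u_i}\ket{0^k}$.

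Combining these observations with the $\mathsf{Yes}$-projection of $\mathsf{Test}$, I will conclude that
\[
\ket{\psi_x} \;=\; \frac{1}{\sqrt{c\cdot 2^m}} \sum_i \ket{u_i}_{\reg{M}}\ket{0^k}_{\reg{K}}\ket{\overline{u_i}}_{\reg{Aux}}\ket{\phi_i}_{\reg{YCnt}},
\]
where $\ket{\phi_i} = \sum_{y \in \good}\alpha_{i,y}\ket{y}_{\reg{Y}}\ket{\mathrm{count}(y)}_{\reg{Cnt}}$, $\|\phi_i\|^2 = q_i := \prob[\good \mid \text{start from } \ket{v_i}]$, and $c = \tfrac{1}{2^m}\sum_i q_i$. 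The orthonormality of $\{\ket{\overline{u_i}}\}$ then makes the partial trace clean: $\rho_{\reg{M}} = \tfrac{1}{\sum_j q_j}\sum_i q_i\ketbra{u_i}{u_i}$. Applying \Cref{obs:AcceptProb} together with $\bra{v_i}Q^1\ket{v_i}=p_i$, the acceptance probability becomes the weighted average
\[
\prob\!\left[\VerCircuit^{\ket{\pspace}}(x,\rho_{\reg{M}}) \text{ accepts}\right] \;=\; \frac{\sum_i q_i p_i}{\sum_j q_j}.
\]

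Finally I will show $\sum_i q_i(p_i-a)\ge 0$. The proof of \Cref{Lemma:Prob1Trial} already establishes $q_1 \ge 1/4$ by union-bounding $\prob[\mathrm{cnt}<N(a+b)]\le 1/4$ (using $p_1 \ge b$) and $\prob[\mathrm{cnt}\ \text{odd}]\le 1/2$. For every $i$ with $p_i<a$, $\mathrm{cnt}\sim\mathrm{Bin}(2N,p_i)$, and the multiplicative Chernoff bound gives $q_i \le \exp(-N(b-a)^2/(3a+b)) \le (b-a)e^{-(m+2)}$ by the choice of $N$ in \Cref{subsect:description}. Since there are at most $2^m$ nonzero-contributing indices (one per basis vector $\ket{u_i}$ of the valid-input subspace), $\sum_{p_i<a}q_i \le 2^m(b-a)e^{-(m+2)} \le e^{-2}(b-a)$. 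Hence
\[
\sum_i q_i(p_i-a) \;\ge\; q_1(b-a) - a\!\!\sum_{p_i<a}q_i \;\ge\; \tfrac{1}{4}(b-a) - e^{-2}(b-a) \;>\; 0,
\]
yielding the claim. The main obstacle is exactly this last tail estimate: extracting a precise bound of $\ge a$ (rather than $\ge\Omega(a)$) requires the slack $p_1\ge b>a$ to dominate the mass that ``leaks'' to Jordan subspaces with $p_i<a$, which is what forces the specific choice of $N$ in \Cref{subsect:description}. A small additional bookkeeping step will handle Jordan decompositions with one-dimensional invariant subspaces, exactly as in the footnotes of \Cref{subsect:description}.
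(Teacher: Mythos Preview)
Your proposal is correct and follows essentially the same approach as the paper: both rewrite the initial state in the Jordan basis via \Cref{lemma:MaximallyEntangledState}, use the block-diagonal action of $P,Q$ to reduce the acceptance probability to the ratio $\frac{\sum_i p_i q_i}{\sum_j q_j}$ with $q_i=\prob[y\in\good\mid E_i]$, and then apply the identical Chernoff bounds (from \Cref{Lemma:Prob1Trial} for $q_1\ge 1/4$, and the upper-tail bound $q_i\le\exp(-N(b-a)^2/(3a+b))$ for $p_i<a$). The only cosmetic difference is that the paper first argues it may decohere $\reg{Y}$ without changing $\rho_{\reg{M}}$ and then phrases the computation as $\prob[z=1\mid y\in\good]$ for an extra $Q$-measurement, whereas you stay in the pure-state picture and compute $\rho_{\reg{M}}$ by a direct partial trace; both routes land on the same formula and the same final inequality.
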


\begin{proof}

Fix $x \in S$. We will omit subscripts when it is clear from the context.

Similar with what we did in \Cref{Lemma:Prob1Trial}, this probability doesn't change if in the generation of $\psi_x$ (i.e. $\mathsf{Trial}$), we measure $\reg{Y}$ directly instead (as we only care about the part in $\reg{M}$). Let $\psi'_x$ be the state we obtain from a successful implementation of $\mathsf{Trial}$ if we measure directly instead. Then the reduced density matrices of $\psi_x$ and $\psi'_x$ are the same on register $\reg{M}$.

Notice that $\mathsf{V}_x^{\ket{\pspace}}(\rho_{\reg{M}})$ is just applying $\VerHat_x$ to $\rho_{\reg{M}} \otimes \ketbra{0^k}{0^k}_{\reg{K}}$ and then measuring the first qubit in computational basis, or equivalently, it is just measuring $\rho_{\reg{M}} \otimes \ketbra{0^k}{0^k}_{\reg{K}}$ with $Q$. By the definition, $\psi'_x$ is $\ket{0^k}$ on register $\reg{K}$ (because the outcome $y_{2N}$ should be 1). So the reduced density matrix of $\psi'_x$ on register $\reg{MK}$ is exactly $\rho_{\reg{M}} \otimes \ketbra{0^k}{0^k}_{\reg{K}}$.

Consider the following alternating projection algorithm: 

We start from $\frac{1}{\sqrt{2^m}}\sum_{i \in \{0, 1\}^{m}}\ket{i}_{\register{M}}\ket{0^k}_{\register{K}}\ket{i}_{\register{Aux}}$, apply $Q, P$ to the state alternatively for $N$ times to obtain a classical outcome sequence $y:=y_1y_2\cdots y_{2N}$ and if $y$ meets the requirement (to be more accurate, $y \in \good$ where $\good$ is defined in \Cref{Lemma:Prob1Trial}), we will additionally apply $Q$ to get an outcome $z$ and accept if $z = 1$. 

In the above algorithm, if $y \in \good$, the (mixed) state remaining is exactly $\psi_x'$, whose reduced density matrix on $\reg{MK}$ is $\rho_{\reg M}\otimes \ketbra{0^k}{0^k}_{\reg K}$. Recall that $\mathsf{V}_x^{\ket{\pspace}}(\rho_{\reg{M}})$ is just measuring $\rho_{\reg{M}} \otimes \ketbra{0^k}{0^k}_{\reg{K}}$ with $Q$. Therefore,

$$\prob\left[\mathsf{V}_{x}^{\ket{\pspace}}(\rho_{\reg{M}}) \text{ accepts} \right] = \prob\left[z = 1 \ \vert\ y \in \good\right]$$

From \Cref{subsect:MWtechnique}, $\prob\left[z = 1 \ \vert\ y \in \good\right]$ will not change if we insert $M_{\mathsf{Jor}}$ in front of the alternating projections. So we can calculate it by projecting the initial state $\frac{1}{\sqrt{2^m}}\sum_{i = 0}^{2^m - 1}\ket{i}_{\register{M}}\ket{0^k}_{\reg{K}}\ket{i}_{\register{Aux}}$ to one of the subspaces $S_i$, getting the post-measurement state $\ket{v_i}_{\reg{MK}}\ket{\overline{u_i}}_{\reg{Aux}}$ and then sampling $y$ and $z$ as if we start from this state (here we also use the fact that $\frac{1}{\sqrt{2^m}}\sum_{i = 0}^{2^m - 1}\ket{i}_{\register{M}}\ket{0^k}_{\reg{K}}\ket{i}_{\register{Aux}}$ can be written in the form $\frac{1}{\sqrt{2^m}}\sum_{i}\ket{v_i}_{\register{MK}}\ket{\overline{u_i}}_{\register{Aux}}$). Denote $E_i$ be the event that we get the post-measurement state $\ket{v_i}_{\reg{MK}}\ket{\overline{u_i}}_{\reg{Aux}}$. Then $\prob\left[E_i\right] = \frac{1}{2^m}$\footnote{Generally, $\prob\left[E_i\right] = \frac{1}{2^m}$ for each $i \in A$. And thus all the summations below will be only over $i \in A$. }. Therefore, 

\begin{align*}
  \prob\left[z = 1 \ \vert\ y \in \good\right] =&  \frac{\prob\left[z = 1 \wedge y \in \good\right]}{\prob\left[y \in \good\right]}\\
  =& \frac{\sum_i \prob\left[E_i \right]\prob\left[y \in \good \ \vert \ E_i\right] \prob\left[z = 1 \ \vert \ E_i \wedge y \in \good \right]}{\sum_i \prob\left[E_i\right]\prob\left[y \in \good \ \vert \ E_i\right]}\\
  =& \frac{\sum_i p_i \prob\left[y \in \good \ \vert \ E_i\right] }{\sum_i \prob\left[y \in \good \ \vert \ E_i\right]}\\
\end{align*}

Same as \Cref{Lemma:Prob1Trial}, when we start from $\ket{v_i}_{\reg{MK}}\ket{\overline{u_i}}_{\reg{Aux}}$, the probability of $y \in \good$ is the same as the probability that during $2N$ independent coin flips with bias $p_i$, the number of heads (denote as $cnt$) is an even greater than or equal to $N(a + b)$.

So if $p_i < a$, by Chernoff bound, 
$$\prob\left[y \in \good \ \vert \ E_i\right] \le \prob\left[cnt \ge N(a + b)\right] \le \exp(-2Np_i(\frac{a + b}{2p_i} - 1)^2/(1 + \frac{a + b}{2p_i})) \le \exp(-N\frac{(b - a)^2}{3a + b})$$

From \Cref{Lemma:Prob1Trial}, $\prob\left[y \in \good \ \vert \ E_1\right] \ge \frac{1}{4}$. As a result,

\begin{align*}
    \sum_{p_i \ge a} (p_i - a) \prob\left[y \in \good \ \vert \ E_i\right] - \sum_{p_i < a} (a - p_i)\prob\left[y \in \good \ \vert \ E_i\right] \ge (b - a)\frac{1}{4} - 2^m a \exp(-N\frac{(b - a)^2}{3a + b}) > 0
\end{align*}
where we use the fact that there are only $2^m$ $\ket{v_i}$s because $P^1$ has rank $2^m$.\footnote{For the general case, we only sum over $i \in A$ and $|A|$ equals to the rank of $P^1$, which is $2^m$.}

The above inequality can be rearranged into $\sum_i p_i \prob\left[y \in \good \ \vert \ E_i\right] > a\sum_i \prob\left[y \in \good \ \vert \ E_i\right].$ 

Therefore, $\prob\left[\mathsf{V}_{x}^{\ket{\pspace}}(\rho_{\reg{M}}) \text{ accepts} \right] \ge a$, which ends the proof of this claim.
\end{proof}

\Cref{MainLemma} follows directly from the above two claims.

\section{Insecurity of Oracle-Aided Public-Key Quantum Money}\label{sec:AttackClassicalAccess} 
In this section, we will use the synthesizer from \Cref{sec:synthesize} as a building block to attack the \oqm where $\oracle$ is a hybrid oracle composed of random oracle $\ro$ and $\ket{\pspace}$. Formally:

\begin{theorem}\label{theorem:insecurity}
Reusable and secure \oqm $( \keygen^{{\ro},\ket{\pspace}}, \allowbreak\mint^{{\ro},\ket{\pspace}}, \verify^{{\ro},\ket{\pspace}})$ does not exist where $\ro$ is a random oracle.
\end{theorem}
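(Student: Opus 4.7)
The plan is to implement the attacker sketched in \Cref{sec:overview:ClassicalR} and prove that it succeeds whenever the scheme is reusable. Let $D_\keygen, D_\mint$ denote the sets of (classical) query-answer pairs made by $\keygen$ and $\mint$ during the sampling of a banknote $(s,\rho_s)$; these sets have polynomial size. The attacker $\adversary^{\ro,\ket{\pspace}}(pk,(s,\rho_s))$ proceeds in two phases. In the \emph{test phase}, it samples $t \xleftarrow{\$} \{0,1,\dots,T\}$ for a sufficiently large polynomial $T$, runs $\verify^{\ro,\ket{\pspace}}(pk,(s,\cdot))$ on the banknote $t$ times, and records all observed query-answer pairs into an initial database $D_0$. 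In the \emph{update phase}, for $j = 1, 2, \ldots, L$ (with $L$ polynomial), the attacker invokes the synthesizer $\syn$ from \Cref{theorem:synthesizer} applied to the (oracle-free) verification circuit $\verify^{D_{j-1},\ket{\pspace}}$ that answers $\ro$-queries from the database $D_{j-1}$ and rejects on any query not in the database, obtains a candidate banknote $\sigma_j$, runs $\verify^{\ro,\ket{\pspace}}$ on it, and sets $D_j := D_{j-1} \cup \{\text{new query-answer pairs observed}\}$. Finally $\adversary$ samples $j^\star \xleftarrow{\$} \{1,\dots,L\}$, invokes $\syn$ twice on $\verify^{D_{j^\star-1},\ket{\pspace}}$ to obtain two independent states $\sigma'_s, \sigma''_s$, and outputs $(s,\sigma'_s \otimes \sigma''_s)$.

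The core analytic step is to show that with non-negligible probability over all the randomness, there exists some $j^\star \in \{1,\dots,L\}$ for which \emph{both} $\verify^{D_{j^\star-1},\ket{\pspace}}(pk,(s,\cdot))$ has acceptance probability above some threshold $b$ on its maximizer \emph{and} a state accepted by $\verify^{D_{j^\star-1},\ket{\pspace}}$ with probability $\ge b$ is accepted by $\verify^{\ro,\ket{\pspace}}$ with probability $\ge a$ for some $b - a \ge 1/\text{poly}$. For the first point, I would argue inductively that whenever the adversary reaches the update phase, the residual state $\rho_s^{(t)}$ after the test phase is accepted by $\verify^{D_0,\ket{\pspace}}$ with probability close to $1$; combined with reusability, this guarantees the existence of a high-acceptance state for $\verify^{D_{j-1},\ket{\pspace}}$ at each subsequent step, provided we have not yet ``succeeded''.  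For the second point, I would define a query made during an execution of $\verify^{\ro,\ket{\pspace}}$ to be \emph{bad} (relative to $D$) if it lies in $D_\keygen \cup D_\mint$ but is not in $D$, and show via a standard hybrid argument that if no bad queries occur then the behavior of $\verify^{\ro,\ket{\pspace}}$ and $\verify^{D,\ket{\pspace}}$ on the same input is identically distributed.

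The randomization of $t$ in the test phase, in the style of \cite{AK22}, is what controls bad queries in the test phase: since $|D_\keygen \cup D_\mint|$ is polynomial, the total expected number of bad queries across all $T$ verifications is polynomial, so by a Markov/averaging argument, for at least a $(1 - 1/\text{poly})$ fraction of choices of $t$ the probability of a bad query occurring in the $(t+1)$-st verification is at most $1/\text{poly}$. Hence with good probability $\verify^{D_0,\ket{\pspace}}$ simulates $\verify^{\ro,\ket{\pspace}}$ well on $\rho_s^{(t)}$, and reusability of the scheme implies that $\rho_s^{(t)}$ itself is accepted by $\verify^{\ro,\ket{\pspace}}$ with probability $\approx \delta$, hence by $\verify^{D_0,\ket{\pspace}}$ with probability $\ge \delta - 1/\text{poly}$. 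Thus we may take the threshold $b$ on the synthesizer's guarantee to be, say, $\delta - 1/\text{poly}$, and the guarantee $a$ to be slightly smaller.

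The main obstacle, which is the key insight of \Cref{sec:overview:ClassicalR}, is to handle the update phase: we have no a priori bound on whether the synthesized state $\sigma_j$ remains close to $\rho_s^{(t)}$, so bad queries could in principle occur during the $j$-th update. The resolution is that \emph{every} occurrence of a bad query in the $j$-th update adds a new element to $D_j$, strictly enlarging the database and bringing it closer to $D_\keygen \cup D_\mint$. Formally, I would argue that either the $j$-th update step already produces a ``good'' $j^\star = j$ (in which case $\verify^{D_{j-1},\ket{\pspace}}$ accurately predicts the behavior of $\verify^{\ro,\ket{\pspace}}$, and two independent draws from $\syn$ then yield two accepted states), or else the $j$-th update reveals a bad query with probability $\ge 1/\text{poly}$, which can only happen at most $|D_\keygen \cup D_\mint|$ times in total. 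Taking $L$ to be a sufficiently large polynomial therefore forces the existence of a good index $j^\star$ with non-negligible probability, and outputting two independent synthesizer draws at $j^\star$ yields two states both accepted by $\verify^{\ro,\ket{\pspace}}$, contradicting security. The delicate book-keeping is to ensure that the trace-distance errors introduced by the synthesizer approximation, the reusability slack, and the randomized-stopping argument all compose to still leave a $1/\text{poly}$ success advantage.
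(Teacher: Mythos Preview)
Your proposal follows essentially the same two-phase attack as the paper (a randomized test phase in the style of \cite{AK22}, followed by an update phase that repeatedly synthesizes and re-verifies, then outputs two independent synthesizer draws at a uniformly chosen index), and the overall logic is right. There is, however, a genuine technical flaw in your definition of the simulated verifier.

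You define $\verify^{D,\ket{\pspace}}$ to ``reject on any query not in the database.'' With this definition your later claim that ``if no bad queries occur then the behavior of $\verify^{\ro,\ket{\pspace}}$ and $\verify^{D,\ket{\pspace}}$ on the same input is identically distributed'' is false. A bad query is one in $(D_\keygen \cup D_\mint) \setminus D$; but $\verify$ on $\rho_s^{(t)}$ may well make queries outside $D_\keygen \cup D_\mint \cup D$ altogether (e.g.\ if the verifier queries a fresh random position on each run). Such a query is not bad, yet your simulated verifier rejects on it while the real one returns a uniformly random bit. Consequently there is no reason $\rho_s^{(t)}$ should be a witness for your $\verify^{D_0,\ket{\pspace}}$, and the synthesizer's promise $x\in S$ may simply fail. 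The fix, which is exactly what the paper does, is to have the simulated verifier answer queries outside $D$ with a fresh random bit (on-the-fly sampling); then the two executions can diverge only on a bad query, and your hybrid claim becomes correct. This also gives the monotonicity you need: since $D_0\subseteq D_k$, the state $\rho_s^{(t)}$ remains a good witness for $\verify^{D_k,\ket{\pspace}}$ for every $k$.

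A second, smaller point: your update-phase accounting (``either $j$ is good, or a bad query is revealed with probability $\ge 1/\mathrm{poly}$, and this can happen at most $|D_\keygen\cup D_\mint|$ times'') is morally right but, as stated, conflates a probabilistic event with a deterministic counter. The paper handles this more cleanly by a telescoping expectation: for each $k$, the gap between the simulated and real acceptance probabilities on $\sigma_{D_k}$ is at most $\E\bigl[|D_{k+1}\cap(D_\keygen\cup D_\mint)|\bigr] - \E\bigl[|D_k\cap(D_\keygen\cup D_\mint)|\bigr]$, and summing over $k$ and dividing by $N$ gives a total average loss of at most $|D_\keygen\cup D_\mint|/N$. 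This sidesteps the need to argue the good/not-good dichotomy step by step.
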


Informally speaking, our synthesizer in \Cref{theorem:synthesizer} works for uniform oracle algorithm $\VerCircuit^{\ket{\pspace}}$. However, in the \oqm we aim to attack, the verification algorithm has access to random oracle $\ro$ in addition to $\ket{\pspace}$. Inspired by \cite{RemoveROfromObf,AK22}, we try to remove $\ro$ and simulate it with a good database. Based on the ideas in \Cref{sec:IntuitionWithR}, we give the following attacker.

Let $\oracle$ be the hybrid oracle composed of random oracle $\ro$ and $\ket{\pspace}$. For a $\delta_r$-reusable $\delta_s$-secure oracle-aided quantum money scheme $\qmalgs$ where $\delta_r = 0.99, \delta_s = \negl(n)$, denote $l(n)$ to be the number of queries to $\ro$ made by one execution of $\keygen^{\oracle}$ and one execution of $\mint^{\oracle}$. By efficiency of $\verify^{\oracle}$, there exists a uniform oracle algorithm $\VerCircuit^{\ket{\pspace}} = (\VerCircuit_x^{\ket{\pspace}})_{x \in \{0, 1\}^*}$ such that running $\VerCircuit_{(pk, D, s)}^{\ket{\pspace}}(\rho)$ is the same as running $\verify^{\oracle}(pk, (s, \rho))$ where $\ro$ is simulated with database $D$. 

Let $\epsilon = 0.01$, $b = 1 - \sqrt{1 - \delta_r + \epsilon}$, $a = 0.99b$. By \Cref{theorem:synthesizer}, there exists a polynomial-time uniform oracle algorithm $\syn^{\ket{\pspace}}$ which can generate an ``almost optimal'' witness state of $\VerCircuit^{\ket{\pspace}}_{(pk, D, s)}$ with guarantee $a$ and threshold $b$. Now let's construct the adversary $\adversary^{\oracle}$.

\paragraph{Adversary $\adversary^{\oracle}$.}It takes as input a valid banknote $(s, \rho_s)$ and public key $pk$, and behaves as follows.

\begin{enumerate}
    \item Let $t \xleftarrow{\$} \{0,\ldots,\lceil \frac{l}{\epsilon} \rceil - 1\}$. Let $D = \emptyset$, $\rho_s^{(0)} = \rho_s$. Run the following $t$ times. In $i^{th}$ iteration,
    \begin{enumerate}
        \item  $\rho_s^{(i)} \otimes \ketbra{\bfx}{\bfx}  \leftarrow \verify^{\oracle}(pk, s, \rho_s^{(i - 1)})$.
        \item Add query-answer pairs to $\ro$ in item (a) into $D$.
    \end{enumerate}
    \item Denote $D_0 = D$.
    \item For $k = 0, 1, \cdots, N(n) - 1$ where $N(n) = \frac{100l(n)}{\left(1 - \sqrt{1 - \delta_r + \epsilon}\right)^2}$ is polynomial in $n$,
    \begin{enumerate}
        \item $\sigma_{D_{k}} \leftarrow \syn^{\ket{\pspace}}(pk, D_{k}, s)$.
        \item Run $\verify^{\oracle}(pk, (s, \sigma_{D_{k}}))$.
        \item Let $D_{k + 1}$ consist of all the query-answer pairs to $\ro$ in item (b) and the pairs in $D_{k}$. 
    \end{enumerate}
    \item $j \xleftarrow{\$} \{0, 1, \ldots, N(n) - 1\}$.
    \item Output $\phi = \phi_1 \otimes \phi_2$ where $\phi_i  \leftarrow \syn^{\ket{\pspace}}(pk, D_j, s)$ $(i = 1, 2)$.
\end{enumerate}

\paragraph{Analysis of $\adversary^{\oracle}$} Now let's prove that $\adversary^{\oracle}$ outputs what we want. We will use the notations defined in the construction of $\adversary^{\oracle}$.

\begin{theorem}\label{theorem:AnalysisOfA}
Given input $(pk, (s, \rho_s))$ generated by $\keygen^{\oracle}$ and $\mint^{\oracle}$, $\adversary^{\oracle}$ outputs two alleged banknotes associated with the serial number $s$ that will be accepted with high probability. Formally:
$$\prob \left[ \verify^{\oracle}(pk,(s,\phi_1)) \text{ accepts and } \verify^{\oracle}(pk,(s,\phi_2)) \text{ accepts} \right] \geq 1.8\left(1 - \sqrt{1 - \delta_r + \epsilon}\right)^2 - 1,$$
where the probability is over the randomness of $\ro$, the randomness of the generation of the input for $\adversary^{\oracle}$ (that is, the randomness of $\keygen^{\oracle}$ and $\mint^{\oracle}$) and the randomness of our adversary $\adversary^{\oracle}$.
\end{theorem}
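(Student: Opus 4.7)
The plan is to lower-bound $\Pr[\phi_i\text{ accepted by }\verify^{\oracle}]$ for each $i=1,2$ by roughly $ab$ and then combine the two via a union bound. Throughout, write $p^{\oracle}(\rho)=\prob[\verify^{\oracle}(pk,(s,\rho))\text{ accepts}]$ and, for a database $D$ of query/answer pairs to $\ro$, write $p^{D}(\rho)$ for the same quantity when the classical queries of $\verify$ to $\ro$ are answered from $D$ when possible and otherwise by freshly sampled uniform strings (quantum access to $\pspace$ remains unchanged). Call a classical query of $\verify$ \emph{bad} relative to the current $D$ if it belongs to $D_{\keygen}\cup D_{\mint}$ but not to $D$; if $\verify$ makes no bad query on input $\rho$, then $p^{\oracle}(\rho)$ and $p^{D}(\rho)$ differ negligibly, because the unrecorded oracle points are uniform and independent of everything in the adversary's view.

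For the test phase, reusability gives $\mathbb{E}[p^{\oracle}(\rho_s^{(t)})]\ge\delta_r$ for every $t$, and a Markov argument then yields $\prob[p^{\oracle}(\rho_s^{(t)})\ge b]\ge b$, where $b=1-\sqrt{1-\delta_r+\epsilon}$. Next I would invoke the randomized-$t$ trick of \cite{AK22}: over the $\lceil l/\epsilon\rceil$ test verifications the total number of distinct bad queries that can ever occur is bounded by $|D_{\keygen}\cup D_{\mint}|\le 2l$, so averaging over the uniform choice of $t$, the probability that $\verify^{\oracle}(\rho_s^{(t)})$ makes a bad query with respect to $D_0$ is at most $\epsilon$. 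Combining, with probability at least $b-O(\epsilon)$ over the test phase, the tuple $(pk,D_0,s)$ lies in the set $S$ of \Cref{theorem:synthesizer} with threshold $b$, because $\rho_s^{(t)}$ is a witness at acceptance level $b-O(\epsilon)$ for $\verify^{D_0}$.

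For the update phase I would run a potential argument. Every iteration $k$ in which $\verify^{\oracle}(\sigma_{D_k})$ makes a bad query has its freshly observed query/answer pair absorbed into $D_{k+1}$, so $|D_{k+1}\cap(D_{\keygen}\cup D_{\mint})|$ strictly increases and this can happen at most $2l$ times across all $N=100l/b^2$ iterations. Hence for uniformly random $j$, the probability (over $j$ and the internal randomness of the $j$-th iteration) that $\verify^{\oracle}(\sigma_{D_j})$ makes a bad query is at most $2l/N=b^2/50$; call such an index \emph{good}. Because $D_j\supseteq D_0$, the threshold witness from the previous paragraph still works at level $b-O(\epsilon)$ for $\verify^{D_j}$, so \Cref{theorem:synthesizer} guarantees that each independent call to $\syn^{\ket{\pspace}}(pk,D_j,s)$ returns a state accepted by $\verify^{D_j}$ with probability at least $a=0.99b$. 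For a good $j$, ``accepted by $\verify^{D_j}$'' coincides with ``accepted by $\verify^{\oracle}$'' up to additive error $b^2/50$, so each of $\phi_1,\phi_2$ is accepted by $\verify^{\oracle}$ with probability at least $ab-O(\epsilon+b^2/50)$. A union bound on the two copies yields a joint acceptance probability of at least $2ab-1-o(1)=1.98\,b^2-1-o(1)\ge 1.8\,b^2-1$.

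The main obstacle is justifying the transfer from $\verify^{D_j}$ to $\verify^{\oracle}$ on the \emph{synthesized} states $\sigma_{D_j},\phi_1,\phi_2$: these states are produced by a $\pspace$-aided procedure that can, in principle, encode extremely delicate correlations with the classical description of $D_j$ and deliberately force $\verify$ to query uncaught points of $\ro$. The potential argument over $N$ update iterations is exactly what neutralizes this: each such attempt must ``pay'' one element of the bounded set $D_{\keygen}\cup D_{\mint}$, so after $N\gg l$ rounds only a vanishing fraction of indices can still be fooled. A secondary subtlety is that \Cref{theorem:synthesizer} only produces states inverse-exponentially close to the ideal $\statepspace$ witness, and that the two synthesizer invocations in step~5 of $\adversary^{\oracle}$ are truly independent (so that their joint $\verify^{D_j}$-acceptance is bounded by the union-bound calculation); both are straightforward but must be tracked to confirm that all error terms fit inside the $o(1)$ slack absorbed above.
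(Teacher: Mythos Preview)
Your proposal is correct and takes essentially the same approach as the paper: the randomized-$t$ test phase to bound bad queries on $\rho_s^{(t)}$, an averaging/Markov step to land in the synthesizer's promise set $S$, and the update-phase potential argument that each iteration where $\verify^{\oracle}(\sigma_{D_k})$ makes a bad query strictly increments $|D_{k+1}\cap(D_{\keygen}\cup D_{\mint})|$, telescoping to an average loss of at most $l/N$ over a uniform $j$. The paper's only organizational difference is that it subtracts the $\epsilon$ bad-query loss \emph{before} the Markov step (proving $\mathbb{E}[p^{D_k}(\rho_s^{(t)})]\ge\delta_r-\epsilon$ directly for every $k$, using $D_{\keygen}\cup D_{\mint}-D_k\subseteq D_{\keygen}\cup D_{\mint}-D_0$), which is what makes the particular choice $b=1-\sqrt{1-\delta_r+\epsilon}$ come out on the nose rather than with the $O(\epsilon)$ slack you carry.
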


\begin{proof}[Proof of \Cref{theorem:AnalysisOfA}]

The proof will be divided into two parts. Informally speaking, in the first part, we will show that for every $k$, $\sigma_{D_k}$ works well on the simulation, i.e. $\VerCircuit_{(pk, D_k, s)}^{\pspace}(\sigma_{D_k})$ accepts with high probability; In the second part, we will show that for every $k$, if $\verify^{\oracle}(pk, (s, \cdot))$ behaves far from $\VerCircuit_{(pk, D_k, s)}^{\pspace}$ on input $\sigma_{D_k}$, then we make progress. Then we will combine the results to prove \Cref{theorem:AnalysisOfA}.

\paragraph{The first part}

The synthesizer $\syn$ in \Cref{theorem:synthesizer} works well provided that good witness state for $\VerCircuit_{(pk, D_k, s)}^{\ket{\pspace}}$ exists. Our candidate for the good witness state is $\rho_s^{(t)}$ as it is accepted by $\verify^{\oracle}$ with high probability by the definition of reusability. We begin by arguing that with high probability, our databases contain necessary information for running verification on $\rho_s^{(t)}$ and thus $\verify$ can not distinguish whether it is interacting with random oracle $\ro$ or the simulated one. Formally:

\begin{claim}\label{claim:dbwithvalidinput}
Let $D_{\keygen}, D_{\mint}$ be the query-answer pairs made during the generation of the input $pk$ and $(s, \rho_s)$ (that is, the execution of $\keygen^{\oracle}$ and $\mint^{\oracle}$). Then
$$\prob \left[\verify^{\oracle}(pk, (s, \rho_s^{(t)})) \text{ queries in $D_{\mint} \cup D_{\keygen} - D$} \right] \le \epsilon,$$
where the probability is over the randomness of $\ro$, the randomness of the generation of the input for $\adversary^{\oracle}$ and the randomness of our adversary $\adversary^{\oracle}$.
\end{claim}

\begin{proof}
We only care about $l(n)$ query positions (those inside $D_{\mint} \cup D_{\keygen}$) and we repeatedly sample $t \xleftarrow{\$} \{0,\ldots,\lceil \frac{l}{\epsilon} \rceil - 1\}$ times. Thus intuitively $D$ should reveal all the positions we care about. Formally,

\begin{align*}
   & \prob\left[\verify^{\oracle}(pk, (s, \rho_s^{(t)})) \text{ queries in $D_{\mint} \cup D_{\keygen} - D$}\right]\\
   & \\
    \le& \sum_{q \in D_{\mint} \cup D_{\keygen}}\prob\left[t = \min_j\left[\verify^{\oracle}(pk,(s,\rho_s^{(j)})) \text{ queries $q$} \right]\right]\\
    \le&\ l \cdot \frac{1}{\lceil \frac{l}{\epsilon} \rceil}\\
    \le& \epsilon
\end{align*}
where the probabilities are only over the randomness of our adversary $\adversary^{\oracle}$ and we use the fact that $t$ is picked uniformly random from $\{0, 1, \ldots, \lceil \frac{l}{\epsilon} \rceil - 1\}$, so it matches $\min_j\left[\verify^{\oracle}(pk,(s,\rho_s^{(j)})) \text{ queries $q$} \right]$ (which may follow some distribution, but is independent of $t$ anyway) with probability less or equal $\frac{1}{\lceil \frac{l}{\epsilon} \rceil}$.

After taking the randomness of $\ro$ and the randomness of the generation of the input for $\adversary^{\oracle}$ into account, we can get the claim.

\end{proof}

The random oracle $\ro$ can be implemented by on-the-fly simulation. Thus $\verify^{\oracle}(pk, (s, \rho_s^{(t)}))$ can be implemented by simulating $\ro$ with database $D_{\mint} \cup D_{\keygen} \cup D$. If $\verify$ doesn't make queries in $(D_{\mint} \cup D_{\keygen} \cup D) \Delta  D = D_{\mint} \cup D_{\keygen} - D$, then it can not distinguish whether $\ro$ is simulated with $D_{\mint} \cup D_{\keygen} \cup D$ or $D$. That is, $D$ is a good database to simulate the verification process on input $\rho_s^{(t)}$ if $\verify$ doesn't make queries inside $D_{\mint} \cup D_{\keygen} - D$. Thus the acceptance probability of $\VerCircuit^{\ket{\pspace}}_{(pk, D, s)}(\rho_s^{(t)})$ should be close to that of $\verify^{\oracle}(pk, (s, \rho_s^{(t)}))$, which is high by the definition of reusability. On average, the performance of the simulation on input $\rho_s^{(t)}$ can only increase if we include more queries into the database. Thus for every $k$, $\rho_s^{(t)}$ should be a good witness state for $\VerCircuit^{\ket{\pspace}}_{(pk, D_k, s)}$. The intuition above is captured by the following claim.

\begin{claim}\label{claim:SimOnRho}
We use the same definition of $D_{\keygen}$ and $D_{\mint}$ as in \Cref{claim:dbwithvalidinput}. $\forall k \in \left[N(n)\right],$
$$\prob \left[\VerCircuit_{(pk, D_k, s)}^{\ket{\pspace}}(\rho_s^{(t)}) \text{ accepts} \right] \geq \delta_r - \epsilon$$ where the probability is over the randomness of $\ro$, the randomness of the generation of the input for $\adversary^{\oracle}$ and the randomness of our adversary $\adversary^{\oracle}$.
\end{claim}

\begin{proof}

This claim follows from \Cref{def:reusable} and \Cref{claim:dbwithvalidinput}. The following probabilities are over the same randomness as the probability in the above claim.

\begin{align*}
    &\prob \left[\verify^{\oracle}(pk, (s, \rho_s^{(t)})) \text{ accepts} \right]\\
    & \\
    = & \prob \left[\verify^{\oracle}(pk, (s, \rho_s^{(t)})) \text{ accepts and queries in $D_{\mint} \cup D_{\keygen} - D_k$}  \right]\\
    & + \prob \left[\verify^{\oracle}(pk, (s, \rho_s^{(t)})) \text{ accepts and never queries in $D_{\mint} \cup D_{\keygen} - D_k$} \right]\\
    & \\
    \le & \prob \left[\verify^{\oracle}(pk, (s, \rho_s^{(t)})) \text{ queries in $D_{\mint} \cup D_{\keygen} - D$} \right]\\
    & + \prob \left[\VerCircuit^{\ket{\pspace}}_{(pk, D_k, s)}(\rho_s^{(t)}) \text{ accepts and never queries in $D_{\mint} \cup D_{\keygen} - D_k$} \right]\\
    & \\
    \le & \epsilon + \prob \left[\VerCircuit^{\ket{\pspace}}_{(pk, D_k, s)}(\rho_s^{(t)}) \text{ accepts} \right]
\end{align*}
where we use the fact that $D \subseteq D_k$ and the fact that we can use on-the-fly simulation to implement $\ro$. As a result, $\verify^{\oracle}(pk, (s, \rho_s^{(t)}))$ can also be seen as simulating $\ro$ with $D_{\keygen} \cup D_{\mint} \cup D_k$, which is different from $D_k$ only on $(D_{\keygen} \cup D_{\mint} \cup D_k) \Delta D_k = D_{\keygen} \cup D_{\mint} - D_k$. Thus $\verify$ can not distinguish whether $\ro$ is simulated with $D_k$ or $D_{\mint} \cup D_{\keygen} \cup D_k$ if it never queries in $D_{\keygen} \cup D_{\mint} - D_k$. The above inequality can be rearranged as
$$\prob \left[\VerCircuit^{\ket{\pspace}}_{(pk, D_k, s)}(\rho_s^{(t)}) \text{ accepts} \right] \ge  \prob \left[\verify^{\oracle}(pk, (s, \rho_s^{(t)})) \text{ accepts} \right] - \epsilon \ge  \delta_r - \epsilon.$$
\end{proof}

 Intuitively, from \Cref{claim:SimOnRho}, for a large fraction of $\VerCircuit_{(pk, D_k, s)}^{\ket{\pspace}}$, good witness state exists. Therefore, our synthesizer can find an ``almost optimal'' one. Formally:

\begin{claim}\label{claim:SimOnSigma}
For every $k \in [N(n)],$
$$\prob\left[\VerCircuit_{(pk, D_k, s)}^{\ket{\pspace}}(\sigma_{D_k}) \text{ accepts} \right] \ge 0.99\left(1 - \sqrt{1 - \delta_r + \epsilon}\right)^2$$ where the probability is over the randomness of $\ro$, the randomness of the generation of the input for $\adversary^{\oracle}$ and the randomness of our adversary $\adversary^{\oracle}$.
\end{claim}

\begin{proof}
The following probabilities are over the same randomness as the probability in the above claim unless otherwise stated.

Define $S := \{(pk, D_k, s) : \max_w\prob\left[\VerCircuit_{(pk, D_k, s)}^{\ket{\pspace}}(w) \text{ accepts}\right] \ge 1 - \sqrt{1 - \delta_r + \epsilon}\}$ where the probability is only over the randomness of $\VerCircuit$. Then by \Cref{claim:SimOnRho} and averaging argument,
$$\prob \left[(pk, D_k, s) \in S \right] \geq 1 - \sqrt{1 - \delta_r + \epsilon}$$

By~\Cref{theorem:synthesizer}, $\forall (pk, D_k, s) \in S$, $\prob\left[\VerCircuit_{(pk, D_k, s)}^{\ket{\pspace}}(\sigma_{D_k}) \text{ accepts}\right] \ge 0.99(1 - \sqrt{1 - \delta_r + \epsilon})$ where the probability is only over the randomness of $\VerCircuit$. Therefore,
$$\prob \left[ \VerCircuit_{(pk, D_k, s)}^{\ket{\pspace}}(\sigma_{D_k}) \text{ accepts} \right] \geq 0.99\left(1 - \sqrt{1 - \delta_r + \epsilon}\right)^2.$$ 
\end{proof}

\paragraph{The second part}

We already know that $\sigma_{D_k}$ is accepted by $\VerCircuit^{\ket{\pspace}}_{(pk, D_k, s)}$ with high probability. The next step is to associate the acceptance probability of $\VerCircuit^{\ket{\pspace}}_{(pk, D_k, s)}$ and that of $\verify^{\oracle}(pk, (s, \cdot))$ on $\sigma_{D_k}$. If the difference of these two terms is large, the simulation with $D_k$ is not good enough. That is, $\verify^{\oracle}(pk, (s, \sigma_{D_k}))$ asks some important queries outside $D_k$. So in this case, $D_{k + 1}$ will contain more important queries and we make progress. Formally:

\begin{claim}\label{claim:UpdateDB}
We use the same notation as above. For every $k \in [N(n)],$
\begin{align*}
&\prob\left[\VerCircuit_{(pk, D_k, s)}^{\ket{\pspace}}(\sigma_{D_k}) \text{ accepts}\right] - \prob\left[\verify^{\oracle}(pk, (s, \sigma_{D_k})) \text{ accepts}\right]\\
\le & \mathbf{E}\left[\lvert D_{k + 1} \cap (D_{\keygen} \cup D_{\mint}) \rvert\right] - \mathbf{E}\left[\lvert D_{k} \cap (D_{\keygen} \cup D_{\mint}) \rvert\right]
\end{align*}
where the probabilities and the expectations are over the randomness of $\ro$, the randomness of the generation of the input for $\adversary^{\oracle}$ and the randomness of our adversary $\adversary^{\oracle}$.
\end{claim}

\begin{proof}
The following probabilities and expectations are over the same randomness of those in the above claim unless otherwise stated.

Similar as the arguments in \Cref{claim:SimOnRho}, $\VerCircuit^{\ket{\pspace}}_{(pk, s, D_k)}(\sigma_{D_k})$ and $\verify^{\oracle}(pk, (s, \sigma_{D_k}))$ behave differently only when they make queries in $(D_{\mint} \cup D_{\keygen} \cup D_k)\Delta D_k = D_{\mint} \cup D_{\keygen} - D_k$. Therefore,
\begin{align*}
    &\prob\left[\VerCircuit_{(pk, D_k, s)}^{\ket{\pspace}}(\sigma_{D_k}) \text{ accepts}\right]\\
    & \\
    =&\prob\left[\VerCircuit_{(pk, D_k, s)}^{\ket{\pspace}}(\sigma_{D_k}) \text{ accepts and queries in $D_{\mint} \cup D_{\keygen} - D_k$}\right]\\
    &+ \prob\left[\VerCircuit_{(pk, D_k, s)}^{\ket{\pspace}}(\sigma_{D_k}) \text{ accepts and never queries in $D_{\mint} \cup D_{\keygen} - D_k$}\right]\\
    & \\
    \le&\prob\left[\VerCircuit_{(pk, D_k, s)}^{\ket{\pspace}}(\sigma_{D_k}) \text{ queries in $D_{\mint} \cup D_{\keygen} - D_k$}\right]\\
    &+ \prob\left[\verify^{\oracle}(pk, (s, \sigma_{D_k})) \text{ accepts and never queries in $D_{\mint} \cup D_{\keygen} - D_k$}\right]\\
    & \\
    =&\prob\left[\verify^{\oracle}(pk, (s, \sigma_{D_k})) \text{ queries in $D_{\mint} \cup D_{\keygen} - D_k$}\right]\\
    &+ \prob\left[\verify^{\oracle}(pk, (s, \sigma_{D_k})) \text{ accepts and never queries in $D_{\mint} \cup D_{\keygen} - D_k$}\right]\\
    & \\
    \le&\prob\left[\left(D_{k + 1} - D_k\right) \cap \left(D_{\keygen} \cup D_{\mint}\right) \ne \emptyset\right] + \prob\left[\verify^{\oracle}(pk, (s, \sigma_{D_k})) \text{ accepts} \right]\\
    & \\
    \le& \mathbf{E}\left[\lvert\left(D_{k + 1} - D_k\right) \cap \left(D_{\keygen} \cup D_{\mint}\right)\rvert\right] + \prob\left[\verify^{\oracle}(pk, (s, \sigma_{D_k})) \text{ accepts} \right]
\end{align*}

Note that $D_{k} \subseteq D_{k + 1}$,
$$\mathbf{E}\left[\lvert\left(D_{k + 1} - D_k\right) \cap \left(D_{\keygen} \cup D_{\mint}\right)\rvert\right] = \mathbf{E}\left[\lvert D_{k + 1} \cap (D_{\keygen} \cup D_{\mint}) \rvert\right] - \mathbf{E}\left[\lvert D_k \cap (D_{\keygen} \cup D_{\mint}) \rvert\right]$$

Therefore, the claim holds true.
\end{proof}

Now let's combine the above results to prove \Cref{theorem:AnalysisOfA}. The probabilities and expectations below are over the randomness of $\ro$, the randomness of the generation of the input for $\adversary^{\oracle}$ and the randomness of our adversary $\adversary^{\oracle}$ (thus over the randomness of $t$ and $j$) unless otherwise stated.

By our construction and the union bound,
\begin{align*}
    &\prob \left[ \verify^{\oracle}(pk,(s,\phi_1)) \text{ accepts and } \verify^{\oracle}(pk,(s,\phi_2)) \text{ accepts} \right]\\
    \geq & 2\prob \left[ \verify^{\oracle}(pk,(s,\sigma_{D_j})) \text{ accepts} \right] - 1\\
    = &\frac{2}{N(n)}\sum_{k = 0}^{N(n) - 1} \prob \left[ \verify^{\oracle}(pk,(s,\sigma_{D_k})) \text{ accepts} \right] - 1
\end{align*}
From \Cref{claim:SimOnSigma} and \Cref{claim:UpdateDB},
\begin{align*}
&\frac{1}{N(n)} \sum_{k = 0}^{N(n) - 1}\prob \left[ \verify^{\oracle}(pk,(s,\sigma_{D_k})) \text{ accepts} \right]\\
\ge &\frac{1}{N(n)} \sum_{k = 0}^{N(n) - 1}\left(\prob \left[ \VerCircuit_{(pk, D_k, s)}^{\ket{\pspace}}(\sigma_{D_k}) \text{ accepts} \right] - \mathbf{E}\left[\lvert D_{k + 1} \cap (D_{\keygen} \cup D_{\mint}) \rvert\right] + \mathbf{E}\left[\lvert D_{k} \cap (D_{\keygen} \cup D_{\mint}) \rvert\right]\right)\\
\ge &\frac{1}{N(n)} \sum_{k = 0}^{N(n) - 1}\prob \left[ \VerCircuit_{(pk, D_k, s)}^{\ket{\pspace}}(\sigma_{D_k}) \text{ accepts} \right] - \frac{1}{N(n)}\mathbf{E}\left[\lvert D_{N(n)} \cap (D_{\keygen} \cup D_{\mint}) \rvert\right]\\
\ge &\frac{1}{N(n)} \sum_{k = 0}^{N(n) - 1}\prob \left[ \VerCircuit_{(pk, D_k, s)}^{\ket{\pspace}}(\sigma_{D_k}) \text{ accepts} \right] - \frac{l(n)}{N(n)}\\
\ge & 0.99\left(1 - \sqrt{1 - \delta_r + \epsilon}\right)^2 - 0.01\left(1 - \sqrt{1 - \delta_r + \epsilon}\right)^2\\
\geq & 0.9\left(1 - \sqrt{1 - \delta_r + \epsilon}\right)^2
\end{align*}

Therefore,
$$\prob \left[ \verify^{\oracle}(pk,(s,\phi_1)) \text{ accepts and } \verify^{\oracle}(pk,(s,\phi_2)) \text{ accepts} \right] \geq 1.8\left(1 - \sqrt{1 - \delta_r + \epsilon}\right)^2 - 1,$$
which ends our proof of \Cref{theorem:AnalysisOfA}.
\end{proof}

\begin{proof}[Proof of \Cref{theorem:insecurity}]
The proposed adversary $\adversary^{\oracle}$ is a valid attack because when $\epsilon = 0.01, \delta_r = 0.99$, 
$$1.8\left(1 - \sqrt{1 - \delta_r + \epsilon}\right)^2 - 1 \ge 1.8 (1 - 0.2)^2 - 1 \ge 0.1,$$ 
which is non-negligible.
\end{proof}

\section{Extensions to Quantum Access}\label{sec:AttackQuantumAccess}

In this section, we will explore a special case where some algorithms can have quantum access to the random oracle. 
We consider reusable secure \oqm $(\keygen^{\ket{\ro}, \ket{\pspace}}, \mint^{\ket{\ro}, \ket{\pspace}}, \verify^{\ro, \ket{\pspace}})$. Formally:

\begin{theorem}\label{theorem:quantumInsecurity}
Reusable and secure \oqm $( \keygen^{{\ket{\ro}},\ket{\pspace}}, \allowbreak\mint^{{\ket{\ro}},\ket{\pspace}}, \verify^{{\ro},\ket{\pspace}})$ does not exist where $\ro$ is a random oracle.
\end{theorem}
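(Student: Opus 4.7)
The plan is to use the \emph{same} adversary $\adversary^{\oracle}$ from Section~\ref{sec:AttackClassicalAccess}. Since $\adversary^{\oracle}$ only ever issues classical queries to $\ro$ (and quantum queries to $\pspace$), its specification is well-defined regardless of whether $\keygen$ and $\mint$ query $\ro$ classically or quantumly, and so the entire contribution of this section will live in the analysis. The two final accept/reject bits are classical, so I would move the whole experiment to the purified view from Section~\ref{sec:prelims:compressed_oracle}: replace $\ro$ by a register $\reg{F}$ initialized to the uniform superposition over truth tables, implement quantum queries as $U_F$, implement classical queries as $U_F$ followed by computational-basis measurement of the input/output wires, and defer every other measurement inside $\keygen$ and $\mint$ to the end. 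The success probability that the two banknotes output by $\adversary^{\oracle}$ are both accepted is unchanged by this move.

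The key step is the definition of $D_{\keygen}$ and $D_{\mint}$. Following the compressed-oracle bookkeeping of Section~\ref{sec:prelims:compressed_oracle}, after $\keygen$ and $\mint$ run coherently on top of $\reg{F}$ I would pass to the Fourier basis on $\reg{F}$ and declare $D_{\keygen}\cup D_{\mint}$ to be the (branch-wise) set of positions whose Fourier label is not $\hat{0}$. Because $\keygen$ and $\mint$ together apply $U_F$ at most $l(n)$ times and each such application flips at most one Fourier position away from $\hat{0}$, we have $\lvert D_{\keygen}\cup D_{\mint}\rvert \le l(n)$ \emph{in every branch}. This circumvents both naive attempts discussed in Section~\ref{sec:IntuitionWithR}: the database is automatically polynomial-sized, and there is no artificial amplitude threshold that can be driven to zero by contrived constructions.

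With this in hand I would port Claims~\ref{claim:dbwithvalidinput}, \ref{claim:SimOnRho}, \ref{claim:SimOnSigma} and \ref{claim:UpdateDB} one by one. The essential simulation lemma becomes: a classical query by $\verify$ at position $x$ behaves identically under the true $\ro$ (i.e.\ on-the-fly measurement of $\reg{F}$ at $x$) and under the $D_k$-simulated $\ro$ \emph{unless} the measurement outcome lies on a Fourier-non-$\hat{0}$ position outside $D_k$ -- which is the quantum analogue of the classical event ``verifier queried in $D_{\keygen}\cup D_{\mint}\setminus D_k$'' and is controlled by Zhandry's indistinguishability estimate for compressed oracles. The randomized-$t$ counting argument behind Claim~\ref{claim:dbwithvalidinput} then goes through verbatim with $l$ as the branch-wise database size, and the synthesizer bound of Claim~\ref{claim:SimOnSigma} is unaffected because $\syn$ only depends on the classical description of $\VerCircuit_{(pk,D_k,s)}^{\ket{\pspace}}$. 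Finally, the ``progress'' Claim~\ref{claim:UpdateDB} should read: each failed iteration deposits, in expectation, a number of previously unseen Fourier-non-$\hat{0}$ positions into $D_{k+1}\setminus D_k$ equal to the acceptance gap between $\VerCircuit_{(pk,D_k,s)}^{\ket{\pspace}}$ and $\verify^{\oracle}(pk,(s,\cdot))$ on $\sigma_{D_k}$. Summed over $N(n)$ rounds, the total expected growth is bounded by $l(n)$, and the arithmetic closing the argument is exactly the calculation at the end of the proof of Theorem~\ref{theorem:AnalysisOfA}, yielding
\[
\prob\!\left[\verify^{\oracle}(pk,(s,\phi_1))\text{ accepts and }\verify^{\oracle}(pk,(s,\phi_2))\text{ accepts}\right]\ \ge\ 1.8\bigl(1-\sqrt{1-\delta_r+\epsilon}\bigr)^2 - 1,
\]
which is non-negligible for $\delta_r = 0.99$, $\epsilon = 0.01$ and contradicts security.

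The hard part will be the quantum version of Claim~\ref{claim:UpdateDB}, that is, arguing rigorously that a genuine disagreement between $\verify^{\ro,\ket{\pspace}}$ and $\VerCircuit_{(pk,D_k,s)}^{\ket{\pspace}}$ on $\sigma_{D_k}$ really does force the computational-basis measurement of $\reg{F}$ to land on a Fourier-non-$\hat{0}$ position outside $D_k$ with probability at least the disagreement gap. Classically this is immediate, but in the purified view the verifier's classical measurement and the Fourier-basis structure of $\reg{F}$ interact non-trivially, and one must track amplitudes carefully across the interleaving of classical measurements by $\verify$ and the coherent Fourier-position updates left behind by $\keygen$ and $\mint$. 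I expect this to be the main technical lemma of the section; once it is established, the remaining claims are direct quantizations of their Section~\ref{sec:AttackClassicalAccess} counterparts and the theorem follows as in the classical case.
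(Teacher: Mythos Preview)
Your high-level plan matches the paper: reuse the classical-query adversary, pass to the purified view, and use the Fourier-non-$\hat 0$ positions of $\reg{F}$ as the quantum analogue of $D_{\keygen}\cup D_{\mint}$. The paper even formalizes this with a bespoke $\comp/\decomp$ unitary that maintains two databases, $D_{\ro}$ (the classically-queried positions) and $D_F$ (the remaining non-$\hat 0$ positions), and an observable $O$ counting $|D_F|$.

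There are, however, two concrete gaps in your proposal.

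First, the quantitative form of the progress claim is wrong. You assert that ``each failed iteration deposits, in expectation, a number of previously unseen Fourier-non-$\hat 0$ positions into $D_{k+1}\setminus D_k$ equal to the acceptance gap.'' In the quantum setting this linear relationship does not survive: because the verifier's classical measurement at position $x$ collapses a superposition over branches where $x\in D_F$ and branches where $x\notin D_F$, the trace-distance bound one actually proves (the paper's \Cref{lemma:diffofquery} and \Cref{claim:quantumprogress}) is
\[
\bigl|\text{gap}\bigr| \;\le\; 6\sqrt{q\cdot\bigl(\text{decrement of }\Tr(O\cdot)\bigr)},
\]
i.e.\ a \emph{square-root} relation, obtained query-by-query and then summed via Cauchy--Schwarz. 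Consequently the parameters cannot be reused from \Cref{sec:AttackClassicalAccess}: the paper takes $T(n)=36q q'/\epsilon^2$ and $N(n)=q q'/\bigl(\epsilon^2(1-\sqrt{1-\delta_r+\epsilon})^4\bigr)$, and the arithmetic closing the argument, while arriving at the same final numerical bound, is not ``exactly'' the classical calculation.

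Second, you are missing a structural lemma the paper needs (\Cref{RemoveUpd}). In the purified view the entire adversary---test phase \emph{and} update phase---runs before the final verification on $\rho_s^{(t)}$ is analyzed. To bound the decrement of $O$ during that verification by the decrement one would see immediately after the test phase, the paper must prove that the intervening update-phase queries can only \emph{shrink} $D_F$ and hence only decrease the chance of a bad query later. This monotonicity is not automatic and requires its own argument, tracking how an extra recorded query $\compV{U_R}$ on a separate register interacts with a subsequent $\compV{U_C}$. Your proposal does not anticipate this step.
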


Without loss of generality, we can suppose the algorithms only make queries to the random oracle on input length $l(n)$ and receive $1$ bit output where $l$ is a polynomial. (If they make queries to $\ro$ on various input lengths, suppose the maximal input length is $l'(n)$. Let $l(n) = l'(n) + \log l'(n)$. We can modify the algorithms so that their queries on input length $k(n)$ will be made on input length $l(n)$ where the first $k(n)$ bits stores the true query position, the middle $l'(n) - k(n)$ bits are 0, and the last $\log l'(n)$ bits indicates $k(n)$.)

Let $\verify$ make $q(n)$ classical queries to $\ro$. Let $\keygen$ and $\mint$ make $q'(n)$ quantum queries to $\ro$ in total. 
Denote the reusability and the security of the scheme as $\delta_r$ and $\delta_s$ respectively where $\delta_r = 1 - \negl(n), \delta_s = \negl(n)$. When it is clear from the context, we sometimes omit $n$ for simplicity.

It's worth noting that the attacker in \Cref{sec:AttackClassicalAccess} 
doesn't take advantage of the fact that $\keygen$ and $\mint$ there can only make classical queries to $\ro$. 
In fact, the same attacker works even when $\keygen$ and $\mint$ can make quantum queries to $\ro$ (with some modifications on the number of iterations). 
To be more specific, here is our construction of the attacker where $T(n)$, $N(n)$, the guarantee $a$ and the threshold $b$ of $\syn$ will be determined later.

\paragraph{$\adversary^{\ro, \ket{\pspace}}$}It takes as input a valid banknote $(s, \rho_s)$ and public key $pk$, and behaves as follows.

\begin{enumerate}
    \item {\bf Test phase}: Let $t \xleftarrow{\$} \{0, 1, \ldots, T(n) - 1\}$. Let $D = \emptyset$, $\rho_s^{(0)} = \rho_s$. Run the following $t$ times. In $i^{th}$ iteration,
    \begin{enumerate}
        \item  $\rho_s^{(i)} \otimes \ketbra{\bfx}{\bfx}  \leftarrow \verify^{\ro, \ket{\pspace}}(pk, s, \rho_s^{(i - 1)})$.
        \item Add query-answer pairs to $\ro$ in item (a) into $D$.
    \end{enumerate}
    \item {\bf Update phase}: Let $j \xleftarrow{\$} \{0, 1, \ldots, N(n) - 1\}$. Let $D_0 = D$. Run the following $j$ times. In $k^{th}$ iteration,
    \begin{enumerate}
        \item $\sigma_{D_{k - 1}} \leftarrow \syn^{\ket{\pspace}}(pk, D_{k - 1}, s)$.
        \item Run $\verify^{\ro, \ket{\pspace}}(pk, (s, \sigma_{D_{k - 1}}))$.
        \item Let $D_{k}$ consist of all the query-answer pairs to $\ro$ in item (b) and the pairs in $D_{k - 1}$.
    \end{enumerate}
    \item {\bf Synthesize phase}: Output $\phi = \phi_1 \otimes \phi_2$ where $\phi_i  \leftarrow \syn^{\ket{\pspace}}(pk, D_j, s)$ $(i = 1, 2)$.
\end{enumerate}

This description of $\adversary^{\ro, \ket{\pspace}}$ is actually equivalent to our adversary in \Cref{sec:AttackClassicalAccess}. We move the line $j \xleftarrow{\$} \{0, 1, \ldots, N(n) - 1\}$ to the front because it will be easier to analyze. 

What is left is to prove an analogue of \Cref{theorem:AnalysisOfA}. That is, the output states of $\adversary$ will be accepted with high probability.

\begin{theorem}\label{theorem:QuantumAnalysisOfA}
    Given input $(pk, (s, \rho_s))$ generated by $\keygen^{\ket{\ro}, \ket{\pspace}}$ and $\mint^{\ket{\ro}, \ket{\pspace}}$, $\adversary^{\ro, \ket{\pspace}}$ outputs two alleged banknotes associated with the serial number $s$ that will be accepted with high probability. Formally:
    $$\prob \left[ \verify^{\ro, \ket{\pspace}}(pk,(s,\phi_1)) \text{ accepts and } \verify^{\ro, \ket{\pspace}}(pk,(s,\phi_2)) \text{ accepts} \right] \geq 1.8\left(1 - \sqrt{1 - \delta_r + \epsilon}\right)^2 - 1,$$
    where the probability is over the randomness of $\ro$, the randomness of the generation of the input for $\adversary^{\ro, \ket{\pspace}}$ (that is, the randomness of $\keygen^{\ket{\ro}, \ket{\pspace}}$ and $\mint^{\ket{\ro}, \ket{\pspace}}$) and the randomness of our adversary $\adversary^{\ro, \ket{\pspace}}$.
\end{theorem}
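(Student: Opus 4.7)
The plan is to lift the entire analysis of \Cref{theorem:AnalysisOfA} into Zhandry's \emph{purified view} of the random oracle, as foreshadowed in the technical overview. Replace $\ro$ by the uniform superposition $\tfrac{1}{\sqrt{2^{2^l}}}\sum_f \ket{f}$ on a fresh oracle register $\reg{F}$ and run $\keygen$, $\mint$, the entire test phase, the update phase, and the synthesis phase coherently, deferring all measurements to the end. Because $\verify$ only makes classical queries to $\ro$ and computational-basis measurements commute with the standard oracle unitary, both the two output banknotes $\phi_1,\phi_2$ and the two final verdicts of $\verify^{\ro,\ket{\pspace}}$ in \Cref{theorem:QuantumAnalysisOfA} have exactly the same joint distribution in the purified view as in the real experiment. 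After the coherent execution of $\keygen$ and $\mint$, apply the compression unitary from \Cref{sec:prelims:compressed_oracle} to $\reg{F}$, measure the compressed database register in the computational basis, and let the resulting (random) set of query/answer pairs be denoted $D_{\keygen \mint}$; by the standard compressed-oracle analysis one has $|D_{\keygen \mint}| \le q'(n)$ with probability $1$. This $D_{\keygen \mint}$ will play the role that $D_{\keygen}\cup D_{\mint}$ played in \Cref{sec:AttackClassicalAccess}.

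Next I would re-establish the quantum analogues of the four claims that underlie \Cref{theorem:AnalysisOfA}. For the analogue of \Cref{claim:dbwithvalidinput}, condition on $D_{\keygen \mint}$ being fixed and observe that each classical $\verify$ query in the test phase is a computational-basis measurement of one position of $\reg{F}$; since only the at-most-$q'(n)$ positions in $D_{\keygen \mint}$ are ``relevant,'' the same randomised counting argument as in the classical proof shows that choosing $T(n) = \lceil q'(n)/\epsilon\rceil$ and $t$ uniform in $\{0,\ldots,T(n)-1\}$ forces the probability that the next verification ever queries $D_{\keygen \mint} \setminus D$ to be at most $\epsilon$. The analogue of \Cref{claim:SimOnRho} then follows because, conditioned on $\verify$ only querying positions in $D$, an on-the-fly simulation with $D$ is perfectly indistinguishable from the purified oracle on $\reg{F}$; the quantum analogue of \Cref{claim:SimOnSigma} falls out of \Cref{theorem:synthesizer} exactly as in the classical case, with $N(n) = 100\, q'(n)/(1-\sqrt{1-\delta_r+\epsilon})^2$ and guarantee/threshold $a, b$ chosen identically.

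The most delicate step, and the one I expect to be the main obstacle, is the analogue of \Cref{claim:UpdateDB}: the ``every mistake is progress'' inequality
\[
\prob[\VerCircuit^{\ket{\pspace}}_{(pk,D_k,s)}(\sigma_{D_k}) \text{ accepts}] - \prob[\verify^{\ro,\ket{\pspace}}(pk,(s,\sigma_{D_k})) \text{ accepts}] \le \E[|D_{k+1}\cap D_{\keygen\mint}|] - \E[|D_k\cap D_{\keygen\mint}|].
\]
In the classical setting this reduced to the combinatorial fact that $\verify$ can only detect a discrepancy by querying a position in $D_{\keygen\mint}\setminus D_k$, and every such query is recorded in $D_{k+1}$. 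In the purified view, $D_{\keygen\mint}$, the banknote state, the synthesised state $\sigma_{D_k}$, and the query positions of $\verify$ are all jointly entangled through $\reg{F}$, so one must work at the level of the joint distribution. I would handle this by inserting a deferred measurement of the compressed database after $\keygen,\mint$ (this commutes with the subsequent operations up to the compressed oracle's Fourier-basis analysis, since $\verify$ only queries classically) and a computational-basis measurement of each classical $\verify$ query as it occurs; under these measurements the execution of $\verify^{\ro,\ket{\pspace}}$ and of $\VerCircuit^{\ket{\pspace}}_{(pk,D_k,s)}$ couple perfectly unless some measured query position lands in $D_{\keygen\mint}\setminus D_k$, which is precisely the event that contributes a new element to $D_{k+1}\cap D_{\keygen\mint}$. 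Telescoping in $k$ and using $|D_{\keygen\mint}|\le q'(n)$ to bound the total telescoping contribution by $q'(n)/N(n)$ then reproduces the final calculation of \Cref{theorem:AnalysisOfA} and yields the $1.8(1-\sqrt{1-\delta_r+\epsilon})^2-1$ lower bound after averaging over $j$. The technical heart of the argument is therefore verifying that this ``deferred measurement of the compressed database plus classical measurements of $\verify$'s queries'' really does make the purified view match the classical intuition; this is where care with the Fourier-basis representation of the compressed oracle and with the non-trivial entanglement between $\reg{F}$ and the banknote state will be required.
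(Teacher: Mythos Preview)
Your plan has a genuine gap at its central step. You propose to measure the compressed oracle database after $\keygen,\mint$ to obtain a classical set $D_{\keygen\mint}$ and then replay the classical argument. But that measurement is a Fourier-basis measurement of the oracle register $\reg{F}$, whereas the subsequent classical queries of $\verify$ (and the banknote's own dependence on the oracle) are computational-basis operations on the very same register. These do \emph{not} commute: the positions recorded in the compressed database are generically entangled with the banknote state, and collapsing them alters both the banknote and the answers $\verify$ will later receive. The counterexample discussed in the technical overview makes this concrete: $\mint$ queries $\tfrac{1}{\sqrt{2^n}}\sum_s\ket{s}\ket{0}$, and after compression the database is a superposition of $\emptyset$ and all singletons $\{s\}$; measuring it has constant probability of returning $\emptyset$ while simultaneously destroying the correlation between $s$ and $\ro(s)$ that the money state relies on. So the sentence ``this commutes with the subsequent operations \ldots since $\verify$ only queries classically'' is exactly backwards, and the downstream analogues of \Cref{claim:dbwithvalidinput}--\Cref{claim:UpdateDB} would be about a disturbed experiment, not the real one.

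The paper avoids this by never collapsing the compressed database. Instead it introduces a \emph{hybrid} compression that keeps a classical database $D_{\ro}$ (for $\verify$'s recorded queries) alongside a Fourier-basis database $D_F$ in $\reg{F}$, and uses the observable $O=\sum_{D_F}|D_F|\,\ketbra{D_F}{D_F}$ as a potential function. The workhorse is \Cref{claim:quantumprogress}, which shows that the gap between real and simulated verification on any suitable state $\ket{\phi}$ is at most $6\sqrt{q\bigl(\Tr(O\phi)-\Tr(O\,\compV{U_{\verify}}\phi\,\compV{U_{\verify}}^{\dagger})\bigr)}$; this is proved query-by-query via \Cref{lemma:diffofquery}, and \Cref{RemoveUpd} shows the update phase can only decrease this potential drop. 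Because the bound is a \emph{square root} of the potential decrement rather than linear, the telescoping over $t$ and $j$ yields $6\sqrt{qq'/T}$ and $6\sqrt{qq'/N}$ (\Cref{claim:QuantumSimOnRho}, \Cref{claim:quantumSim}), forcing $T(n)=36qq'/\epsilon^2$ and $N(n)=qq'/(\epsilon^2(1-\sqrt{1-\delta_r+\epsilon})^4)$ rather than the linear-in-$q'$ values you chose. Your parameters would be too small even if the commutation issue could be fixed.
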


Similar to \Cref{theorem:AnalysisOfA}, we will show that the verification on $\sigma_{D_j} \leftarrow \syn^{\ket{\pspace}}(pk, D_j, s)$ accepts with high probability and then prove the theorem by union bound.

In \Cref{sec:AttackClassicalAccess}, we crucially rely on the fact that whenever we make a mistake, we make progress in the sense that we recover a query inside $D_{\keygen} \cup D_{\mint} - D_k$. 
However, now $\keygen, \mint$ can make quantum queries. As a result, $\keygen$ and $\mint$ could ``touch'' exponentially many positions. 
Fortunately, the compressed oralce technique introduced by Zhandry~\cite{zhandryCompressedOracle} can be seen as a quantum analogue of recording queries into a database.
Basically, if we run all the algorithms in the \emph{purified view} and see the register containing the oracle (labeled $\reg{F}$) in Fourier basis, then all except polynomial positions are $\ket{\hat 0}$ after polynomial quantum queries, and thus the register can be compressed using a unitary. In this work, in order to better mimic $D_k$ and $D_{\keygen} \cup D_{\mint} - D_k$ in \Cref{sec:AttackClassicalAccess}, we take advantage of the fact that $\verify$ only makes classical queries. To be more specific, we will maintain a register to store a database for all the classical queries and only record those non-$\ket{\hat 0}$ positions outside the database into $\reg{F}$. These two registers will be our analogue of $D_k$ and $D_{\keygen} \cup D_{\mint} - D_k$. We will elaborate on this idea in \Cref{subsec:compress}.

\subsection{A Purified View of the Algorithms}\label{subsec:purifiedView}

From \Cref{sec:prelims:compressed_oracle}, for any sequence of algorithms that only make queries to the random oracle on input length $l(n)$ and receive 1 bit output, we can analyze the output using a pure state that we obtain by running all the algorithms in the \emph{purified view} instead. By \emph{purified view}, we mean that we will purify the execution of the algorithms in the following way:
\begin{itemize}
    \item We will introduce a register $\reg{F}$ that contains the truth table of the oracle. Before the execution of the first algorithm, it is initialized to a uniform superposition of all the possible truth tables of the oracle, i.e. $\ket{\hat 0}^{\otimes 2^l}$.
    \item Instead of quantum query to $\ro$, we apply a unitary $U_Q: \ket{x}_{\reg{Q}}\ket{y}_{\reg{A}}\ket{f}_{\reg{F}} \rightarrow \ket{x}_{\reg{Q}}\ket{y \oplus f(x)}_{\reg{A}}\ket{f}_{\reg{F}}$ where $\reg{Q}$ stores the query position and $\reg{A}$ is for the answer bit. (The subscript $Q$ in $U_Q$ is for Quantum queries.)
    \item Instead of computational basis measurements, we apply $\mathit{CNOT}$ to copy it to a fresh ancilla.
\end{itemize}

Without loss of generality, we can suppose for any classical query to $\ro$, the register for query answer is always set to $\ket{0}$ before the query. Notice that a classical query to $\ro$ is equivalent to a computational basis measurement on the query position followed by a quantum query to $\ro$. An extra computational basis measurement on the answer of the query won't change the view. So a classical query in the purified view can be treated as applying the unitary $$U_C: \ket{x}_{\reg{Q}}\ket{0}_{\reg{A}}\ket{f}_{\reg{F}}\ket{D_{\ro}}_{\reg{D}_{\ro}} \rightarrow \ket{x}_{\reg{Q}}\ket{f(x)}_{\reg{A}}\ket{f}_{\reg{F}}\ket{D_{\ro}, (x, f(x))}_{\reg{D}_{\ro}}$$
where $\reg{D}_{\ro}$ is a register that we will use to purify the computational basis measurements in the classical queries. By $\ket{D_{\ro}}$, we mean a sequence of pairs $\ket{(x_1, z_1), (x_2, z_2), \cdots (x_k, z_k)}$ where $x_1, x_2, \cdots, x_k$ are not necessary to be distinct but if $x_i = x_j$, then we have the guarantee that $z_i = z_j$. Here $\reg{D}_{\ro}$ has enough space. That is, by $\ket{(x_1, z_1), \cdots (x_k, z_k)}$, we actually mean $\ket{(x_1, z_1), \cdots (x_k, z_k), \bot, \cdots, \bot}$ where $\bot$ is a special symbol that represents empty. Despite not being standard, we sometimes call $D_{\ro}$ database. (The subscript $C$ in $U_C$ is for Classical queries.)

Another convenient way to think of the purified view is to treat the execution of the algorithms as an interaction between two parties, the algorithm and the oracle. The oracle will maintain two private registers $\reg{F}$ and $\reg{D}_{\ro}$ (and also some ancillas initialized to be $\ket{0}$). If the algorithm is allowed to make quantum queries to $\ro$, the algorithm will submit $\reg{Q}\reg{A}$ to the oracle, and then the oracle will apply $U_Q$ and send $\reg{Q}\reg{A}$ back to the algorithm. If the algorithm is only allowed to make classical queries, the algorithm will submit $\reg{Q}$ to the oracle, and then the oracle will put a fresh ancilla on $\reg{A}$, apply $U_C$ and send $\reg{Q}\reg{A}$ to the algorithm. 

We will use $U_{\keygen, n}, U_{\mint, n}, U_{\verify, n}$ and $U_{\syn, n}$ to denote the unitary corresponding to the purified version of $\keygen$, $\mint$, $\verify$ and $\syn$ on security number $n$ respectively. Then $U_{\keygen, n}, U_{\mint, n}$ and $U_{\verify, n}$ are all in the form of preparing the first query and then repeatively answering the query by applying $U_Q$ or $U_C$ and preparing the next query (or the final output if there is no further query). In particular, we will write $U_{\verify, n}$ as $U_{q(n)}U_CU_{q(n) - 1}\cdots U_C U_0$. We will omit the subscript $n$ when it is clear from the context. 

Let $U_{\verify}':= U_qU_R U_{q- 1}\cdots U_R U_0$ where $U_R$ (the subscript $R$ is for Recording) is a unitary that in addition to a classical query $U_C$, it records the query-answer pair into a database maintained by $\adversary$. That is, $$U_R: \ket{x}_{\reg{Q}}\ket{0}_{\reg{A}}\ket{D_{\adversary}}_{\reg{D}_{\adversary}}\ket{f}_{\reg{F}}\ket{D_{\ro}}_{\reg{D}_{\ro}} \rightarrow \ket{x}_{\reg{Q}}\ket{f(x)}_{\reg{A}}\ket{D_{\adversary}, (x, f(x))}_{\reg{D}_{\adversary}}\ket{f}_{\reg{F}}\ket{D_{\ro}, (x, f(x))}_{\reg{D}_{\ro}}$$ where $\reg{D}_{\adversary}$ is the register that stores the database maintained by $\adversary$. Again by $\ket{D_{\adversary}}$ and $\ket{D_{\ro}}$, we mean a sequence of query-answer pairs where the query positions are not necessary to be distinct, but the pairs are consistent. $\reg{D}_{\adversary}$ has enough space. 

It's easy to see that $U_{\verify}'$ corresponds to running $\verify^{\ro, \ket{\pspace}}$ while the adversary records the query-answer pairs made by $\verify^{\ro, \ket{\pspace}}$. 

Then in the purified view, $\adversary^{\ro, \ket{\pspace}}$ is the following (we will denote by $U_{\adversary}$):

\begin{enumerate}
    \item Given input public key, serial number, the alleged banknote along with the register containing the truth table of the oracle, introduce a register $\reg{T}$ initialized to be $\frac{1}{\sqrt{T(n)}}\sum_{t = 0}^{T(n) - 1}\ket{t}_{\reg{T}}$ and introduce a register $\reg{J}$ initialized to be $\frac{1}{\sqrt{N(n)}}\sum_{j = 0}^{N(n) - 1}\ket{j}_{\reg{J}}$.
    \item {\bf{Test phase}}: Conditioned on the content in $\reg{T}$ is $t$, apply $U_{\verify}'$ on the banknote for $t$ times in sequential. (Or equivalently apply unitary $U_{\test} := \sum_{t = 0}^{T(n) - 1}U_{\verify}'^t\otimes \ketbra{t}{t}_{\reg{T}}$ where $U_{\verify}'^t$ means applying $U_{\verify}'$ for $t$ times.)
    \item {\bf Update phase}: Conditioned on the content in $\reg{J}$ is $j$, apply the following for $j$ times:
    \begin{enumerate}
        \item Apply $U_{\syn}$ on all the query-answer pairs we learn so far (i.e. the contents in $\reg{D}_{\adversary}$).
        \item Apply $U_{\verify}'$ on the state synthesized in item (a).
    \end{enumerate}
    (Or equivalently apply unitary $U_{\update} := \sum_{j = 0}^{N(n) - 1}(U_{\verify}'U_{\syn})^j \otimes \ketbra{j}{j}_{\reg{J}}$ where $(U_{\verify}'U_{\syn})^j$ means alternatively applying $U_{\syn}$ and $U_{\verify}'$ for $j$ times.)
    \item {\bf Synthesize phase}: Apply $U_{\syn_1}$ and $U_{\syn_2}$ on the query-answer pairs in $\reg{D}_{\adversary}$ to obtain two alleged banknotes where $U_{\syn_1}$ and $U_{\syn_2}$ are $U_{\syn}$ that acts on different registers.
\end{enumerate}

Then the acceptance probability of the following algorithms on the corresponding states (taking the randomness of $\ro$ into account) can be analyzed by running the following sequence of algorithms in the purified view.

\paragraph{$\bm{\verify^{\ro, \ket{\pspace}}(pk, (s, \cdot))}$ on $\bm{\rho_s^{(t)}}$}

The acceptance probability is the probability that 
\begin{enumerate}
    \item Run the algorithms $\keygen^{\ket{\ro}, \ket{\pspace}}, \mint^{\ket{\ro}, \ket{\pspace}}$ and $\adversary^{\ro, \ket{\pspace}}$ in a purified view sequentially where the input of $\keygen$ is the security parameter in unary notion, the input of $\mint$ is the output register corresponding to secret key of $\keygen$, and the input of $\adversary$ is the output register of $\mint$ and the register for the public key.
    \item Furthermore run $\verify^{\ro, \ket{\pspace}}$ in a purified view where the input of $\verify$ is the register for the public key, the register for the serial number and the register for $\rho_s^{(t)}$ (It's inside working space register of $\adversary$).
    \item Measure the outcome register of the above $\verify^{\ro, \ket{\pspace}}$ and obtain $\accept$.
\end{enumerate}

\paragraph{$\bm{\verify^{D_j, \ket{\pspace}}(pk, (s, \cdot))}$ on $\bm{\rho_s^{(t)}}$} 

Here $\verify^{D_j, \ket{\pspace}}$ means running $\verify^{\ro, \ket{\pspace}}$ where the queries to $\ro$ is answered by the database $D_j$ (all the query-answer pairs in $\reg{D}_{\adversary}$).

Define a unitary $$U_D: \ket{x}_{\reg{Q}}\ket{0}_{\reg{A}}\ket{D_j}_{\reg{D}_{\adversary}} \rightarrow \begin{cases}
    \ket{x}_{\reg{Q}}\ket{D_j(x)}_{\reg{A}}\ket{D_j, (x, D_j(x))}_{\reg{D}_{\adversary}}, & x \in D_j\\
    \ket{x}_{\reg{Q}}\sum_{z = 0}^1 \frac{1}{\sqrt 2} \ket{z}_{\reg{A}}\ket{D_j, (x, z)}_{\reg{D}_{\adversary}}, & x \notin D_j 
    \end{cases}$$
where by $\ket{D_j}$, we mean a sequence of query-answer pairs $\ket{(x_1, z_1), (x_2, z_2), \cdots (x_k, z_k)}$ where the query positions $x_1, x_2, \cdots, x_k$ are not necessary to be distinct, but the pairs are consistent. By $x \in D_j$, we mean there exists $z$ such that $(x, z)$ is a pair in $D_j$ and we will denote this $z$ as $D_j(x)$. By $x \notin D_j$, we mean for all $z$, $(x, z)$ is not a pair in $D_j$. (The subscript $D$ in $U_D$ is for simulating with Database.)

Then applying $U_D$ is exactly answering the query $x$ using $D_j$ (If $x$ is in the database, then answer the query using $D_j$; Otherwise, give a random answer while recording this query-answer pair into the database for later use). Thus the purified version of $\verify^{D_j, \ket{\pspace}}$ is $$U_{\simulation} := U_q U_D U_{q - 1}\cdots U_D U_0.$$

So the acceptance probability of $\verify^{D_j, \ket{\pspace}}(pk, (s, \cdot))$ on $\rho_s^{(t)}$ is the probability that 
\begin{enumerate}
    \item Run the first step in the case ${\verify^{\ro, \ket{\pspace}}(pk, (s, \cdot))}$ on ${\rho_s^{(t)}}$.
    \item Furthermore run $\verify^{D_j, \ket{\pspace}}$ in a purified view where the input of $\verify$ is the register for the public key, the register for the serial number, and the register for $\rho_s^{(t)}$ (The input is the same as the case ${\verify^{\ro, \ket{\pspace}}(pk, (s, \cdot))}$ on ${\rho_s^{(t)}}$).
    \item Measure the outcome register of the above $\verify^{D_j, \ket{\pspace}}$ and obtain $\accept$.
\end{enumerate}

\paragraph{$\bm{\verify^{\ro, \ket{\pspace}}(pk, (s, \cdot))}$ on $\bm{\sigma_{D_j}}$}

The acceptance probability is the probability that 
\begin{enumerate}
    \item Run the first step in the case ${\verify^{\ro, \ket{\pspace}}(pk, (s, \cdot))}$ on ${\rho_s^{(t)}}$.
    \item Furthermore run $\verify^{\ro, \ket{\pspace}}$ in a purified view where the input of $\verify$ is the register for the public key, the register for the serial number, and the register for $\sigma_{D_j}$ (It's the first register of the output state of $\adversary$).
    \item Measure the outcome register of the above $\verify^{\ro, \ket{\pspace}}$ and obtain $\accept$.
\end{enumerate}

\paragraph{$\bm{\verify^{D_j, \ket{\pspace}}(pk, (s, \cdot))}$ on $\bm{\sigma_{D_j}}$}

The acceptance probability is the probability that 
\begin{enumerate}
    \item Run the first step in the case ${\verify^{\ro, \ket{\pspace}}(pk, (s, \cdot))}$ on ${\rho_s^{(t)}}$.
    \item Furthermore run $\verify^{D_j, \ket{\pspace}}$ in a purified view where the input of $\verify$ is the register for the public key, the register for the serial number, and the register for $\sigma_{D_j}$ (The input is the same as the case ${\verify^{\ro, \ket{\pspace}}(pk, (s, \cdot))}$ on ${\sigma_{D_j}}$).
    \item Measure the outcome register of the above $\verify^{D_j, \ket{\pspace}}$ and obtain $\accept$.
\end{enumerate}

\subsection{Compress and Decompress}\label{subsec:compress}

Intuitively, $\ket{\hat 0}$ position in $\reg{F}$ is a uniform superposition of the range and it is unentangled with all other things, so it can be seen as choosing a value from the range uniformly at random independently, which is exactly what the simulation does. It is an analog of those positions that are never asked during the sequence of algorithms in the purely classical query case. 

In this subsection, we will show how to extract an analog of $D_k$ and $D_\keygen \cup D_\mint - D_k$ from the pure state. Roughly speaking, the recorded classical queries are an analog of $D_k$ and we will compress the register $\reg{F}$ to extract those non-$\ket{\hat 0}$ positions outside $D_k$ to form our analog of $D_\keygen \cup D_\mint - D_k$.

As it's easier to write down and analyze the inverse operation of compress, we first give a formal description of decompress unitary $\decomp$. Recall that $\reg{D}_{\ro}$ stores all the classical queries to $\ro$. $\reg{F}$ is a register for the random function. Define 
$$\decomp: \ket{D_F}_{\reg{F}} \ket{D_{\ro}}_{\reg{D}_{\ro}} \rightarrow \ket{f_0, f_1, \cdots, f_{2^l - 1}}_{\reg{F}} \ket{D_{\ro}}_{\reg{D}_{\ro}}$$
where $\ket{D_F}_{\reg F}$ can be written as a sequence of pairs $\ket{(x_1, \widehat{y_1}), (x_2, \widehat{y_2}), \cdots, (x_{k'}, \widehat{y_{k'}})}_{\reg{F}}$, $\ket{D_{\ro}}_{\reg{D}_{\ro}}$ can be written as a sequence of pairs $\ket{(x_1', z_1), (x_2', z_2), \cdots, (x_k', z_k)}$ and the input $\ket{D_F}_{\reg{F}} \ket{D_{\ro}}_{\reg{D}_{\ro}}$ satisfies 
\begin{itemize}
    \item if $x_i' = x_j'$, then $z_i = z_j$;
    \item $x_1 < x_2 < \cdots < x_{k'}, \widehat{y_i} \ne \widehat{0}$;
    \item $\forall i, j, x_j \ne x_i'$;
\end{itemize}
and the output satisfies
\begin{itemize}
    \item If $x_j' = i$, then $f_i = z_j$;
    \item If $x_j = i$, then $f_i = \widehat{y_j}$;
    \item If $\forall j, x_j \ne i, x_j' \ne i$, then $f_i = \widehat{0}$.
\end{itemize} 

Roughly speaking, we fill $f_0, f_1, \cdots, f_{2^l - 1}$ by looking at the pairs $(x_1', z_1), (x_2', z_2), \cdots, (x_k', z_k)$ and $(x_1, \widehat{y_1}), (x_2, \widehat{y_2}), \cdots, (x_{k'}, \widehat{y_{k'}})$. And we fill all the remaining positions with $\hat{0}$.

Here our register $\reg{F}$ also have enough space. As our random function only has one-bit outputs, $z_1, z_2, \cdots, z_k, y_1, y_2, \cdots, y_{k'} \in \{0, 1\}$. Recall that $\ket{\hat{0}}=\ket{+} = \frac{1}{\sqrt{2}}(\ket{0} + \ket{1}), \ket{\hat{1}}=\ket{-} = \frac{1}{\sqrt{2}}(\ket{0} - \ket{1})$. One can check that each two inputs in the above form are orthogonal and they are mapped to orthogonal outputs. So we can define the outputs of other inputs that are not in the above form so that $\decomp$ is a unitary.

\paragraph{More Notations} For simplicity, when we write $D_{\ro}$, we mean a sequence of consistent pairs $(x_1', z_1), (x_2', z_2), \cdots, (x_k', z_k)$ by default. By $x \in D_{\ro}$, we mean $\exists i$, $x_i' = x$. By $D_{\ro}(x)$ where $x \in D_{\ro}$, we mean $z_i$ where $x_i' = x$. When we write $D_F$, we mean a sequence $(x_1, \widehat{y_1}), (x_2, \widehat{y_2}), \cdots, (x_{k'}, \widehat{y_{k'}})$ that satisfies the second item of the input requirements above. By $x \in D_F$, we mean $\exists i$, $x_i = x$. By $D_F(x)$ where $x \in D_F$, we mean $\widehat{y_i}$ where $x_i = x$. By $\widehat{D_F(x)}$ where $x \in D_F$, we mean $y_i$ where $x_i = x$. By $D_F - x$, we mean the sequence we obtain after deleting $x_i, \widehat{y_i}$ from the sequence $D_F$ where $x_i = x$. Define $D_{\ro} \cap D_F = \{x: x \in D_{\ro} \text{ and } x \in D_F\}$.

The inverse operation of the above unitary $\decomp$ is compress, which can take our database $D_{\ro}$ and the truth table in register $\reg{F}$ as inputs and compress them into two databases $D_{\ro}$ and $D_F$. Define it as $\comp := \decomp^{\dagger}$. These two unitaries enable us to change our view between the decompressed one (a database for classical queries and a truth table) and the compressed one (a database for classical queries and another database). Here is a picture to illustrate this. $\ket{\theta}$ is an arbitrary state without compression. $\decompV{U}$ is a unitary in the decompressed view (It takes a state without compression as input and outputs a state without compression). Then $\compV{U} := \comp \decompV{U} \decomp$ is a compressed view version of $\decompV{U}$ (It takes a state after compression as input and outputs a state after compression). From now on, when we write unitary $\compV{\cdot}$, we mean it is in the compressed view.

\begin{center}
\begin{tikzpicture}[node distance=2.8cm, auto]\label{figure:compress}
\node (Input) {$\ket{\theta}$};
\node (CompressedInput)[right of=Input] {$\comp \ket{\theta}$};
\node (Output) [below of=Input] {$\decompV{U}\ket{\theta}$};
\node (CompressedOutput) [right of=Output] {$\comp \decompV{U} \ket{\theta}$};

\draw[transform canvas={yshift=0.5ex},->] (Input) --(CompressedInput) node[above,midway] {\tiny $\comp$};
\draw[transform canvas={yshift=-0.5ex},->](CompressedInput) -- (Input) node[below,midway] {\tiny $\decomp$}; 
\draw[->](Input) to node[swap] {$\decompV{U}$}(Output);
\draw[transform canvas={yshift=0.5ex},->] (Output) --(CompressedOutput) node[above,midway] {\tiny $\comp$};
\draw[transform canvas={yshift=-0.5ex},->](CompressedOutput) -- (Output) node[below,midway] {\tiny $\decomp$}; 
\draw[->](CompressedInput) to node {$\compV{U} = \comp \decompV{U} \decomp$}(CompressedOutput);
\end{tikzpicture}
\end{center}

Readers can treat $D_{\ro}$ as an analog of database $D_k$ in \Cref{sec:AttackClassicalAccess} and treat $D_F$ as an analog of databases $D_{\keygen} \cup D_{\mint} - D_k$ in \Cref{sec:AttackClassicalAccess}. Roughly speaking, $D_{\ro}$ stores our classical queries. The query positions in $D_F$ are those asked by $\keygen$ and $\mint$, but not recorded in $D_{\ro}$. We can understand this analog better after taking a look of the following query unitaries in the compressed view. 

Recall the unitaries $U_C$, $U_R$ and $U_D$ from \Cref{subsec:purifiedView}. They are for classical query, classical query while recording and simulated classical query respectively. Their compressed versions are the unitaries 
$\compV{U_C} := \comp \decompV{U_C} \decomp$,
$\compV{U_R} := \comp \decompV{U_R} \decomp$ and
$\compV{U_D} := \comp \decompV{U_D} \decomp$.

From the description of $\decomp$, we can get for $D_F \cap D_{\ro} = \emptyset,$
\begin{align*}
    &\compV{U_C}(\ket{x}_{\reg{Q}}\ket{0}_{\reg{A}}\ket{D_F}_{\reg{F}}\ket{D_{\ro}}_{\reg{D}_{\ro}})\\
    =&\begin{cases}
        \ket{x}_{\reg{Q}}\ket{D_{\ro}(x)}_{\reg{A}}\ket{D_F}_{\reg{F}}\ket{D_{\ro}, (x, D_{\ro}(x))}_{\reg{D}_{\ro}} &x \in D_{\ro}\\
        \ket{x}_{\reg{Q}}\frac{1}{\sqrt 2}\sum_{z = 0}^1\ket{z}_{\reg{A}}\ket{D_F}_{\reg{F}}\ket{D_{\ro}, (x, z)}_{\reg{D}_{\ro}} &x \notin D_{\ro}, x \notin D_F\\
        \ket{x}_{\reg{Q}}\frac{1}{\sqrt 2}\sum_{z = 0}^1 (-1)^{z\widehat{D_F(x)}} \ket{z}_{\reg{A}}\ket{D_F - x}_{\reg{F}}\ket{D_{\ro}, (x, z)}_{\reg{D}_{\ro}} &x \notin D_{\ro}, x \in D_F
    \end{cases}
\end{align*}
\begin{align*}
    &\compV{U_R}(\ket{x}_{\reg{Q}}\ket{0}_{\reg{A}}\ket{D_{\adversary}}_{\reg{D}_{\adversary}}\ket{D_F}_{\reg{F}}\ket{D_{\ro}}_{\reg{D}_{\ro}})\\
    =&\begin{cases}
        \ket{x}_{\reg{Q}}\ket{D_{\ro}(x)}_{\reg{A}}\ket{D_{\adversary}, (x, D_{\ro}(x))}_{\reg{D}_{\adversary}}\ket{D_F}_{\reg{F}}\ket{D_{\ro}, (x, D_{\ro}(x))}_{\reg{D}_{\ro}} &x \in D_{\ro}\\
        \ket{x}_{\reg{Q}}\frac{1}{\sqrt 2}\sum_{z = 0}^1\ket{z}_{\reg{A}}\ket{D_{\adversary}, (x, z)}_{\reg{D}_{\adversary}}\ket{D_F}_{\reg{F}}\ket{D_{\ro}, (x, z)}_{\reg{D}_{\ro}} &x \notin D_{\ro}, x \notin D_F\\
        \ket{x}_{\reg{Q}}\frac{1}{\sqrt 2}\sum_{z = 0}^1 (-1)^{z\widehat{D_F(x)}} \ket{z}_{\reg{A}} \ket{D_{\adversary}, (x, z)}_{\reg{D}_{\adversary}} \ket{D_F - x}_{\reg{F}}\ket{D_{\ro}, (x, z)}_{\reg{D}_{\ro}} &x \notin D_{\ro}, x \in D_F
    \end{cases}
\end{align*}

Since $U_D$ does not act on $\reg{F}$ and $\reg{D}_{\ro}$, $\compV{U_D} = U_D$. Thus 
\begin{align*}
    &\compV{U_D}(\ket{x}_{\reg{Q}}\ket{0}_{\reg{A}}\ket{D_{\adversary}}_{\reg{D}_{\adversary}})\\
    =& \begin{cases}
    \ket{x}_{\reg{Q}}\ket{D_{\adversary}(x)}_{\reg{A}}\ket{D_{\adversary}, (x, D_{\adversary}(x))}_{\reg{D}_{\adversary}}, & x \in D_{\adversary}\\
    \ket{x}_{\reg{Q}}\frac{1}{\sqrt 2}\sum_{z = 0}^1 \ket{z}_{\reg{A}}\ket{D_{\adversary}, (x, z)}_{\reg{D}_{\adversary}}, & x \notin D_{\adversary} 
        \end{cases}
\end{align*}

By the description of $\compV{U_C}$ and $\compV{U_R}$, whenever we ask a classical query on input $x \in D_{\ro}$, we answer it with our database $D_{\ro}$ and record $(x, D_{\ro}(x))$ for another time; whenever we ask a classical query on input $x \notin D_{\ro} \cup D_F$, we answer it with a random $z$ and record $(x, z)$ in our database for later use; whenever we ask a classical query on input $x \in D_F$, we actually copy the answer from the $D_F$, record it, and remove $x$ from $D_F$. The above three cases is analogous to the classical on-the-fly simulation where the query to $\ro$ inside $D_k$ can be answered by $D_k$, the query to $\ro$ inside $D_{\keygen} \cup D_{\mint} - D_k$ should be answered consistently by $D_{\keygen} \cup D_{\mint} - D_k$ and we can give a random answer to the query to outside $(D_{\keygen} \cup D_{\mint} - D_k) \cup D_k$. It is worth pointing out that $\compV{U_C}$ and $\compV{U_R}$ maintain the property that $D_{\ro} \cap D_F$ is empty (analogous to $\left(D_{\keygen}\cup D_{\mint} - D_k\right) \cap D_k = \emptyset$).

Furthermore, recall that $\decompV{U_{\verify}} = U_q \decompV{U_C} U_{q - 1} \cdots U_1 \decompV{U_C} U_0$ and $U_i$ does not act on $\reg{F}$ or $\reg{D}_{\ro}$. Thus $\compV{U_i} = U_i$,  $\compV{U_{\verify}} = \compV{U_q} \compV{U_C} \compV{U_{q - 1}} \cdots \compV{U_1} \compV{U_C} \compV{U_0} = U_q \compV{U_C} U_{q - 1} \cdots U_1 \compV{U_C} U_0.$ Similarly, recall that the purified version of $\verify^{D_j, \ket{\pspace}}$ is $U_{\simulation} = U_q U_D U_{q - 1}\cdots U_D U_0$. Thus $\compV{U_{\simulation}} = U_q \compV{U_D} U_{q - 1}\cdots \compV{U_D} U_0 = U_{\simulation}$ (Because $\compV{U_D} = U_D$).

\subsection{Analysis of $\adversary^{\ro, \ket{\pspace}}$}\label{subsec:QuantumAnalysis}

Here we analyze the acceptance probability of $\verify^{\ro, \ket{\pspace}}$ on the output of our $\adversary^{\ro, \ket{\pspace}}$. We reuse our ideas in \Cref{sec:AttackClassicalAccess}. Readers are encouraged to refer to our notation table in \Cref{NotationTable} when confused about the notations. 

The following proposition can be seen as an analog of \Cref{claim:UpdateDB} except that we work on a general state instead of solely analyzing $\sigma_{D_k}$. It basically says that when the behavior of $\verify^{\ro, \ket{\pspace}}$ (corresponding to $\compV{U_{\verify}} = U_q\compV{U_C}\cdots \compV{U_C}U_0$) is far from a simulation $\verify^{D, \ket{\pspace}}$ (corresponding to $\compV{U_{\simulation}} = U_qU_D\cdots U_DU_0$), then the number of pairs in $\reg{F}$ (analogous to $\length{D_{\keygen} \cup D_{\mint} - D_k}$) will drop a lot after the verification. The intuition is that roughly speaking, $\compV{U_C}$ and $U_D$ only behave differently when given a query position $x \in D_F$, in which case $x$ will be excluded from $D_F$ after applying $\compV{U_C}$. So it results in a decrement of the number of pairs in $\reg{F}$. Formally:

\begin{proposition}\label{claim:quantumprogress}
Denote $O$ to be the observable corresponding to the number of pairs in $\reg{F}$. To be more specific, $O = \sum_{D_F}\length{D_F}\ketbra{D_F}{D_F}_{\reg{F}}$ where $\length{D_F}$ is the number of pairs in $D_F$.

For a state $\ket{\phi}$ in the following form (i.e. it's in the compressed view and the contents in $\reg{D}_{\ro}$ and $\reg{D}_{\adversary}$ are the same),
$$\ket{\phi} = \sum_{\substack{pk, s, m, D, D_F, g \\ \text{s.t. }D \cap D_F = \emptyset}}\alpha_{pk, s, m, D, D_F, g}\ket{pk}_{\reg{Pk}}\ket{s}_{\reg{S}}\ket{m}_{\reg{M}}\ket{D}_{\reg{D}_{\adversary}}\ket{D_F}_{\reg{F}}\ket{D}_{\reg{D}_{\ro}}\ket{g}_{\reg{G}}.$$

Let $\prob\left[\compV{U_{\verify}} \text{ accepts when running on $\ket{\phi}$}\right]$ be the acceptance probability of $\compV{U_{\verify}}$ when the public key, the serial number and the alleged money state are in $\reg{Pk}, \reg{S}$ and $\reg{M}$ respectively (It also equals to the acceptance probability of $U_{\verify}$ on $\decomp \ket{\phi}$ because $\compV{U_{\verify}}\ket{\phi} = \comp U_{\verify} (\decomp \ket{\phi})$ and $\comp$ does not act on the output bit). 

Let $\prob\left[\compV{U_{\simulation}} \text{ accepts when running on $\ket{\phi}$}\right]$ be the acceptance probability of $\compV{U_{\simulation}}$ when the public key, the serial number, the alleged money state and the database for simulating the random oracle are in $\reg{Pk}, \reg{S}, \reg{M}$ and $\reg{D}_{\adversary}$ respectively (It also equals to the acceptance probability of $U_{\simulation}$ on $\decomp \ket{\phi}$).
\begin{align*}
    &\left|\prob\left[\compV{U_{\verify}} \text{ accepts when running on $\ket{\phi}$}\right] - \prob\left[\compV{U_{\simulation}} \text{ accepts when running on $\ket{\phi}$}\right]\right|\\
    \le &\TraceDist{\Tr_{\reg{F}\reg{D}_{\adversary}\reg{D}_{\ro}}(\compV{U_{\verify}} \ketbra{\phi}{\phi}\compV{U_{\verify}}^{\dagger})}
    {\Tr_{\reg{F}\reg{D}_{\adversary}\reg{D}_{\ro}}(\compV{U_{\simulation}} \ketbra{\phi}{\phi}\compV{U_{\simulation}}^{\dagger})}\\
    \le & 6 \sqrt{q\left(\Tr(O\ketbra{\phi}{\phi}) - \Tr(O\compV{U_{\verify}}\ketbra{\phi}{\phi}\compV{U_{\verify}}^{\dagger})\right)}
\end{align*}
\end{proposition}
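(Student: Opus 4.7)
The first inequality is standard: the event $\accept$ corresponds to measuring a designated output qubit, so the gap in acceptance probabilities between the two reduced states is at most their trace distance, and partial trace is moreover non-increasing for trace distance. The plan for the second inequality is (a) to replace $\compV{U_\simulation}$ by a conjugate ``hatted'' simulation that acts on $\reg{D}_\ro$, (b) to run a $q$-step hybrid argument between that and $\compV{U_\verify}$, and (c) to bound each hybrid step by the expected decrement of the observable $O$ at the corresponding query.

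\textbf{Paragraph 2.} For step (a), I will define $\hat U_\simulation$ by replacing every occurrence of $U_D$ (which reads and writes $\reg{D}_\adversary$) inside $\compV{U_\simulation}$ by the analogous unitary $\hat U_D$ acting on $\reg{D}_\ro$, and let $\mathsf{SWAP}$ exchange these two registers. The hypothesis that $\ket{\phi}$ has identical content in $\reg{D}_\ro$ and $\reg{D}_\adversary$ gives $\mathsf{SWAP}\ket{\phi} = \ket{\phi}$; combined with $\hat U_\simulation = \mathsf{SWAP}\,\compV{U_\simulation}\,\mathsf{SWAP}$ and the invariance of the partial trace over $\reg{D}_\ro\reg{D}_\adversary$ under their permutation, this yields $\Tr_{\reg{F}\reg{D}_\adversary\reg{D}_\ro}(\hat U_\simulation\ketbra{\phi}{\phi}\hat U_\simulation^\dagger) = \Tr_{\reg{F}\reg{D}_\adversary\reg{D}_\ro}(\compV{U_\simulation}\ketbra{\phi}{\phi}\compV{U_\simulation}^\dagger)$. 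Using $\TraceDist{\ketbra{\psi}{\psi}}{\ketbra{\chi}{\chi}} \le \|\ket{\psi} - \ket{\chi}\|$ together with non-increase of trace distance under partial trace, it then suffices to bound $\|(\compV{U_\verify} - \hat U_\simulation)\ket{\phi}\|$.

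\textbf{Paragraph 3.} For step (b), I will define $W_k$ to use $\compV{U_C}$ for the first $k$ queries and $\hat U_D$ for queries $k+1,\dots,q$; so $W_0 = \hat U_\simulation$, $W_q = \compV{U_\verify}$, and the telescoping triangle inequality reduces the task to bounding each $\|(\compV{U_C} - \hat U_D)\ket{\psi_k}\|$, where $\ket{\psi_k}$ is the state of $\compV{U_\verify}$ right before its $k$-th query (identical in $W_{k-1}$ and $W_k$). For step (c), let $\Pi_{\mathrm{bad}}$ project onto basis states $\ket{x}\ket{0}\ket{D_\ro}\ket{D_F}$ with $x \notin D_\ro$ and $x \in D_F$. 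Direct inspection of the formulas for $\compV{U_C}$ and $\hat U_D$ shows they coincide on $I - \Pi_{\mathrm{bad}}$, and on each bad basis state produce outputs that are orthogonal (their $\reg{F}$ contents being $\ket{D_F - x}$ vs.\ $\ket{D_F}$); a routine case analysis further rules out cross-terms between distinct bad basis states, giving $\|(\compV{U_C} - \hat U_D)\ket{\psi_k}\|^2 = 2 p_k$ with $p_k := \|\Pi_{\mathrm{bad}}\ket{\psi_k}\|^2$. Since $\compV{U_C}$ deletes exactly one pair from $\reg{F}$ on bad basis states while the $U_i$'s do not touch $\reg{F}$, summing yields $\sum_{k=1}^q p_k = \Tr(O\ketbra{\phi}{\phi}) - \Tr(O\,\compV{U_\verify}\ketbra{\phi}{\phi}\compV{U_\verify}^\dagger)$. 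A final Cauchy--Schwarz then gives $\sum_{k=1}^q \sqrt{2 p_k} \le \sqrt{2 q \sum_{k=1}^q p_k}$, which is comfortably at most $6\sqrt{q \cdot (\mathrm{drop~in~}O)}$.

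\textbf{Paragraph 4.} The hardest part will be the per-step orthogonality analysis: one must verify that $\compV{U_C}$ and $\hat U_D$ truly coincide on the good subspace (handling the Fourier-basis phase $(-1)^{z\widehat{D_F(x)}}$ correctly, which appears only in the bad branch of $\compV{U_C}$), and that on the bad subspace the outputs are pairwise orthogonal both on each fixed basis state and across distinct bad basis states. One also has to check that the invariant $D_\ro \cap D_F = \emptyset$ propagates through $\compV{U_C}$, so that the good/bad decomposition remains well-defined throughout the hybrid sequence.
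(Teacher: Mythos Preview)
Your proposal is correct and follows the same overall architecture as the paper: replace the $\reg{D}_{\adversary}$-based simulation by the $\reg{D}_{\ro}$-based one using the hypothesis that the two registers agree on $\ket{\phi}$, run a $q$-step hybrid in which the first $k$ queries use $\compV{U_C}$ and the remaining ones use the simulated query, identify the per-step error with the ``bad'' weight $p_k$, show $p_k$ equals the drop in $\Tr(O\,\cdot)$ at that step, and finish with Cauchy--Schwarz.

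The one genuine technical difference is how you control each hybrid step. The paper bounds the \emph{trace distance} between $\compV{U_C}U_j\ketbra{\phi_j}{\phi_j}U_j^\dagger\compV{U_C}^\dagger$ and $U_D'U_j\ketbra{\phi_j}{\phi_j}U_j^\dagger U_D'^\dagger$ directly, expanding the rank-one projectors into good/bad cross terms and estimating the trace norm; this is where the constant $6$ comes from (via $2(\alpha + 2\sqrt{\alpha(1-\alpha)}) \le 6\sqrt{\alpha}$). You instead pass once, at the outset, through the pure-state inequality $\TraceDist{\ketbra{\psi}{\psi}}{\ketbra{\chi}{\chi}} \le \|\ket{\psi}-\ket{\chi}\|$ and then telescope in the \emph{Euclidean norm}, exploiting the orthogonality of $\compV{U_C}\ket{\mathrm{bad}}$ and $\hat U_D\ket{\mathrm{bad}}$ (distinguished by whether the query position lies in $D_F$ after the step) to get $\|(\compV{U_C}-\hat U_D)\ket{\psi_k}\| = \sqrt{2p_k}$. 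Your route is a bit cleaner and yields the sharper constant $\sqrt{2}$ in place of $6$; the paper's route avoids having to verify the cross-term orthogonality between distinct bad basis states, at the cost of the looser constant. Both lead to the stated bound.
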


\begin{proof}

The first inequality follows immediately from the fact that we can measure a qubit (not in $\reg{F}\reg{D}_{\adversary}\reg{D}_{\ro}$) of $\compV{U_{\verify}}\ket{\phi}$ and $\compV{U_{\simulation}}\ket{\phi}$ to obtain whether they accept and the fact that we can not distinguish two states with probability greater than their trace distance.

As for the second inequality, recall that $\compV{U_{\verify}} = U_q\compV{U_C}\cdots \compV{U_C}U_0$ and $\compV{U_{\simulation}} = U_qU_D\cdots U_DU_0$. Let $U_D'$ be the same as $U_D$ except that it uses the contents in $D_{\ro}$ for simulation instead of the contents in $D_{\adversary}$. To be more specific, 
\begin{align*}
    U_D'(\ket{x}_{\reg{Q}}\ket{0}_{\reg{A}}\ket{D_{\ro}}_{\reg{D}_{\ro}})
    = \begin{cases}
        \ket{x}_{\reg{Q}}\ket{D_{\ro}(x)}_{\reg{A}}\ket{D_{\ro}, (x, D_{\ro}(x))}_{\reg{D}_{\ro}}, & x \in D_{\ro}\\
        \ket{x}_{\reg{Q}}\frac{1}{\sqrt 2}\sum_{z = 0}^1 \ket{z}_{\reg{A}}\ket{D_{\ro}, (x, z)}_{\reg{D}_{\ro}}, & x \notin D_{\ro} 
        \end{cases}
\end{align*}

Define $\ket{\phi_j} = \compV{U_C}U_{j - 1}\cdots \compV{U_C} U_0\ket{\phi}$ where $0 \le j \le q$. In order to analyze the difference of $\compV{U_{\verify}}$ and $\compV{U_{\simulation}}$ on $\ket{\phi}$, it's enough to analyze the difference between one true query and one simulated query. Formally, 
\begin{align*}
    &\TraceDist{\Tr_{\reg{F}\reg{D}_{\adversary}\reg{D}_{\ro}}(\compV{U_{\verify}} \ketbra{\phi}{\phi}\compV{U_{\verify}}^{\dagger})}
    {\Tr_{\reg{F}\reg{D}_{\adversary}\reg{D}_{\ro}}(\compV{U_{\simulation}} \ketbra{\phi}{\phi}\compV{U_{\simulation}}^{\dagger})}\\
    = &\TraceDist{\Tr_{\reg{F}\reg{D}_{\adversary}\reg{D}_{\ro}}(U_q\ketbra{\phi_q}{\phi_q}U_q^{\dagger})}
    {\Tr_{\reg{F}\reg{D}_{\adversary}\reg{D}_{\ro}}(U_qU_D'\cdots U_D' U_0\ketbra{\phi_{0}}{\phi_{0}}U_0^{\dagger}U_D'^{\dagger}\cdots U_D'^{\dagger}U_q^{\dagger})}\\
    \le & {\sum_{j = 0}^{q - 1} \TraceDist{\Tr_{\reg{F}\reg{D}_{\adversary}\reg{D}_{\ro}}(U_qU_D'\cdots U_{j + 1} \ketbra{\phi_{j + 1}}{\phi_{j + 1}}U_{j + 1}^{\dagger}U_D'^{\dagger}\cdots U_q^{\dagger})}{\Tr_{\reg{F}\reg{D}_{\adversary}\reg{D}_{\ro}}(U_qU_D'\cdots U_j\ketbra{\phi_{j}}{\phi_{j}}U_j^{\dagger}U_D'^{\dagger}\cdots U_q^{\dagger})}}\\
    \le & \sum_{j = 0}^{q - 1} \TraceDist{\compV{U_C}U_j\ketbra{\phi_{j}}{\phi_{j}}U_j^{\dagger}\compV{U_C}^{\dagger}}{U_D'U_j\ketbra{\phi_{j}}{\phi_{j}}U_j^{\dagger}U_D'^{\dagger}}
\end{align*}
where we use the fact that $\ket{\phi}$ have the same contents on $\reg{D}_{\adversary}$ and $\reg{D}_{\ro}$ and thus $\Tr_{\reg{F}\reg{D}_{\adversary}\reg{D}_{\ro}}(\compV{U_{\simulation}} \ketbra{\phi}{\phi}\compV{U_{\simulation}}^{\dagger})$ equals to $\Tr_{\reg{F}\reg{D}_{\adversary}\reg{D}_{\ro}}(U_qU_D'\cdots U_D' U_0\ketbra{\phi_{0}}{\phi_{0}}U_0^{\dagger}U_D'^{\dagger}\cdots U_D'^{\dagger}U_q^{\dagger})$.

$\compV{U_C}$ and $U_D'$ act differently only when the query position is inside $D_F$. So the difference between one true query and one simulated query can be bounded by the weight of queries inside $D_F$, which equals the decrement of the number of pairs in $\reg{F}$ after the query. Formally, we give the following lemma, whose proof is deferred to \Cref{appendix:proof1}.

\begin{lemma}\label{lemma:diffofquery}
$\TraceDist{\compV{U_C}U_j\ketbra{\phi_{j}}{\phi_{j}}U_j^{\dagger}\compV{U_C}^{\dagger}}{U_D'U_j\ketbra{\phi_{j}}{\phi_{j}}U_j^{\dagger}U_D'^{\dagger}} \le 6\sqrt{\Tr(O\ketbra{\phi_j}{\phi_j}) - \Tr(O\ketbra{\phi_{j + 1}}{\phi_{j + 1}})}.$
\end{lemma}

We can insert \Cref{lemma:diffofquery} into the above inequality and obtain
\begin{align*}
&\TraceDist{\Tr_{\reg{F}\reg{D}_{\adversary}\reg{D}_{\ro}}(\compV{U_{\verify}} \ketbra{\phi}{\phi}\compV{U_{\verify}}^{\dagger})}
{\Tr_{\reg{F}\reg{D}_{\adversary}\reg{D}_{\ro}}(\compV{U_{\simulation}} \ketbra{\phi}{\phi}\compV{U_{\simulation}}^{\dagger})}\\
\le & \sum_{j = 0}^{q - 1} 6\sqrt{\Tr(O\ketbra{\phi_j}{\phi_j}) - \Tr(O\ketbra{\phi_{j + 1}}{\phi_{j + 1}})}\\
\le & 6\sqrt{q\sum_{j = 0}^{q - 1}\left(\Tr(O\ketbra{\phi_j}{\phi_j}) - \Tr(O\ketbra{\phi_{j + 1}}{\phi_{j + 1}})\right)} \text{   (Cauchy–Schwarz inequality)}\\
= & 6 \sqrt{q\left(\Tr(O\ketbra{\phi}{\phi}) - \Tr(O\compV{U_C}U_{q - 1}\cdots \compV{U_C}U_0\ketbra{\phi}{\phi}U_0^{\dagger}\compV{U_C}^{\dagger}\cdots U_{q - 1}^{\dagger}\compV{U_C}^{\dagger})\right)}\\
= & 6 \sqrt{q\left(\Tr(O\ketbra{\phi}{\phi}) - \Tr(OU_q\compV{U_C}\cdots \compV{U_C}U_0\ketbra{\phi}{\phi}U_0^{\dagger}\compV{U_C}^{\dagger}\cdots \compV{U_C}^{\dagger}U_q^{\dagger})\right)}\\
= & 6 \sqrt{q\left(\Tr(O\ketbra{\phi}{\phi}) - \Tr(O \compV{U_{\verify}}\ketbra{\phi}{\phi}\compV{U_{\verify}}^{\dagger})\right)}
\end{align*}
where we use again the fact that $U_q$ does not act on $\reg{F}$ and thus $O$ commutes with $U_q$.

\end{proof}

The next claim is an analog of \Cref{claim:SimOnRho}. Basically, it argues that on average, $\rho_s^{(t)}$ is a good witness state for the simulation with database $D_j$ even when $\keygen$ and $\mint$ can make quantum queries to $\ro$.

The intuition of the proof is the following: from \Cref{subsec:purifiedView}, the difference between the acceptance probabilities of $\verify^{\ro, \ket{\pspace}}(pk, (s, \cdot))$ and $\verify^{D_j, \ket{\pspace}}(pk, (s, \cdot))$ on $\rho_s^{(t)}$ is the difference between running $\verify^{\ro, \ket{\pspace}}$ and $\verify^{D_j, \ket{\pspace}}$ in the purified view on the same state (i.e. the difference between applying $U_{\verify}$ and $U_{\simulation}$ on the same state), which can be transformed to the compressed view and by \Cref{claim:quantumprogress} can be bounded by the decrement of the number of pairs in $\reg{F}$ after the verification. Roughly speaking, the decrement equals to the number of pairs in $D_F$ asked during the verification. But we randomize $t$, so running another verification on $\rho_s^{(t)}$ should not decrease the number of pairs in $\reg{F}$ too much on average. Formally:

\begin{claim}\label{claim:QuantumSimOnRho}
    $$\left|\prob \left[\verify^{\ro, \ket{\pspace}}(pk, (s, \rho_s^{(t)})) \text{ accepts} \right] - \prob \left[\verify^{D_j, \ket{\pspace}}(pk, (s, \rho_s^{(t)})) \text{ accepts} \right]\right| \le 6\sqrt{\frac{q(n)q'(n)}{T(n)}}$$ where the probabilities are taken over the randomness of $\ro$, the randomness of the inputs to the adversary and the randomness of the adversary (so $t$ and $j$ are both randomized).
\end{claim}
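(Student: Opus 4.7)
The plan is to mirror the classical-query argument of \Cref{claim:SimOnRho} and \Cref{claim:UpdateDB} but instrumented with the compressed-oracle machinery of \Cref{subsec:purifiedView,subsec:compress}. Working in the purified and compressed view, let $\ket{\Phi}$ be the joint pure state after $\keygen^{\ket{\ro},\ket{\pspace}}$, $\mint^{\ket{\ro},\ket{\pspace}}$, and a run of $\adversary$. Because the test and update phases are applied controlled on the superposed registers $\reg{T}$ and $\reg{J}$, the state decomposes in the $\ket{t}_{\reg{T}}\ket{j}_{\reg{J}}$ basis as
\[
\ket{\Phi} \;=\; \tfrac{1}{\sqrt{T(n)N(n)}}\sum_{t,j}\ket{\phi_{t,j}}\otimes\ket{t}_{\reg{T}}\otimes\ket{j}_{\reg{J}},\qquad \ket{\phi_{t,j}} \;=\; (\compV{U_{\verify}}\compV{U_{\syn}})^{j}(\compV{U_{\verify}})^{t}\ket{\phi_{0}}.
\]
Every call of $\verify$ makes only classical queries, each tracked by $\compV{U_R}$ which mirrors the recorded pair into both $\reg{D}_{\ro}$ and $\reg{D}_{\adversary}$, so each $\ket{\phi_{t,j}}$ fits the hypothesis of \Cref{claim:quantumprogress}. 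Applying that proposition pointwise, the LHS of the claim is bounded by $\E_{t,j}\bigl[6\sqrt{q(n)(c_{t,j}-c_{t,j}^{+})}\bigr]$, where $c_{t,j}:=\Tr\bigl(O\,\ketbra{\phi_{t,j}}{\phi_{t,j}}\bigr)$ and $c_{t,j}^{+}:=\Tr\bigl(O\,\compV{U_{\verify}}\ketbra{\phi_{t,j}}{\phi_{t,j}}\compV{U_{\verify}}^{\dagger}\bigr)$ record the expected number of pairs in $\reg{F}$ just before and after one more $\compV{U_{\verify}}$. Concavity of $\sqrt{\cdot}$ upgrades this to $6\sqrt{q(n)\,\E_{t,j}[c_{t,j}-c_{t,j}^{+}]}$.

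It therefore suffices to prove $\E_{t,j}[c_{t,j}-c_{t,j}^{+}]\le q'(n)/T(n)$. Two monotonicity facts do the heavy lifting. First, $\compV{U_{\syn}}^{\dagger}O\compV{U_{\syn}}=O$ because the synthesizer, being a $\ket{\pspace}$-algorithm, never touches $\reg{F}$. Second, case-analysis of $\compV{U_{C}}$ shows a classical query either leaves $D_{F}$ unchanged or deletes a single entry from it, never adding one; hence $\compV{U_{\verify}}^{\dagger}O\compV{U_{\verify}}\preceq O$ as an operator inequality. These are combined with the starting estimate $\Tr\bigl(O\ketbra{\phi_{0}}{\phi_{0}}\bigr)\le q'(n)$ -- which holds because each of the $q'(n)$ quantum queries by $\keygen,\mint$ can raise the expected pair count by at most one -- to telescope the test phase: writing $c_{t}:=\Tr\bigl(O\,\compV{U_{\verify}}^{t}\ketbra{\phi_{0}}{\phi_{0}}\compV{U_{\verify}}^{t\dagger}\bigr)$, one obtains $\sum_{t=0}^{T(n)-1}(c_{t}-c_{t+1})\le q'(n)$, and hence $\E_{t}[c_{t}-c_{t+1}]\le q'(n)/T(n)$.

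The main obstacle is converting this test-only telescoping bound into $\E_{t,j}[c_{t,j}-c_{t,j}^{+}]\le q'(n)/T(n)$, because the update phase acts between the test phase and the final $\compV{U_{\verify}}$ and may reshape the $\reg{F}$-content of $\ket{\phi_{t,j}}$. The leverage is that $\compV{U_{\syn}}$ is a no-op on $\reg{F}$ and that the update-phase verifies target fresh synthesized-state registers disjoint from $\reg{M}$; consequently the positive operator $\Delta:=O-\compV{U_{\verify}}^{\dagger}O\compV{U_{\verify}}$, whose expectation on $\ket{\phi_{t,j}}$ equals $c_{t,j}-c_{t,j}^{+}$, can be pulled back through the update phase using $(\compV{U_{\verify}}\compV{U_{\syn}})^{j\dagger}\,O\,(\compV{U_{\verify}}\compV{U_{\syn}})^{j}\preceq O$, reducing the desired estimate to the test-phase average decrement already bounded above. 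Substituting the resulting bound $q'(n)/T(n)$ back into the Jensen estimate yields $6\sqrt{q(n)q'(n)/T(n)}$, as required.
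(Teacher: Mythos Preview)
Your overall strategy mirrors the paper's: work in the compressed purified view, invoke \Cref{claim:quantumprogress} on the post-adversary state, then reduce the resulting decrement $\Tr(O\,\cdot\,)-\Tr(O\,\compV{U_{\verify}}\cdot\compV{U_{\verify}}^{\dagger})$ through the update phase down to a test-phase telescoping sum bounded by $q'(n)/T(n)$. The place where your argument breaks is exactly this reduction.

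You assert that the positive operator $\Delta:=O-\compV{U_{\verify}}^{\dagger}O\compV{U_{\verify}}$ ``can be pulled back through the update phase using $(\compV{U_{\verify}'}\compV{U_{\syn}})^{j\dagger}\,O\,(\compV{U_{\verify}'}\compV{U_{\syn}})^{j}\preceq O$.'' But monotonicity of $O$ under the update unitary $W$ does \emph{not} imply monotonicity of $\Delta$ under $W$: from $W^{\dagger}OW\preceq O$ you cannot conclude $W^{\dagger}\Delta W\preceq\Delta$, because $\Delta$ is a difference $O-V^{\dagger}OV$ and $V$ (the final verify on $\rho_s^{(t)}$) does not commute with $W$ (both act nontrivially on $\reg{F},\reg{D}_{\ro}$). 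What is actually needed is that the update-phase queries can only \emph{shrink} the ``bad set'' $D_F$ seen by the final verify, so the probability that a final-verify query lands in $D_F$ cannot increase. The paper isolates this as \Cref{RemoveUpd} and proves it by computing both decrements explicitly in the compressed basis: on the good subspace $\Delta_{C}$ for a single $\compV{U_C}$ on query position $x_2$ is precisely the projector onto ``$x_2\in D_F$'', and a prior $\compV{U_R}$ on an independent position $x_1$ replaces $D_F$ by $D_F-x_1$, hence the event shrinks. This is a statement about the specific combinatorial action of $\compV{U_C},\compV{U_R}$ on $D_F$, not a consequence of $W^{\dagger}OW\preceq O$ alone. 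Your sketch gestures at the right intuition (``update-phase verifies target fresh registers disjoint from $\reg{M}$''), but the operator inequality you cite is not the tool that delivers it; you are missing the content of \Cref{RemoveUpd}.

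A minor point: in your decomposition $\ket{\phi_{t,j}}=(\compV{U_{\verify}}\compV{U_{\syn}})^{j}(\compV{U_{\verify}})^{t}\ket{\phi_{0}}$ both instances should be the recording version $\compV{U_{\verify}'}$; otherwise $\reg{D}_{\adversary}$ and $\reg{D}_{\ro}$ diverge and the hypothesis of \Cref{claim:quantumprogress} fails.
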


\begin{proof}

From \Cref{subsec:purifiedView}, these two probabilities can be analyzed by running a sequence of algorithms in the purified view. As $\comp\decomp = I$, we can insert $\comp$ and $\decomp$ between the algorithms. So these two probabilities can also be analyzed by running the sequence of algorithms in the compressed view.

Let $\ket{\phi}$ be the whole pure state we obtain by applying the unitaries $\compV{U_{\keygen}} = \comp U_{\keygen} \decomp$, $\compV{U_{\mint}} = \comp U_{\mint} \decomp$ and $\compV{U_{\adversary}} = \comp U_{\adversary} \decomp$ to the state $\ket{1^n}\ket{\emptyset}_{\reg{D}_{\adversary}}\ket{\emptyset}_{\reg{F}}\ket{\emptyset}_{\reg{D}_{\ro}}$ along with enough ancillas (i.e. the compressed version of the first step of ``${\verify^{\ro, \ket{\pspace}}(pk, (s, \cdot))}$ on ${\rho_s^{(t)}}$'' in \Cref{subsec:purifiedView}). Let $\reg{M}$ be the register corresponding to $\rho_s^{(t)}$. 

It's easy to see that every classical query is recorded by the adversary. That is, $\ket{\phi}$ has the same contents in $\reg{D}_{\adversary}$ and $\reg{D}_{\ro}$. From \Cref{claim:quantumprogress}, 
\begin{align*}
    &\left|\prob \left[\verify^{\ro, \ket{\pspace}}(pk, (s, \rho_s^{(t)})) \text{ accepts} \right] - \prob \left[\verify^{D_j, \ket{\pspace}}(pk, (s, \rho_s^{(t)})) \text{ accepts} \right]\right|\\
    \le & 6 \sqrt{q\left(\Tr(O\ketbra{\phi}{\phi}) - \Tr(O \compV{U_{\verify}}\ketbra{\phi}{\phi}\compV{U_{\verify}}^{\dagger})\right)}
\end{align*}

Denote $\compV{U_\update}$ to be unitary that describes our update phase in the compressed view. Formally,
$\compV{U_{\update}} = \comp U_{\update} \decomp = \sum_{j = 0}^{N(n) - 1}(\compV{U_{\verify}'}\compV{U_{\syn}})^j \otimes \ketbra{j}{j}_{\reg{J}}$ where $\compV{U_{\verify}'}$ is running on the state synthesized by $\compV{U_{\syn}}$.

Let $\ket{\psi}$ be the whole pure state we obtain by applying the unitaries $\compV{U_{\keygen}}$, $\compV{U_{\mint}}$ and $\compV{U_{\test}} := \comp U_{\test}\decomp$ to the state $\ket{1^n}\ket{\emptyset}_{\reg{D}_{\adversary}}\ket{\emptyset}_{\reg{F}}\ket{\emptyset}_{\reg{D}_{\ro}} \frac{1}{\sqrt{T(n)}}\sum_{t = 0}^{T(n) - 1}\ket{t}_{\reg{T}}\frac{1}{\sqrt{N(n)}}\sum_{j = 0}^{N(n) - 1}\ket{j}_{\reg{J}}$ along with enough ancillas (i.e. running the compressed version of the first step of ``${\verify^{\ro, \ket{\pspace}}(pk, (s, \cdot))}$ on ${\rho_s^{(t)}}$'' until the end of the \textbf{test} phase of $\adversary$). That is to say, $\ket{\phi} = U_{\syn_2}U_{\syn_1}\compV{U_{\update}}\ket{\psi}$ where $U_{\syn_i}$ synthesizes the $i^{th}$ alleged banknote. Notice that $U_{\syn_i}$ only reads the public key, the serial number and acts on $\reg{D}_{\adversary}$ and some fresh ancillas. So $U_{\syn_i}$ commutes with $O$ and $\compV{U_{\verify}}$. (Recall that $\compV{U_{\verify}}$ is the one that does not record queries for $\adversary$. That is, it does not act on $\reg{D}_{\adversary}$.) Thus 
\begin{align*}
    &\Tr(O\ketbra{\phi}{\phi}) - \Tr(O \compV{U_{\verify}}\ketbra{\phi}{\phi}\compV{U_{\verify}}^{\dagger})\\
    =& \Tr(O\compV{U_{\syn_2}}\compV{U_{\syn_1}}\compV{U_{\update}} \ketbra{\psi}{\psi} \compV{U_{\update}}^{\dagger} \compV{U_{\syn_1}}^{\dagger}\compV{U_{\syn_2}}^{\dagger}) - \Tr(O \compV{U_{\verify}} \compV{U_{\syn_2}}\compV{U_{\syn_1}}\compV{U_{\update}} \ketbra{\psi}{\psi} \compV{U_{\update}}^{\dagger}\compV{U_{\syn_1}}^{\dagger}\compV{U_{\syn_2}}^{\dagger}\compV{U_{\verify}}^{\dagger})\\
    = &\Tr(O\compV{U_{\update}} \ketbra{\psi}{\psi} \compV{U_{\update}}^{\dagger}) - \Tr(O \compV{U_{\verify}}\compV{U_{\update}} \ketbra{\psi}{\psi} \compV{U_{\update}}^{\dagger}\compV{U_{\verify}}^{\dagger})
\end{align*}

The next step is to bound the decrement of the pairs in $\reg{F}$ during the verification on $\compV{U_{\update}}\ket{\psi}$. However, the update phase between $\compV{U_{\verify}}$ and the randomized number of verifications in the test phase may bring some trouble. So we give the following lemma, which is the counterpart of the fact that in the classical case, the number of queries in $D_{\mint} \cup D_{\keygen} - D_k$ asked during a verification is no more than the number of queries in $D_{\mint} \cup D_{\keygen} - D$ asked during a verification.

\begin{lemma}\label{RemoveUpd}
    We use the same notation as above. Then
    \begin{align*}
        &\Tr(O\compV{U_{\update}} \ketbra{\psi}{\psi} \compV{U_{\update}}^{\dagger}) - \Tr(O \compV{U_{\verify}}\compV{U_{\update}} \ketbra{\psi}{\psi} \compV{U_{\update}}^{\dagger}\compV{U_{\verify}}^{\dagger})\\
        \le & \Tr(O\ketbra{\psi}{\psi}) - \Tr(O \compV{U_{\verify}} \ketbra{\psi}{\psi} \compV{U_{\verify}}^{\dagger})
    \end{align*}
\end{lemma}

The proof of the lemma is deferred to \Cref{appendix:proof2}.

Now let's bound the decrement of the pairs in $\reg{F}$ during the verification. From \Cref{RemoveUpd},  
\begin{align*}
    &\left|\prob \left[\verify^{\ro, \ket{\pspace}}(pk, (s, \rho_s^{(t)})) \text{ accepts} \right] - \prob \left[\verify^{D_j, \ket{\pspace}}(pk, (s, \rho_s^{(t)})) \text{ accepts} \right]\right|\\
    \le & 6 \sqrt{q\left(\Tr(O\compV{U_{\update}} \ketbra{\psi}{\psi} \compV{U_{\update}}^{\dagger}) - \Tr(O \compV{U_{\verify}}\compV{U_{\update}} \ketbra{\psi}{\psi} \compV{U_{\update}}^{\dagger}\compV{U_{\verify}}^{\dagger})\right)}\\
    \le & 6 \sqrt{q\left(\Tr(O\ketbra{\psi}{\psi}) - \Tr(O \compV{U_{\verify}} \ketbra{\psi}{\psi} \compV{U_{\verify}}^{\dagger})\right)}\\
    = & 6 \sqrt{q\left(\Tr(O\ketbra{\psi}{\psi}) - \Tr(O \compV{U_{\verify}'} \ketbra{\psi}{\psi} \compV{U_{\verify}'}^{\dagger})\right)}
\end{align*}
where we use the fact that $\compV{U_{\verify}'}$ is the same as $\compV{U_{\verify}}$ except that it records the query-answer pairs for $\adversary$.

Note that we can also write $\ket{\psi}$ as $\frac{1}{\sqrt{T(n)}}\sum_{t = 0}^{T(n) - 1}\ket{\psi^{(t)}}\ket{t}_{\reg{T}}$ where $\ket{\psi^{(t)}}$ is the state after we run $t$ iterations in the test phase. Then $\ket{\psi^{(t + 1)}} = \compV{U_{\verify}'}\ket{\psi^{(t)}}$. As $O$ and $\compV{U_{\verify}'}$ do not run on $\reg{T}$, tracing out $\reg{T}$ does not change the quantity. Therefore,

\begin{align*}
    &\left|\prob \left[\verify^{\ro, \ket{\pspace}}(pk, (s, \rho_s^{(t)})) \text{ accepts} \right] - \prob \left[\verify^{D_j, \ket{\pspace}}(pk, (s, \rho_s^{(t)})) \text{ accepts} \right]\right|\\
    \le & 6 \sqrt{q\left(\Tr(O\ketbra{\psi}{\psi}) - \Tr(O \compV{U_{\verify}'} \ketbra{\psi}{\psi} \compV{U_{\verify}'}^{\dagger})\right)}\\
    = & 6 \sqrt{q\left(\Tr(O\Tr_{\reg{T}}(\ketbra{\psi}{\psi})) - \Tr(O \compV{U_{\verify}'} \Tr_{\reg{T}}(\ketbra{\psi}{\psi}) \compV{U_{\verify}'}^{\dagger})\right)}\\
    = & 6 \sqrt{q\left( \frac{1}{T(n)}\sum_{t = 0}^{T(n) - 1}\Tr\left(O\ketbra{\psi^{(t)}}{\psi^{(t)}}\right) - \frac{1}{T(n)}\sum_{t = 0}^{T(n) - 1}\Tr\left(O\compV{U_{\verify}'}\ketbra{\psi^{(t)}}{\psi^{(t)}}\compV{U_{\verify}'}^{\dagger}\right)\right)}\\
    = & 6 \sqrt{q\left( \frac{1}{T(n)}\sum_{t = 0}^{T(n) - 1}\Tr\left(O\ketbra{\psi^{(t)}}{\psi^{(t)}}\right) - \frac{1}{T(n)}\sum_{t = 0}^{T(n) - 1}\Tr\left(O\ketbra{\psi^{(t + 1)}}{\psi^{(t + 1)}}\right)\right)}\\
    \le & 6 \sqrt{\frac{q}{T(n)}\Tr\left(O\ketbra{\psi^{(0)}}{\psi^{(0)}}\right)}\\
    \le & 6\sqrt{\frac{qq'}{T(n)}}
\end{align*}
where we use the property of compressed oracle techniques that after $q'(n)$ quantum queries, there are at most $q'(n)$ non-$\hat{0}$ elements in $\reg{F}$. $\ket{\psi^{(0)}}$ is just the state we obtain after we run $\keygen$ and $\mint$ (so there are at most $q'(n)$ quantum queries) and then apply the unitary $\comp$. So there are at most $q'(n)$ pairs in $\reg{F}$ of $\ket{\psi^{(0)}}$. Hence $\Tr\left(O\ketbra{\psi^{(0)}}{\psi^{(0)}}\right) \le q'(n)$.

\end{proof}

The next claim is an analog of summing over $k$ of \Cref{claim:UpdateDB} to get a bound for how well $\verify^{D_j, \ket{\pspace}}$ behaves on $\sigma_{D_j}$. The intuition of the proof is similar to \Cref{claim:QuantumSimOnRho}.

\begin{claim}\label{claim:quantumSim}
    $$\left|\prob \left[\verify^{\ro, \ket{\pspace}}(pk, (s, \sigma_{D_j})) \text{ accepts} \right] - \prob \left[\verify^{D_j, \ket{\pspace}}(pk, (s, \sigma_{D_j})) \text{ accepts} \right]\right| \le 6\sqrt{\frac{q(n)q'(n)}{N(n)}}$$ where the probabilities are taken over the randomness of $\ro$, the randomness of the inputs to the adversary and the randomness of the adversary (so $t$ and $j$ are both randomized).
\end{claim}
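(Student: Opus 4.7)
The plan is to mirror the structure of the proof of Claim~\ref{claim:QuantumSimOnRho}, but randomize over $j \in \{0,\ldots,N(n)-1\}$ instead of $t$, and exploit the internal structure of the update phase in place of the test phase. Passing to the compressed view, let $\ket{\phi}$ denote the full purified state produced by the adversary right before its final $\compV{U_{\syn_2}}$ step, so that register $\reg{M}_1$ holds $\sigma_{D_j}$ (coherently indexed by $j$ in $\reg{J}$) and $D_\ro \cap D_F = \emptyset$ on every branch. Applying Proposition~\ref{claim:quantumprogress} with $\reg{M} = \reg{M}_1$ yields
\begin{align*}
&\left|\prob\bigl[\verify^{\ro,\ket{\pspace}}(pk,(s,\sigma_{D_j})) \text{ accepts}\bigr] - \prob\bigl[\verify^{D_j,\ket{\pspace}}(pk,(s,\sigma_{D_j})) \text{ accepts}\bigr]\right| \\
&\qquad \leq 6\sqrt{q\bigl(\Tr(O\ketbra{\phi}{\phi}) - \Tr(O\,\compV{U_{\verify}}\ketbra{\phi}{\phi}\compV{U_{\verify}}^\dagger)\bigr)}.
\end{align*}
Since $\compV{U_{\syn_2}}$ acts only on $\reg{M}_2$ and fresh ancillas (registers disjoint from $\reg{F}$, $\reg{M}_1$, and everything $\compV{U_{\verify}}$ touches), it commutes with both $O$ and $\compV{U_{\verify}}$ and may be dropped, reducing the analysis to $\ket{\phi'} := \compV{U_{\syn_1}}\compV{U_{\update}}\ket{\psi}$, where $\ket{\psi}$ is the post-test-phase state defined in the proof of Claim~\ref{claim:QuantumSimOnRho}.

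The key step is to unfold $\compV{U_{\update}} = \sum_j (\compV{U_{\verify}'}\compV{U_{\syn}})^j \otimes \ketbra{j}{j}_{\reg{J}}$ against the uniform superposition in $\reg{J}$, giving $\ket{\phi'} = \frac{1}{\sqrt{N(n)}}\sum_j \compV{U_{\syn_1}}\ket{\omega^{(j)}}\ket{j}_{\reg{J}}$ with $\ket{\omega^{(j)}} := (\compV{U_{\verify}'}\compV{U_{\syn}})^j \ket{\psi}$. Neither $\compV{U_{\syn}}$ nor $\compV{U_{\syn_1}}$ touches $\reg{F}$, so $O$ commutes with both; and $\compV{U_{\verify}'}$ differs from $\compV{U_{\verify}}$ only by coherent writes into the separate register $\reg{D}_{\adversary}$, which factor out of the marginal on $\reg{F}$. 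Hence $\Tr(O\compV{U_{\verify}}\compV{U_{\syn_1}}\ketbra{\omega^{(j)}}{\omega^{(j)}}\compV{U_{\syn_1}}^\dagger \compV{U_{\verify}}^\dagger) = \Tr(O\ketbra{\omega^{(j+1)}}{\omega^{(j+1)}})$. After tracing out $\reg{J}$, the decrement becomes a telescoping average:
\begin{align*}
\Tr(O\ketbra{\phi'}{\phi'}) - \Tr(O\compV{U_{\verify}}\ketbra{\phi'}{\phi'}\compV{U_{\verify}}^\dagger) = \frac{1}{N(n)}\bigl(\Tr(O\ketbra{\omega^{(0)}}{\omega^{(0)}}) - \Tr(O\ketbra{\omega^{(N(n))}}{\omega^{(N(n))}})\bigr) \leq \frac{q'(n)}{N(n)}.
\end{align*}
The final inequality uses the standard compressed-oracle fact that after the $q'(n)$ quantum queries of $\keygen$ and $\mint$, register $\reg{F}$ holds at most $q'(n)$ non-$\hat 0$ positions, combined with the fact (evident from the case analysis of $\compV{U_C}$) that the classical queries of the test phase can only decrease $|D_F|$. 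Substituting into the Proposition~\ref{claim:quantumprogress} bound delivers $6\sqrt{q(n)q'(n)/N(n)}$, which is exactly the claimed estimate.

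The main obstacle is the identity $\Tr(O\compV{U_{\verify}}\compV{U_{\syn_1}}\ketbra{\omega^{(j)}}{\omega^{(j)}}\compV{U_{\syn_1}}^\dagger \compV{U_{\verify}}^\dagger) = \Tr(O\ketbra{\omega^{(j+1)}}{\omega^{(j+1)}})$, which requires verifying (i) that $\compV{U_{\syn_1}}$ and the update-phase $\compV{U_{\syn}}$ agree on $\reg{F}$ (trivially, since neither acts there) and on the shared reads from $\reg{D}_{\adversary}, \reg{Pk}, \reg{S}$, differing only in which fresh output register they populate, and (ii) that $\compV{U_{\verify}'}$ differs from $\compV{U_{\verify}}$ only by a coherent copy into $\reg{D}_{\adversary}$ that is invisible after tracing out every register except $\reg{F}$. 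Once this bookkeeping of compressed-view unitaries is checked, the telescoping and the compressed-oracle bound above close out the proof in direct analogy with the treatment of $\rho_s^{(t)}$ in Claim~\ref{claim:QuantumSimOnRho}.
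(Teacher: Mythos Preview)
Your proposal is correct and follows essentially the same approach as the paper: apply Proposition~\ref{claim:quantumprogress} to the compressed-view state holding $\sigma_{D_j}$, peel off $\compV{U_{\syn_2}}$ by commutation, expand over $\reg{J}$, and telescope using $\ket{\omega^{(j+1)}} = \compV{U_{\verify}'}\compV{U_{\syn}}\ket{\omega^{(j)}}$ together with the $q'(n)$ bound on $|D_F|$. The paper does exactly this (with $\ket{\psi^{(j)}}$ in place of your $\ket{\omega^{(j)}}$), and in fact glosses over the bookkeeping you flag as ``the main obstacle'' (that $\compV{U_{\syn_1}}$ and the update-phase $\compV{U_{\syn}}$ act identically modulo the choice of fresh output register, and that $\compV{U_{\verify}}$ versus $\compV{U_{\verify}'}$ agree on $\reg{F}$); your explicit identification of this point is a small improvement in clarity.
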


\begin{proof}
Similar to the proof of \Cref{claim:QuantumSimOnRho}, these two probabilities can be analyzed by running a sequence of algorithms specificed in \Cref{subsec:purifiedView} in the compressed view. Let $\ket{\phi}$ be the same state as in \Cref{claim:QuantumSimOnRho} except that now $\reg{M}$ is the first output register of $\adversary$ (corresponding to the first synthesized state $\sigma_{D_j}$). Let $\ket{\psi}$ be the state we obtain by running the compressed version of the first step of ``${\verify^{\ro, \ket{\pspace}}(pk, (s, \cdot))}$ on ${\rho_s^{(t)}}$'' in~\Cref{subsec:purifiedView} until the end of the \textbf{update} phase of $\adversary$. That is to say, $\ket{\phi} = U_{\syn_2}U_{\syn_1}\ket{\psi}$. Then $\ket{\phi}$ has the same contents in $\reg{D}_{\adversary}$ and $\reg{D}_{\ro}$.

Recall that $\compV{U_{\verify}}$ and $\compV{U_{\verify}'}$ both run the verification for the state in $\reg{M}$, i.e. the first synthesized state $\sigma_{D_j}$. Recall the fact that $\compV{U_{\verify}'}$ is the same as $\compV{U_{\verify}}$ except that it also records query-answer pairs for $\adversary$ and that $O$ only acts on $\reg{F}$. From \Cref{claim:quantumprogress}, 
\begin{align*}
    &\left|\prob \left[\verify^{\ro, \ket{\pspace}}(pk, (s, \sigma_{D_j})) \text{ accepts} \right] - \prob \left[\verify^{D_j, \ket{\pspace}}(pk, (s, \sigma_{D_j})) \text{ accepts} \right]\right|\\
    \le & 6 \sqrt{q\left(\Tr(O\ketbra{\phi}{\phi}) - \Tr(O \compV{U_{\verify}}\ketbra{\phi}{\phi}\compV{U_{\verify}}^{\dagger})\right)}\\
    = & 6 \sqrt{q\left(\Tr(OU_{\syn_2}U_{\syn_1}\ketbra{\psi}{\psi}U_{\syn_1}^{\dagger}U_{\syn_2}^{\dagger}) - \Tr(O \compV{U_{\verify}}U_{\syn_2}U_{\syn_1}\ketbra{\psi}{\psi}U_{\syn_1}U_{\syn_2}^{\dagger}\compV{U_{\verify}}^{\dagger})\right)}\\
    = & 6 \sqrt{q\left(\Tr(O\ketbra{\psi}{\psi}) - \Tr(O \compV{U_{\verify}}U_{\syn_1}\ketbra{\psi}{\psi}U_{\syn_1}^{\dagger}\compV{U_{\verify}}^{\dagger})\right)}\\
    = & 6 \sqrt{q\left(\Tr(O\ketbra{\psi}{\psi}) - \Tr(O \compV{U_{\verify}'}U_{\syn_1}\ketbra{\psi}{\psi}U_{\syn_1}^{\dagger}\compV{U_{\verify}'}^{\dagger})\right)}
\end{align*}
This is because $U_{\syn_2}$ acts on the second output register, and thus it commutes with $\compV{U_{\verify}}$ (it verifies the first output state). Moreover, both $U_{\syn_1}$ and $U_{\syn_2}$ do not act on $\reg{F}$, and thus commute with $O$.

Note that we can write $\ket{\psi} = \frac{1}{\sqrt{N(n)}}\sum_{j= 0}^{N(n) - 1}\ket{\psi^{(j)}}\ket{j}_{\reg{J}}$ where $\ket{\psi^{(j)}}$ is the state we obtain after we run a randomized number of iterations in the test phase and $j$ iterations in the update phase. Then $\ket{\psi^{(j + 1)}} = \compV{U_{\verify}'}U_{\syn}\ket{\psi^{(j)}}$. Notice that $O$, $U_{\syn_1}$ and $\compV{U_{\verify}'}$ do not run on $\reg{J}$. So tracing out $\reg{J}$ won't change the value. Therefore,
\begin{align*}
        &\left|\prob \left[\verify^{\ro, \ket{\pspace}}(pk, (s, \sigma_{D_j})) \text{ accepts} \right] - \prob \left[\verify^{D_j, \ket{\pspace}}(pk, (s, \sigma_{D_j})) \text{ accepts} \right]\right|\\
        \le&  6 \sqrt{q\left(\Tr(O\ketbra{\psi}{\psi}) - \Tr(O \compV{U_{\verify}'}U_{\syn_1}\ketbra{\psi}{\psi}U_{\syn_1}^{\dagger}\compV{U_{\verify}'}^{\dagger})\right)}\\
        = &  6 \sqrt{q\left(\Tr(O\Tr_{\reg{J}}(\ketbra{\psi}{\psi})) - \Tr(O \compV{U_{\verify}'}U_{\syn_1}\Tr_{\reg{J}}(\ketbra{\psi}{\psi})U_{\syn_1}^{\dagger}\compV{U_{\verify}'}^{\dagger})\right)}\\
        =  & 6 \sqrt{q\left(\frac{1}{N(n)}\sum_{j = 0}^{N(n) - 1}\Tr\left(O \ketbra{\psi^{(j)}}{\psi^{(j)}}\right) - \frac{1}{N(n)}\sum_{j = 0}^{N(n) - 1}\Tr\left(O\compV{U_{\verify}'} U_{\syn_1}\ketbra{\psi^{(j)}}{\psi^{(j)}} U_{\syn_1}^{\dagger} \compV{U_{\verify}'}^{\dagger}\right)\right)}\\
        =  & 6 \sqrt{q\left(\frac{1}{N(n)}\sum_{j = 0}^{N(n) - 1}\Tr\left(O \ketbra{\psi^{(j)}}{\psi^{(j)}}\right) - \frac{1}{N(n)}\sum_{j = 0}^{N(n) - 1}\Tr\left(O\ketbra{\psi^{(j + 1)}}{\psi^{(j + 1)}}\right)\right)}\\
        \le  & 6 \sqrt{\frac{q}{N(n)}\Tr\left(O \ketbra{\psi^{(0)}}{\psi^{(0)}}\right)}\\
        \le & 6 \sqrt{\frac{qq'}{N(n)}}
\end{align*}
where we use the property of compressed oracle techniques that after $q'(n)$ quantum queries, there are at most $q'(n)$ non-$\hat{0}$ elements in $\reg{F}$. $\ket{\psi^{(0)}}$ is just the state we obtain after we run $\keygen$ and $\mint$ (so there are at most $q'(n)$ quantum queries) and run the test phase of $\adversary$ (only classical queries with recording) in the compressed view. By the description of $\compV{U_R}$, the number of pairs in $\reg{F}$ can not increase when we make classical queries and record them. So there are at most $q'(n)$ pairs in $\reg{F}$ of $\ket{\psi^{(0)}}$. Hence $\Tr\left(O\ketbra{\psi^{(0)}}{\psi^{(0)}}\right) \le q'(n)$.
\end{proof}

Now let's combine the above results to prove \Cref{theorem:QuantumAnalysisOfA}.

\begin{proof}
Let $\epsilon = 0.01$, $b = 1 - \sqrt{1 - \delta_r + \epsilon}$, $a = 0.99b$, $T(n) = \frac{36q(n)q'(n)}{\epsilon^2}$ and $N(n) = \frac{q(n)q'(n)}{\epsilon^2(1 - \sqrt{1 - \delta_r + \epsilon})^4}$. 

From \Cref{claim:QuantumSimOnRho}, $\prob \left[\verify^{D_j, \ket{\pspace}}(pk, (s, \rho_s^{(t)})) \text{ accepts} \right] \ge \delta_r - \epsilon$. Hence using the same argument in \Cref{claim:SimOnSigma}, $\prob \left[\verify^{D_j, \ket{\pspace}}(pk, (s, \sigma_{D_j})) \text{ accepts} \right] \ge 0.99(1 - \sqrt{1 - \delta_r + \epsilon})^2$. From \Cref{claim:quantumSim}, $\prob \left[\verify^{\ro, \ket{\pspace}}(pk, (s, \sigma_{D_j})) \text{ accepts} \right] \ge 0.99(1 - \sqrt{1 - \delta_r + \epsilon})^2 - 6\sqrt{\frac{q(n)q'(n)}{N(n)}} \ge 0.9(1 - \sqrt{1 - \delta_r + \epsilon})^2.$ Thus by union bound, the outputs of our adversary pass two verifications simultaneously with probability at least $1.8(1 - \sqrt{1 - \delta_r + \epsilon})^2 - 1$, which is non-negligible when $\delta_r \ge 0.99$.
\end{proof}

That is, the adversary we construct gives a valid attack to $( \keygen^{{\ket{\ro}},\ket{\pspace}}, \mint^{{\ket{\ro}},\ket{\pspace}}, \allowbreak\verify^{{\ro},\ket{\pspace}})$ where $\delta_r \ge 0.99$ and $\delta_s = \negl(n)$, which establishes \Cref{theorem:quantumInsecurity}.

\printbibliography

\appendix
\section{Missing Proof of \Cref{lemma:diffofquery}}\label{appendix:proof1}

The proof consists of two parts. In the first part, we will bound $\TraceDist{\compV{U_C}U_j\ketbra{\phi_{j}}{\phi_{j}}U_j^{\dagger}\compV{U_C}^{\dagger}}{U_D'U_j\ketbra{\phi_{j}}{\phi_{j}}U_j^{\dagger}U_D'^{\dagger}}$ by the weight of queries inside $D_F$. In the second part, we will show that the weight equals the decrement of the number of pairs in $\reg{F}$ after the query.

\begin{proof}

Let's begin the first part with some notations. 

For $0 \le j \le q - 1$, $U_j$ prepares the $(j + 1)^{th}$ query, thus we can write $U_j\ket{\phi_j}$ as follows,
$$U_j\ket{\phi_j} = \sum_{\substack{x, D_F, D_{\ro}, h\\ \text{s.t. }D_F \cap D_{\ro} = \emptyset}}\alpha_{x, D_F, D_{\ro}, h}\ket{x}_{\reg{Q}}\ket{0}_{\reg{A}}\ket{D_F}_{\reg{F}}\ket{D_{\ro}}_{\reg{D}_{\ro}}\ket{h}_{\reg{H}}$$ 
where $\reg{Q}$ is the next query position, $\reg{A}$ is for the query answer and $\reg{H}$ contains all other registers including the public key, serial number, $\reg{D}_{\adversary}$, etc. We can divide these terms into two categories, one is $x \notin D_F$, and the other one is $x \in D_F$. That is, $U_j\ket{\phi_j} = \sqrt{1 - \alpha}\ket{u} + \sqrt{\alpha}\ket{v}$ where $\alpha:= \sum_{\substack{x, D_F, D_{\ro}, h\\ \text{s.t. }D_F \cap D_{\ro} = \emptyset, x \in D_F}}\left|\alpha_{x, D_F, D_{\ro}, h}\right|^2$ is the weight of queries inside $D_F$,
$$\sqrt{1 - \alpha} \ket{u} = \sum_{\substack{x, D_F, D_{\ro}, h\\ \text{s.t. }D_F \cap D_{\ro} = \emptyset, x \notin D_F}}\alpha_{x, D_F, D_{\ro}, h}\ket{x}_{\reg{Q}}\ket{0}_{\reg{A}}\ket{D_F}_{\reg{F}}\ket{D_{\ro}}_{\reg{D}_{\ro}}\ket{h}_{\reg{H}}$$
$$\sqrt{\alpha} \ket{v} = \sum_{\substack{x, D_F, D_{\ro}, h\\ \text{s.t. }D_F \cap D_{\ro} = \emptyset, x \in D_F}}\alpha_{x, D_F, D_{\ro}, h}\ket{x}_{\reg{Q}}\ket{0}_{\reg{A}}\ket{D_F}_{\reg{F}}\ket{D_{\ro}}_{\reg{D}_{\ro}}\ket{h}_{\reg{H}}$$

Recall that 
\begin{align*}
    &\compV{U_C}(\ket{x}_{\reg{Q}}\ket{0}_{\reg{A}}\ket{D_F}_{\reg{F}}\ket{D_{\ro}}_{\reg{D}_{\ro}})\\
    =&\begin{cases}
        \ket{x}_{\reg{Q}}\ket{D_{\ro}(x)}_{\reg{A}}\ket{D_F}_{\reg{F}}\ket{D_{\ro}, (x, D_{\ro}(x))}_{\reg{D}_{\ro}} &x \in D_{\ro}\\
        \ket{x}_{\reg{Q}}\frac{1}{\sqrt 2}\sum_{z = 0}^1\ket{z}_{\reg{A}}\ket{D_F}_{\reg{F}}\ket{D_{\ro}, (x, z)}_{\reg{D}_{\ro}} &x \notin D_{\ro}, x \notin D_F\\
        \ket{x}_{\reg{Q}}\frac{1}{\sqrt 2}\sum_{z = 0}^1 (-1)^{z\widehat{D_F(x)}} \ket{z}_{\reg{A}}\ket{D_F - x}_{\reg{F}}\ket{D_{\ro}, (x, z)}_{\reg{D}_{\ro}} &x \notin D_{\ro}, x \in D_F
    \end{cases}
\end{align*}

So when $x \notin D_F$, $\compV{U_C}$ and $U_D'$ act exactly the same. As a result, $\compV{U_C}\ket{u} = U_D'\ket{u}$. Therefore,
\begin{align*}
&\TraceDist{\compV{U_C}U_j\ketbra{\phi_{j}}{\phi_{j}}U_j^{\dagger}\compV{U_C}^{\dagger}}{U_D'U_j\ketbra{\phi_{j}}{\phi_{j}}U_j^{\dagger}U_D'^{\dagger}}\\
= &\norm{\compV{U_C}U_j\ketbra{\phi_{j}}{\phi_{j}}U_j^{\dagger}\compV{U_C}^{\dagger} - U_D'U_j\ketbra{\phi_{j}}{\phi_{j}}U_j^{\dagger}U_D'^{\dagger}}_1\\
= &\|\compV{U_C}\left((1 - \alpha)\ketbra{u}{u} + \alpha\ketbra{v}{v} + \sqrt{\alpha(1 - \alpha)}\left(\ketbra{u}{v} + \ketbra{v}{u}\right)\right)\compV{U_C}^{\dagger} \\&- U_D'\left((1 - \alpha)\ketbra{u}{u} + \alpha\ketbra{v}{v} + \sqrt{\alpha(1 - \alpha)}\left(\ketbra{u}{v} + \ketbra{v}{u}\right)\right)U_D'^{\dagger}\|_1\\
= &\|\compV{U_C}\left(\alpha\ketbra{v}{v} + \sqrt{\alpha(1 - \alpha)}\left(\ketbra{u}{v} + \ketbra{v}{u}\right)\right)\compV{U_C}^{\dagger} - U_D'\left(\alpha\ketbra{v}{v} + \sqrt{\alpha(1 - \alpha)}\left(\ketbra{u}{v} + \ketbra{v}{u}\right)\right)U_D'^{\dagger}\|_1\\
\le & 2 \norm{\alpha\ketbra{v}{v} + \sqrt{\alpha(1 - \alpha)}\left(\ketbra{u}{v} + \ketbra{v}{u}\right)}_1\\
\le & 2(\alpha + 2\sqrt{\alpha(1 - \alpha)})\\
\le & 6\sqrt{\alpha}
\end{align*}
where we use properties of trace norm of matrices $\norm{A + B}_1 \le \norm{A}_1 + \norm{B}_1, \norm{UAU^{\dagger}}_1 = \norm{A}_1$ and $\norm{A}_1 = \sum_{i}\lvert \lambda_i \rvert$ if $A = \sum_i \lambda_i \ketbra{s_i}{v_i}$ where $\ket{s_i}, \ket{v_i}$ are two bases.

Now let's move to the second part. We will show that $\alpha$ equals to the difference of the number of pairs in $\reg{F}$ on $\ket{\phi_j}$ and $\ket{\phi_{j + 1}} = \compV{U_C}U_j\ket{\phi_{j}}$.

First, let's calculate the number of pairs in $\reg{F}$ on state $\ket{\phi_j}$. 

Notice that $U_j$ does not act on $\reg{F}$, so it commutes with $O$. Then
$\Tr(O\ketbra{\phi_j}{\phi_j})= \Tr(OU_j\ketbra{\phi_j}{\phi_j}U_j^{\dagger})$. 
Recall that $$U_j\ket{\phi_j} = \sum_{\substack{x, D_F, D_{\ro}, h\\ \text{s.t. }D_F \cap D_{\ro} = \emptyset}}\alpha_{x, D_F, D_{\ro}, h}\ket{x}_{\reg{Q}}\ket{0}_{\reg{A}}\ket{D_F}_{\reg{F}}\ket{D_{\ro}}_{\reg{D}_{\ro}}\ket{h}_{\reg{H}}$$
In the summation, the terms are orthogonal to each other. Thus the expected number of pairs in $\reg{F}$ of $U_j\ket{\phi_j}$ is $\sum_{\substack{x, D_F, D_{\ro}, h\\ \text{s.t. }D_F \cap D_{\ro} = \emptyset}}\length{D_F}\left|\alpha_{x, D_F, D_{\ro}, h}\right|^2$.
So $$\Tr(O\ketbra{\phi_j}{\phi_j}) = \sum_{\substack{x, D_F, D_{\ro}, h\\ \text{s.t. }D_F \cap D_{\ro} = \emptyset}}\length{D_F}\left|\alpha_{x, D_F, D_{\ro}, h}\right|^2$$

Next, let's calculate the number of pairs in $\reg{F}$ on state $\ket{\phi_{j + 1}}$.

By definition of $\ket{\phi_{j+1}}$ and $\compV{U_C}$
\begin{align*}
&\ket{\phi_{j + 1}}\\
= & \compV{U_C}U_j\ket{\phi_j}\\
= & \sum_{\substack{x, D_F, D_{\ro}, h\\ \text{s.t. }D_F \cap D_{\ro} = \emptyset, x \in D_{\ro}}}\alpha_{x, D_F, D_{\ro}, h}\ket{x}_{\reg{Q}}\ket{D_{\ro}(x)}_{\reg{A}}\ket{D_F}_{\reg{F}}\ket{D_{\ro}, (x, D_{\ro}(x))}_{\reg{D}_{\ro}}\ket{h}_{\reg{H}}\\
&+ \sum_{\substack{x, D_F, D_{\ro}, h\\ \text{s.t. }D_F \cap D_{\ro} = \emptyset, x \notin D_{\ro}, x \notin D_F}}\alpha_{x, D_F, D_{\ro}, h}\frac{1}{\sqrt 2}\sum_{z = 0}^1\ket{x}_{\reg{Q}}\ket{z}_{\reg{A}}\ket{D_F}_{\reg{F}}\ket{D_{\ro}, (x, z)}_{\reg{D}_{\ro}}\ket{h}_{\reg{H}}\\
&+ \sum_{\substack{x, D_F, D_{\ro}, h\\ \text{s.t. }D_F \cap D_{\ro} = \emptyset, x \in D_F}}\alpha_{x, D_F, D_{\ro}, h}\frac{1}{\sqrt 2}\sum_{z = 0}^1(-1)^{z\widehat{D_F(x)}}\ket{x}_{\reg{Q}}\ket{z}_{\reg{A}}\ket{D_F - x}_{\reg{F}}\ket{D_{\ro}, (x, z)}_{\reg{D}_{\ro}}\ket{h}_{\reg{H}}
\end{align*}

$\compV{U_C}$ is a unitary, so the terms $\ket{x}_{\reg{Q}}\ket{D_{\ro}(x)}_{\reg{A}}\ket{D_F}_{\reg{F}}\ket{D_{\ro}, (x, D_{\ro}(x))}_{\reg{D}_{\ro}}\ket{h}_{\reg{H}}$ where $D_F \cap D_{\ro} = \emptyset$ and $x \in D_{\ro}$, $\frac{1}{\sqrt 2}\sum_{z = 0}^1\ket{x}_{\reg{Q}}\ket{z}_{\reg{A}}\ket{D_F}_{\reg{F}}\ket{D_{\ro}, (x, z)}_{\reg{D}_{\ro}}\ket{h}_{\reg{H}}$ where $D_F \cap D_{\ro} = \emptyset$ and $x \notin D_{\ro} \cup D_F$, and $\frac{1}{\sqrt 2}\sum_{z = 0}^1(-1)^{z\widehat{D_F(x)}}\ket{x}_{\reg{Q}}\ket{z}_{\reg{A}}\ket{D_F - x}_{\reg{F}}\ket{D_{\ro}, (x, z)}_{\reg{D}_{\ro}}\ket{h}_{\reg{H}}$ where $D_F \cap D_{\ro} = \emptyset$ and $x \in D_F$ in the above summation are orthogonal to each other. 

Thus the expected number of pairs in $\reg{F}$ of $\ket{\phi_{j + 1}}$ is
\begin{align*}
&\Tr(O\ketbra{\phi_{j + 1}}{\phi_{j + 1}})\\
= &\sum_{\substack{x, D_F, D_{\ro}, h\\ \text{s.t. }D_F \cap D_{\ro} = \emptyset, x \notin D_F}}\length{D_F}\left|\alpha_{x, D_F, D_{\ro}, h}\right|^2 + \sum_{\substack{x, D_F, D_{\ro}, h\\ \text{s.t. }D_F \cap D_{\ro} = \emptyset, x \in D_F}}\length{D_F - x}\left|\alpha_{x, D_F, D_{\ro}, h}\right|^2\\
= & \sum_{\substack{x, D_F, D_{\ro}, h\\ \text{s.t. }D_F \cap D_{\ro} = \emptyset}}\length{D_F}\left|\alpha_{x, D_F, D_{\ro}, h}\right|^2 - \sum_{\substack{x, D_F, D_{\ro}, h\\ \text{s.t. }D_F \cap D_{\ro} = \emptyset, x \in D_F}}\left|\alpha_{x, D_F, D_{\ro}, h}\right|^2\\
= & \Tr(O\ketbra{\phi_j}{\phi_j}) - \alpha
\end{align*}

So $\alpha = \Tr(O\ketbra{\phi_j}{\phi_j}) - \Tr(O\ketbra{\phi_{j + 1}}{\phi_{j + 1}})$. That is to say, the weight of queries outside $D_F$ equals the decrement of the number of pairs in $\reg{F}$ after the query.

Combine the above two parts, and we can obtain
$$\TraceDist{\compV{U_C}U_j\ketbra{\phi_{j}}{\phi_{j}}U_j^{\dagger}\compV{U_C}^{\dagger}}{U_D'U_j\ketbra{\phi_{j}}{\phi_{j}}U_j^{\dagger}U_D'^{\dagger}} \le 6\sqrt{\Tr(O\ketbra{\phi_j}{\phi_j}) - \Tr(O\ketbra{\phi_{j + 1}}{\phi_{j + 1}})}.$$
\end{proof}

\section{Missing Proof of \Cref{RemoveUpd}}\label{appendix:proof2}
We prove~\Cref{RemoveUpd} by first showing that for two parallel queries, we can remove one of them without decreasing the value and then extending the result to $\compV{U_\verify}$ on $\rho_s^{(t)}$ and $\compV{U_{\update}}$ (each has polynomial queries). 
\begin{proof}
Let $\compV{U_R}$ act on the first query position register $\reg{Q}_1$, the first query answer register $\reg{A}_1, \reg{D}_{\adversary}, \reg{F}$ and $\reg{D}_{\ro}$ while $\compV{U_C}$ acts on the second query position register $\reg{Q}_2$, the second query position register $\reg{A}_2, \reg{F}$ and $\reg{D}_{\ro}$.

We first show that for any state $$\ket{\varphi} =  \sum_{\substack{x_1, x_2, D_{\adversary}, D_F, D_{\ro}, h\\ \text{s.t. }D_F \cap D_{\ro} = \emptyset}}\alpha_{x_1, x_2, D_{\adversary}, D_F, D_{\ro}, h}\ket{x_1}_{\reg{Q}_1}\ket{0}_{\reg{A}_1}\ket{x_2}_{\reg{Q}_2}\ket{0}_{\reg{A}_2}\ket{D_{\adversary}}_{\reg{D}_{\adversary}}\ket{D_F}_{\reg{F}}\ket{D_{\ro}}_{\reg{D}_{\ro}}\ket{h}_{\reg{H}},$$ we have the inequality
$$\Tr(O\compV{U_R} \ketbra{\varphi}{\varphi} \compV{U_R}^{\dagger}) - \Tr(O \compV{U_C}\compV{U_R} \ketbra{\varphi}{\varphi} \compV{U_R}^{\dagger}\compV{U_C}^{\dagger}) \le  \Tr(O\ketbra{\varphi}{\varphi}) - \Tr(O \compV{U_C} \ketbra{\varphi}{\varphi} \compV{U_C}^{\dagger}).$$

In fact, from the same argument in \Cref{lemma:diffofquery}, $\Tr(O\ketbra{\varphi}{\varphi}) - \Tr(O \compV{U_C} \ketbra{\varphi}{\varphi} \compV{U_C}^{\dagger})$ is exactly the probability that we get outcome $(x_2, D_F)$ such that $x_2 \in D_F$ when we measure the registers $\reg{Q}_2$ and $\reg{D}_F$ on state $\ket{\varphi}$. That is 
$$\Tr(O\ketbra{\varphi}{\varphi}) - \Tr(O \compV{U_C} \ketbra{\varphi}{\varphi} \compV{U_C}^{\dagger}) = \sum_{\substack{x_1, x_2, D_{\adversary}, D_F, D_{\ro}, h\\ \text{s.t. }D_F \cap D_{\ro} = \emptyset, x_2 \in D_F}}\left|\alpha_{x_1, x_2, D_{\adversary}, D_F, D_{\ro}, h}\right|^2.$$
    
Similarly, $\Tr(O\compV{U_R} \ketbra{\varphi}{\varphi} \compV{U_R}^{\dagger}) - \Tr(O \compV{U_C}\compV{U_R} \ketbra{\varphi}{\varphi} \compV{U_R}^{\dagger}\compV{U_C}^{\dagger})$ is exactly the probability that we get outcome $(x_2, D_F)$ such that $x_2 \in D_F$ when we measure the registers $\reg{Q}_2$ and $\reg{D}_F$ on state $\compV{U_R}\ket{\varphi}$. Notice that we can write $\compV{U_R}\ket{\varphi}$ as 
$$\sum_{\substack{x_1, x_2, D_{\adversary}, D_F, D_{\ro}, h\\ \text{s.t. }D_F \cap D_{\ro} = \emptyset}}\alpha_{x_1, x_2, D_{\adversary}, D_F, D_{\ro}, h}\ket{x_2}_{\reg{Q}_2}\ket{0}_{\reg{A}_2}\ket{x_1}_{\reg{Q}_1}U_{x_1, D_F}(\ket{0}_{\reg{A}_1}\ket{D_{\adversary}}_{\reg{D}_{\adversary}}\ket{D_{\ro}}_{\reg{D}_{\ro}})\ket{D_F - x_1}_{\reg{F}}\ket{h}_{\reg{H}}$$ where $U_{x_1, D_F}$ is a unitary that only depends on $x_1, D_F$. $\compV{U_R}$ is a unitary, so the terms in the above summation are orthogonal. Thus
$$\Tr(O\compV{U_R} \ketbra{\varphi}{\varphi} \compV{U_R}^{\dagger}) - \Tr(O \compV{U_C}\compV{U_R} \ketbra{\varphi}{\varphi} \compV{U_R}^{\dagger}\compV{U_C}^{\dagger}) = \sum_{\substack{x_1, x_2, D_{\adversary}, D_F, D_{\ro}, h\\ \text{s.t. }D_F \cap D_{\ro} = \emptyset, x_2 \in D_F - x_1}}\left|\alpha_{x_1, x_2, D_{\adversary}, D_F, D_{\ro}, h}\right|^2.$$
As a result,
$\Tr(O\compV{U_R} \ketbra{\varphi}{\varphi} \compV{U_R}^{\dagger}) - \Tr(O \compV{U_C}\compV{U_R} \ketbra{\varphi}{\varphi} \compV{U_R}^{\dagger}\compV{U_C}^{\dagger}) \le  \Tr(O\ketbra{\varphi}{\varphi}) - \Tr(O \compV{U_C} \ketbra{\varphi}{\varphi} \compV{U_C}^{\dagger}).$ That is to say, an extra query $\compV{U_R}$ on another part of the state can only decrease the chance of making a bad query in $\compV{U_C}$ because that extra query can only make the set of bad queries smaller.
    
$\compV{U_{\verify}}$ and $\compV{U_{\update}}$ are composed of $\compV{U_C}$ and $\compV{U_R}$. In fact, the above argument can also be extended to $\compV{U_{\verify}}$ and $\compV{U_{\update}}$ to capture our intuition that $\compV{U_{\update}}$ can only decrease the number of bad queries made during $\compV{U_{\verify}}$ because $\compV{U_{\update}}$ can only make the set of bad queries smaller. 

We will first show a fixed number of iterations of the update phase can only decrease the number of bad queries made during $\compV{U_{\verify}}$ and then show it holds for $\compV{U_{\update}}$.

Let $\ket{\psi}$ be the state $\ket{\psi}$ as in~\Cref{claim:QuantumSimOnRho}. We can write it as $\ket{\psi_0} \otimes \frac{1}{\sqrt{N(n)}}\sum_{j = 0}^{N(n) - 1}\ket{j}_{\reg{J}}$.

For any $j \ge 1$, for unitary $\compV{U_{\verify}'}$ that acts on the synthesized state and records the query for $\adversary$, and unitary $\compV{U_{\verify}}$ that acts on $\rho_s^{(t)}$, denote $\ket{\mu} = \compV{U_{\syn}}(\compV{U_{\verify}'}\compV{U_{\syn}})^{j - 1}\ket{\psi_0}$. Let $U_i^{(1)}$ prepare the next query for $\rho_s^{(t)}$ and $U_i^{(2)}$ prepare the next query for the synthesized state. For every $i$, denote $U_{i, C}^{(1)} = U_i^{(1)}\compV{U_C} \cdots U_0^{(1)}$ and $U_{i, R}^{(2)} = U_i^{(2)}\compV{U_R}\cdots U_0^{(2)}$. Apply the above inequality, and we can get that
\begin{align*}
&\Tr(O(\compV{U_{\verify}'}\compV{U_{\syn}})^j\ketbra{\psi_0}{\psi_0} (\compV{U_{\syn}}^{\dagger}\compV{U_{\verify}'}^{\dagger})^j) - \Tr(O \compV{U_{\verify}}(\compV{U_{\verify}'}\compV{U_{\syn}})^j \ketbra{\psi_0}{\psi_0} (\compV{U_{\syn}}^{\dagger}\compV{U_{\verify}'}^{\dagger})^j\compV{U_{\verify}}^{\dagger})\\
= &\Tr(O U_{q, R}^{(2)} \ketbra{\mu}{\mu} {U_{q, R}^{(2)}}^{\dagger}) - \Tr(O U_{q, C}^{(1)} U_{q, R}^{(2)} \ketbra{\mu}{\mu} {U_{q, R}^{(2)}}^{\dagger} {U_{q, C}^{(1)}}^{\dagger})\\
= &\Tr(O U_q^{(2)}\compV{U_R} U_{q - 1, R}^{(2)} \ketbra{\mu}{\mu} {U_{q - 1, R}^{(2)}}^{\dagger}\compV{U_R}^{\dagger} {U_q^{(2)}}^{\dagger})\\ 
&- \Tr(O U_q^{(1)}U_q^{(2)}\compV{U_C} \compV{U_R} U_{q - 1, C}^{(1)} U_{q - 1, R}^{(2)} \ketbra{\mu}{\mu} {U_{q - 1, R}^{(2)}}^{\dagger} {U_{q - 1, C}^{(1)}}^{\dagger}\compV{U_R}^{\dagger}  \compV{U_C}^{\dagger} {U_q^{(2)}}^{\dagger}{U_q^{(1)}}^{\dagger})\\
= &\Tr(O\compV{U_R} U_{q - 1, R}^{(2)} \ketbra{\mu}{\mu} {U_{q - 1, R}^{(2)}}^{\dagger}\compV{U_R}^{\dagger}) - \Tr(O\compV{U_C} \compV{U_R} U_{q - 1, C}^{(1)} U_{q - 1, R}^{(2)} \ketbra{\mu}{\mu} {U_{q - 1, R}^{(2)}}^{\dagger} {U_{q - 1, C}^{(1)}}^{\dagger}\compV{U_R}^{\dagger} \compV{U_C}^{\dagger})\\
\le &\Tr(O U_{q - 1, R}^{(2)} \ketbra{\mu}{\mu} {U_{q - 1, R}^{(2)}}^{\dagger}) - \Tr(O\compV{U_C} U_{q - 1, C}^{(1)} U_{q - 1, R}^{(2)} \ketbra{\mu}{\mu} {U_{q - 1, R}^{(2)}}^{\dagger} {U_{q - 1, C}^{(1)}}^{\dagger} \compV{U_C}^{\dagger})\\
= &\Tr(O U_{q - 1, R}^{(2)} \ketbra{\mu}{\mu} {U_{q - 1, R}^{(2)}}^{\dagger}) - \Tr(OU_{q, C}^{(1)} U_{q - 1, R}^{(2)} \ketbra{\mu}{\mu} {U_{q - 1, R}^{(2)}}^{\dagger} {U_{q, C}^{(1)}}^{\dagger})
\end{align*}
That is, we can delete a query for the synthesized state without decreasing the number of bad queries made during $\compV{U_{\verify}}$ on $\rho_s^{(t)}$. Repetitively removing the queries for the synthesized state, we can get that
\begin{align*}
&\Tr(O(\compV{U_{\verify}'}\compV{U_{\syn}})^j\ketbra{\psi_0}{\psi_0} (\compV{U_{\syn}}^{\dagger}\compV{U_{\verify}'}^{\dagger})^j) - \Tr(O \compV{U_{\verify}}(\compV{U_{\verify}'}\compV{U_{\syn}})^j \ketbra{\psi_0}{\psi_0} (\compV{U_{\syn}}^{\dagger}\compV{U_{\verify}'}^{\dagger})^j\compV{U_{\verify}}^{\dagger})\\
\le & \Tr(O U_{0, R}^{(2)} \ketbra{\mu}{\mu} {U_{0, R}^{(2)}}^{\dagger}) - \Tr(OU_{q, C}^{(1)} U_{0, R}^{(2)} \ketbra{\mu}{\mu} {U_{0, R}^{(2)}}^{\dagger} {U_{q, C}^{(1)}}^{\dagger})\\
= & \Tr(O \ketbra{\mu}{\mu} ) - \Tr(O \compV{U_{\verify}} \ketbra{\mu}{\mu} {\compV{U_{\verify}}}^{\dagger})
\end{align*}
where we use the fact that $U_i^{(1)}$ commutes with $U_j^{(2)}$ and $\compV{U_C}$, $U_i^{(2)}$ commutes with $\compV{U_R}$, and both $U_i^{(1)}$ and $U_j^{(2)}$ commute with $O$.

We successfully removed one $\compV{U_{\verify}'}$ on the synthesized state. We can do it until we remove all of $\compV{U_{\verify}'}$ on the synthesized state. That is,
\begin{align*}
&\Tr(O(\compV{U_{\verify}'}\compV{U_{\syn}})^j\ketbra{\psi_0}{\psi_0} (\compV{U_{\syn}}^{\dagger}\compV{U_{\verify}'}^{\dagger})^j) - \Tr(O \compV{U_{\verify}}(\compV{U_{\verify}'}\compV{U_{\syn}})^j \ketbra{\psi_0}{\psi_0} (\compV{U_{\syn}}^{\dagger}\compV{U_{\verify}'}^{\dagger})^j\compV{U_{\verify}}^{\dagger})\\
\le &\Tr(O\compV{U_{\syn}}(\compV{U_{\verify}'}\compV{U_{\syn}})^{j - 1}\ketbra{\psi_0}{\psi_0} (\compV{U_{\syn}}^{\dagger}\compV{U_{\verify}'}^{\dagger})^{j - 1}{\compV{U_{\syn}}}^{\dagger})\\
& - \Tr(O \compV{U_{\verify}} \compV{U_{\syn}}(\compV{U_{\verify}'}\compV{U_{\syn}})^{j - 1}\ketbra{\psi_0}{\psi_0} (\compV{U_{\syn}}^{\dagger}\compV{U_{\verify}'}^{\dagger})^{j - 1}{\compV{U_{\syn}}}^{\dagger} \compV{U_{\verify}}^{\dagger})\\
=&\Tr(O(\compV{U_{\verify}'}\compV{U_{\syn}})^{j - 1}\ketbra{\psi_0}{\psi_0} (\compV{U_{\syn}}^{\dagger}\compV{U_{\verify}'}^{\dagger})^{j - 1}) - \Tr(O \compV{U_{\verify}}(\compV{U_{\verify}'}\compV{U_{\syn}})^{j - 1}\ketbra{\psi_0}{\psi_0} (\compV{U_{\syn}}^{\dagger}\compV{U_{\verify}'}^{\dagger})^{j - 1}\compV{U_{\verify}}^{\dagger})\\
\le & \cdots \\
\le & \Tr(O\ketbra{\psi_0}{\psi_0}) - \Tr(O \compV{U_{\verify}}\ketbra{\psi_0}{\psi_0}\compV{U_{\verify}}^{\dagger})
\end{align*}
where we use the fact that $\compV{U_{\syn}}$ uses the database in $D_{\adversary}$ and $\compV{U_{\verify}}$ does not record query for $\adversary$. Thus $\compV{U_{\syn}}$ and $\compV{U_{\verify}}$ commute. $\compV{U_{\syn}}$ does not act on $\reg{F}$ and $\reg{D}_{\ro}$, so it commutes with $O$.

Finally, we will show a randomized number of iterations of the update phase can not increase the number of bad queries made during $\compV{U_{\verify}}$. As both $\compV{U_{\verify}}$ and $O$ do not act on register $\reg{J}$, tracing out $\reg{J}$ after $\compV{U_{\update}}$ does not change the quantity. Recall that $\compV{U_{\update}}\ket{\psi} = \frac{1}{\sqrt{N(n)}}\sum_{j = 0}^{N(n) - 1}(\compV{U_{\verify}'}\compV{U_{\syn}})^j(\ket{\psi_0})\ket{j}_{\reg{J}}$. Thus
\begin{align*}
&\Tr(O\compV{U_{\update}} \ketbra{\psi}{\psi} \compV{U_{\update}}^{\dagger}) - \Tr(O \compV{U_{\verify}}\compV{U_{\update}} \ketbra{\psi}{\psi} \compV{U_{\update}}^{\dagger}\compV{U_{\verify}}^{\dagger})\\
= &\Tr(O\Tr_{\reg{J}}(\compV{U_{\update}} \ketbra{\psi}{\psi} \compV{U_{\update}}^{\dagger})) - \Tr(O \compV{U_{\verify}}\Tr_{\reg{J}}(\compV{U_{\update}} \ketbra{\psi}{\psi}\compV{U_{\update}}^{\dagger})\compV{U_{\verify}}^{\dagger})\\
= &\frac{1}{N(n)} \sum_{j = 0}^{N(n) - 1}\left(\Tr(O(\compV{U_{\verify}'}\compV{U_{\syn}})^j\ketbra{\psi_0}{\psi_0} (\compV{U_{\syn}}^{\dagger}\compV{U_{\verify}'}^{\dagger})^j) - \Tr(O\compV{U_{\verify}}(\compV{U_{\verify}'}\compV{U_{\syn}})^j\ketbra{\psi_0}{\psi_0} (\compV{U_{\syn}}^{\dagger}\compV{U_{\verify}'}^{\dagger})^j{\compV{U_{\verify}}^{\dagger}})\right)\\
\le & \Tr(O\ketbra{\psi_0}{\psi_0}) - \Tr(O \compV{U_{\verify}} \ketbra{\psi_0}{\psi_0} \compV{U_{\verify}}^{\dagger})\\
= & \Tr(O\ketbra{\psi}{\psi}) - \Tr(O \compV{U_{\verify}} \ketbra{\psi}{\psi} \compV{U_{\verify}}^{\dagger})
\end{align*}
\end{proof}
\section{Notation Tables}\label{NotationTable}

\begin{table}[H]
\caption{Parameters in \Cref{subsec:QuantumAnalysis}}
\begin{tabularx}{\textwidth}{ 
  | >{}p {0.11\textwidth}
  | >{\arraybackslash}X | }
\toprule
  $q(n), q'(n)$ & $n$ is the security number. $\verify$ makes $q(n)$ classical queries. $\keygen$ and $\mint$ make $q'(n)$ quantum queries in total. We sometimes omit $n$. \\
  $T(n), N(n)$ & two polynomials that will decide the maximal possible number of iterations we run in the test phase and the update phase.\\
\bottomrule
\end{tabularx}
\end{table}

\begin{table}[H]
\caption{Registers in \Cref{subsec:QuantumAnalysis}}
\begin{tabularx}{\textwidth}{ 
  | >{}p {0.11\textwidth}
  | >{\arraybackslash}X | }
\toprule
  $\reg{Pk}$ & the register storing the public key\\
  $\reg{S}$ & the register storing the serial number\\
  $\reg{M}$ & the register storing the alleged money state\\
  $\reg{D}_{\adversary}$ & the register storing the classical queries database maintained by $\adversary$\\
  $\reg{D}_{\ro}$ & the register storing the classical queries we made so far along with their answers (maintained by the oracle)\\
  $\reg{F}$ & the register storing the oracle if in the decompressed view or the register storing $D_F$ (the database for non-$\ket{\hat 0}$ elements) if in compressed view\\
  $\reg{G}$, $\reg{H}$ & the register storing unimportant things for the analysis. For example, it may include the secret key, working space for $\keygen$ and $\mint$, and some unused fresh ancillas.\\
  $\reg{T}$ & the register storing the number of iterations for test phase. \\
  $\reg{J}$ & the register storing the number of iterations for update phase.\\
  $\reg{Q}, \reg{Q}_1, \reg{Q}_2$ & the register storing the next query position.\\
  $\reg{A}, \reg{A}_1, \reg{A}_2$ & the register to store the next query answer.\\
\bottomrule
\end{tabularx}
\end{table}

\begin{table}[H]
\caption{States in \Cref{subsec:QuantumAnalysis}}
\begin{tabularx}{\textwidth}{ 
  | >{}p {0.12\textwidth}
  | >{\arraybackslash}X | }
\toprule
  $\ket{\phi}$ & A state in the following form (i.e. it's in the compressed view and the contents in $\reg{D}_{\ro}$ and $\reg{D}_{\adversary}$ are the same),
  $$\ket{\phi} = \sum_{\substack{pk, s, m, D, D_F, g \\ \text{s.t. }D \cap D_F = \emptyset}}\alpha_{pk, s, m, D, D_F, g}\ket{pk}_{\reg{Pk}}\ket{s}_{\reg{S}}\ket{m}_{\reg{M}}\ket{D}_{\reg{D}_{\adversary}}\ket{D_F}_{\reg{F}}\ket{D}_{\reg{D}_{\ro}}\ket{g}_{\reg{G}}.$$ In \Cref{claim:QuantumSimOnRho} and \Cref{claim:quantumSim}, we will instantiate it with the pure state we obtain by applying the unitaries $\compV{U_{\keygen}}$, $\compV{U_{\mint}}$ and $\compV{U_{\adversary}}$ to the state $\ket{1^n}\ket{\emptyset}_{\reg{D}_{\adversary}}\ket{\emptyset}_{\reg{F}}\ket{\emptyset}_{\reg{D}_{\ro}}$ along with enough ancillas. \\
  $\ket{\phi_j}$ &$\ket{\phi_j} = \compV{U_C}U_{j - 1}\cdots \compV{U_C} U_0\ket{\phi}$. It's the state when we run $\verify^{\ro, \ket{\pspace}}$ on $\ket{\phi}$ in the compressed view until we have answered the $j^{th}$ query. \\
  $\ket{\psi}$ & We abuse the notation. $\ket{\psi}$ is the pure state we obtain by running the first step in the compressed view in the case ${\verify^{\ro, \ket{\pspace}}(pk, (s, \cdot))}$ on ${\rho_s^{(t)}}$ until the end of the \textbf{test} phase (in \Cref{claim:QuantumSimOnRho}) or the \textbf{update} phase (in \Cref{claim:quantumSim}) of $\adversary$.\\
  $\ket{\varphi}$ & an arbitary state ready for two ``parallel'' classical queries on different registers. \\
  $\ket{\psi_0}$ & the pure state we obtain by running the first step in the compressed view in the case ${\verify^{\ro, \ket{\pspace}}(pk, (s, \cdot))}$ on ${\rho_s^{(t)}}$ until we finish the \textbf{test} phase and then truncating $\reg{J}$.\\
  $\ket{\psi^{(t)}}, \ket{\psi^{(j)}}$ & We abuse the notation. In \Cref{claim:QuantumSimOnRho}, $\ket{\psi^{(t)}}$ is the state after we run $t$ iterations in the test phase but not run the update phase in the compressed view. In \Cref{claim:quantumSim}, $\ket{\psi^{(j)}}$ is the state we obtain after we run a randomized number of iterations in the test phase and $j$ iterations in the update phase in the compressed view. \\
\bottomrule
\end{tabularx}
\end{table}

\begin{table}[H]
\caption{Observable and Unitaries in \Cref{subsec:QuantumAnalysis}}
\begin{tabularx}{\textwidth}{ 
  | >{}p {0.13\textwidth}
  | >{\arraybackslash}X | }
\toprule
  $O$ & the observable corresponding to the number of pairs in $\reg{F}$ (i.e. half of the nonempty length in $\reg{F}$). Formally, $O = \sum_{D_F}\length{D_F}\ketbra{D_F}{D_F}_{\reg{F}}$ where $\length{D_F}$ is the number of pairs in $D_F$. It will only be applied to those states in compressed view.\\
  $\decomp$ & the unitary defined in \Cref{subsec:compress}. It acts on $\reg{D}_{\ro}$ and $\reg{F}$ and decompresses two databases to one database and the oracle.\\
  $\comp$ & the inverse of the unitary $\decomp$. \\
  $\compV{U}$ & $\compV{U} = \comp \decompV{U} \decomp$. It's the compressed view version of $U$ for a general unitary $U$. See the figure in \Cref{figure:compress} for more details.\\
  $\decompV{U_Q}$ & the unitary corresponding to answering a \textbf{quantum} query with the real oracle.\\
  $\decompV{U_C}$ & the unitary corresponding to answering a \textbf{classical} query with the real oracle.\\
  $\decompV{U_R}$ & the same as $\decompV{U_C}$ except that it records the query-answer for $\adversary$ at the same time.\\
  $\decompV{U_D}$ & the unitary corresponding to answering a classical query with the database in register $\reg{D}_{\adversary}$ while recording the query-answer to $\reg{D}_{\adversary}$ for later use.\\
  $\decompV{U_D'}$ & the unitary corresponding to answering a classical query with the database in register $\reg{D}_{\ro}$ while recording the query-answer to $\reg{D}_{\ro}$ for later use.\\
  $U_i$ & When $0 \le i \le q - 1$, it's the unitary corresponding to the preparation of the ${(i + 1)}^{th}$ query of $\verify$. When $i = q$, it's the unitary after the final query of $\verify$.\\
  $U_{\keygen}, U_{\mint}$ & the unitary $U_{\keygen, n}, U_{\mint, n}$ defined in \Cref{subsec:purifiedView}.\\
  $U_{\verify}, U_{\syn}$ & the unitary $U_{\verify, n}, U_{\syn, n}$ defined in \Cref{subsec:purifiedView}, $U_{\verify} = U_q U_CU_{q - 1}\cdots U_C U_0$.\\
  $U_{\verify}'$ & the unitary corresponding to doing the verification while recording the query-answer pair for $\adversary$, $U_{\verify}' = U_q U_RU_{q - 1}\cdots U_R U_0$.\\
  $U_{\simulation}$ & the unitary corresponding to running $\verify^{D, \ket{\pspace}}$ where $D$ is the content in $\reg{D}_{\adversary}$, $U_{\simulation} = U_q U_DU_{q - 1}\cdots U_D U_0$.\\
  $U_\update$ & the unitary defined in \Cref{subsec:purifiedView} that describes our update phase. Formally, it's the unitary $\sum_{j = 0}^{N(n) - 1}(U_{\verify}U_{\syn})^j\ketbra{j}{j}_{\reg{J}}.$\\
\bottomrule
\end{tabularx}
\end{table}

\end{document}